\tikzset{
  symbol/.style={
    draw=none,
    every to/.append style={
      edge node={node [sloped, allow upside down, auto=false]{$#1$}}}
  }
}
\newtheorem{thm}{Theorem}[section]
\newtheorem{prop}[thm]{Proposition}
\newtheorem{lem}[thm]{Lemma}
\newtheorem{cor}[thm]{Corollary}
\newtheorem{thmalpha}{Theorem}
\newcommand{\googlebooks}[1]{(preview at \href{https://books.google.com/books?id=#1}{google books})}
\newcommand{\numdam}[1]{}
\theoremstyle{remark}
\theoremstyle{definition}
\newtheorem{ex}[thm]{Example}
\newtheorem{defn}[thm]{Definition}
\newtheorem{quest}[thm]{Question}
\newtheorem{remark}[thm]{Remark}
\tikzset{coupon/.style={rectangle,rounded corners=1.5pt,draw,fill=white,inner sep=1.5,minimum size=12pt}}
\title{An operator algebraic approach to\\ fusion category symmetry on the lattice}
\author{David E. Evans, Corey Jones}
\address{David E Evans, School of Mathematics, Cardiff University, Senghennydd Road, Cardiff CF24 4AG, Wales, United Kingdom}
\address{Corey Jones, Department of Mathematics,
North Carolina State University, Raleigh, NC 27695, USA}
\begin{document}

\begin{abstract}
We propose a framework for fusion category symmetry on the (1+1)D lattice in the infinite-volume limit by giving a formal interpretation of SymTFT decompositions. Our approach is based on axiomatizing physical boundary subalgebra of quasi-local observables, and applying ideas from algebraic quantum field theory to derive the expected categorical structures. We show that given a physical boundary subalgebra $B$ of a quasi-local algebra $A$, there is a canonical fusion category $\mathcal{C}$ that acts on $A$ by bimodules and whose fusion ring acts by locality preserving quantum channels on the quasi-local algebra such that $B$ is recovered as the fixed point operators. We show that a fusion category can be realized as symmetries on a tensor product quasi-local algebra if and only if all of its objects have integer dimensions, and that it admits an ``on-site" action on a tensor product spin chain if and only if it admits a fiber functor. We give a formal definition of a topological symmetric state, and prove two anomaly enforced gaplessness theorems, one for internal categorical symmetries and one for anomalous duality channels. Using the first, we show that for any fusion category $\mathcal{C}$ with no fiber functor there always exist gapless pure symmetric states on an anyon chain. 
\end{abstract}

\maketitle

\tableofcontents

\section{Introduction}

Global symmetries play a fundamental role in understanding the structure of quantum many-body systems and quantum field theories. Recently, there has been significant interest in extending the traditional conception of global group symmetry to a higher categorical, non-invertible context (\cite{GKSW, PhysRevResearch.2.043086,SCHAFERNAMEKI20241,shao2024whatsundonetasilectures} for overview and references). This has given rise to a rich interplay between quantum field theory and higher categorical algebra. In (1+1)D, (finite) categorical symmetries of a theory are characterized by a unitary fusion category \cite{TW24, TW242}. We contribute to this story by proposing a mathematical formulation of fusion category symmetry on the (1+1)D lattice in infinite volume via an operator algebraic interpretation of the \textit{SymTFT} picture of categorical symmetry.

\textit{SymTFT} \cite{freed2024topological, PhysRevResearch.2.033417, GK21,ABGIH}, also called \textit{topological holography} \cite{ PhysRevB.108.075105, 10.21468/SciPostPhysCore.6.4.066,PhysRevB.111.115161,huang2025topologicalholographyquantumcriticality, MR3763324, PhysRevResearch.2.043044,KLWZZ,PhysRevResearch.1.033054,LZI,LZII, LZIII}, see in particular Section II \cite{chatterjee2024emergentgeneralizedsymmetrymaximal} for a detailed overview), has emerged as a powerful framework for studying categorical symmetries in all dimensions. SymTFT imports mathematical ideas and techniques from topological quantum field theories (TQFT) to characterize categorical symmetries of arbitrary quantum field theories. In this picture, a quantum field theory $\mathcal{F}$ with categorical symmetry $\mathcal{C}$ has a decomposition of the system into a topological boundary $\mathcal{B}_{\text{top}}$ and physical boundary $\mathcal{B}_{\text{phys}}$, coupled via a bulk topological theory $\mathcal{T}$ in one higher dimension, called the SymTFT. 

$$\begin{tikzpicture}
\draw [fill=gray!30, gray!30] (4,0.5) rectangle (6,1.5);
\draw node at (1,1.5) {$\mathcal{F}$};
\draw (0,1)--(2,1);
\draw node at (5,1.8) {$\mathcal{B}_{\text{top}}$};
\draw (4,0.5)--(6,0.5);
\draw node at (3,1) {=};
\draw (4,1.5)--(6,1.5);
\draw node at (5,0.18) {$\mathcal{B}_{\text{phys}}$};
\draw node at (5,1) {$\mathcal{T}$};
\end{tikzpicture}$$

\noindent The topological boundary $\mathcal{B}_{\text{top}}$ should realize $\mathcal{C}$ via its boundary topological defects, and this implements the symmetry $\mathcal{C}$ on the whole theory. The picture above is often referred to as the ``sandwich" picture \cite{KONG201762,freed2024topological}. Not only does this perspective make higher categorical symmetries of arbitrary theories tractable from a mathematical standpoint, but it provides a deeper understanding of their meaning and significance.

Typically this framework is applied to continuum theories. Recent work has demonstrated how to represent SymTFT ideas on the lattice concretely in examples \cite{bhardwaj2025latticemodelsphasestransitions, PhysRevB.111.054432}. Ideally, however, we would like our formulation of categorical symmetry on the lattice to give ``in-situ" access to the algebraic structures underlying a SymTFT decomposition in a model independent way. In particular, we would like to directly see the bulk topological order (described algebraically by a unitary modular tensor category), the gapped boundary (described algebraically by a Lagrangian algebra in this category), and the symmetry category of defects (described algebraically by the fusion category of modules over the algebras), all internally to the formal mathematical structure of the infinite volume limit system. 

We propose a definition of a SymTFT decomposition in terms of the \textit{quasi-local algebra} of a 1+1D lattice model. Recall the standard mathematical formulation of a spin chain is a pair $(A,H)$, where $A$ is the quasi-local algebra of observables in the infinite volume limit over $\mathbbm{Z}$ (traditionally this is the infinite tensor product of local matrix algebras, but more generally can be an ``anyon" chain), and $H$ is the Hamiltonian operator (formally represented by an unbounded derivation on the quasi-local algebra) \cite{MR1441540}. If we view this as a kind of discrete-space quantum field theory, we can try to make the above SymTFT description mathematically precise in terms of the operator algebra $A$ and its Hamiltonian $H$. Our starting point is the observation that if we had a SymTFT decomposition, the quasi-local operators localized near the physical boundary should form a subalgebra of $A$, and as was argued in \cite{PhysRevB.107.155136}, this should ``remember" the emergent bulk topological order. We propose formal axioms for a subalgebra $B\subseteq A$ to be the local observables of the physical boundary in a SymTFT decomposition, and we reconstruct from physical boundary subalgebras an explicit fusion categorical symmetry.

The basic idea uses the theory of $\text{DHR}$-bimodules of an abstract quasi-local algebra introduced in \cite{jones2024dhr} (see also \cite{NSz97}). This is a version of DHR superselection theory from algebraic quantum field theory \cite{MR0297259,DHR2,GFr93}, but is optimized for use in spin systems, where the local algebras are finite-dimensional, and without a fixed choice of Hilbert space representation or vacuum state. The DHR category of an abstract quasi-local algebra $B$ is a certain unitary braided tensor category of $B$-$B$ bimodules, and recently under fairly general conditions it has been shown that in 1+1D, $\text{DHR}(B)$ is a non-chiral unitary modular tensor category \cite{hataishi2025structuredhrbimodulesabstract}. Furthermore, it was argued in \cite{jones2025localtopologicalorderboundary} that the category of DHR bimodules of the physical boundary algebra of locally topologically ordered spin systems should agree with the bulk topological order (see also Section \ref{subsec:physpic}).

This leads us to propose for a subalgebra $B\subseteq A$, the bulk TQFT $\mathcal{T}$ in the SymTFT decomposition should be identified with the 2+1D TQFT associated with the unitary modular tensor category $\text{DHR}(B)$. In our set-up, $A$ is not only an algebra but an \textit{algebra object} internal to $\text{Bim}(B)$, so as part of our axiomatization we ask that it is in fact a Lagrangian algebra in $\text{DHR}(B)$, thus corresponding to the gapped boundary. This leads us to a formal definition of a \text{physical boundary subalgebra} $B\subseteq A$. Modulo some minor technicalities, we can summarize the definition here.

\begin{defn} If $A$ is a quasi-local algebra over $\mathbbm{Z}$, a subalgebra $B\subseteq A$ is called a \textit{physical boundary subalgebra} if $A\in \text{DHR}(B)$ is a Lagrangian algebra object.
\end{defn}

A large inventory of examples arise from matrix product operator (MPO) symmetry \cite{MR3543452,MR3719546,aasen2020topologicaldefectslatticedualities,PRXQuantum.4.020357,PRXQuantum.5.010338, In22, Ho23}. In particular, this definition is satisfied by the quasi-local algebra $A$ built from local operators on a spin or anyon chain, with the subalgebra $B$ generated by local operators invariant under a symmetry (for the standard MPO picture, (\cite{Kaw23,jones2025quantumcellularautomatacategorical}, see Examples \ref{ex:FusionSpin} and \ref{ex:fusionspinsubalg}). We point out that our subalgebra approach is agnostic toward the actual algebraic structure of the symmetry operators, i.e. whether we are using MPOs, Hopf algebras, weak Hopf algebras, etc.

Given this set-up, the general philosophy of SymTFT suggests we should now have access to a symmetry fusion category $\mathcal{C}$, for which $\text{DHR}(B)\cong \mathcal{Z}(\mathcal{C})$. Inspired by $\alpha$-induction for conformal nets \cite{MR1332979, MR1617550, MR1729094, MR1777347,
MR1785458, MR4642115}, for a physical boundary subalgebra $B\subseteq A$, we define a fusion category $\mathcal{C}$ (formally defined as $A$ modules internal to $\text{DHR}(B)$) as a particular full subcategory of bimodules of the quasi-local algebra $A$ that satisfy a solitonic type of localization as in \cite{MR1671970}. We also obtain an interpretation of the objects $\mathcal{C}$ as topological lattice defects relative to our symmetry subalgebra (c.f. \cite{10.21468/SciPostPhys.16.6.154,Ho23}). We then show that the fusion ring of $\mathcal{C}$ acts by \textit{quantum channels} on $A$ (formulated in the Heisenberg picture as unital completely positive maps form $A$ to $A$), giving an action on states in the infinite volume limit. 

We emphasize that we are starting only with a subalgebra $B\subseteq A$ satisfying some conditions, and then we \textit{derive} the symmetry category $\mathcal{C}$ and its realization as defects on our system. The Theorem below is a summary of the results of Section \ref{sec:localsubalg}.

\begin{thmalpha}{Symmetry from SymTFT.} Let $B\subseteq A$ be a physical boundary subalgebra of a quasi-local algebra $A$ over $\mathbbm{Z}$.
  \begin{enumerate}
    \item 
    There is a canonically associated fusion category $\mathcal{C}$ called the \textit{symmetry category} with $\mathcal{Z}(\mathcal{C})\cong \text{DHR}(B)$.
    \item 
    There is a canonical embedding of unitary tensor categories $\mathcal{C}\hookrightarrow \text{Bim}(A)$, as the bimodules with solitonic type localization.
    \item 
    There is an action of the fusion ring of $\mathcal{C}$ on $A$ by unital, completely positive maps with bounded spread such that the fixed point subalgebra is precisely $B$.
\end{enumerate}

We say $\mathcal{C}$ is the symmetry fusion category of the physical boundary subalgebra.

\end{thmalpha}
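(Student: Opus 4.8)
The plan is to handle the three clauses in order, reducing (1) to the algebraic theory of Lagrangian algebras in a unitary modular tensor category, constructing (2) by an $\alpha$-induction functor built from the braiding of $\text{DHR}(B)$, and producing (3) by passing from the induced $A$-$A$ bimodules to quantum channels via their standard solutions to the conjugate equations. For (1), I would start from the two facts supplied by the hypotheses: $\text{DHR}(B)$ is a unitary modular tensor category, and $A$ is a Lagrangian algebra object in it (the definition of physical boundary subalgebra). Setting $\mathcal{C} := \text{Mod}_{\text{DHR}(B)}(A)$, the category of $A$-module objects internal to $\text{DHR}(B)$, I would then invoke the standard correspondence between Lagrangian algebras and Drinfeld centers: the module category over a connected commutative étale (Lagrangian) algebra in a modular category is a fusion category whose center recovers the ambient modular category. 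Since $A$ is in fact a commutative Q-system in the $C^{*}$-category $\text{DHR}(B)$, the module category inherits a unitary fusion structure and the center equivalence $\mathcal{Z}(\mathcal{C})\cong\text{DHR}(B)$ is an equivalence of unitary braided categories; the only thing to verify beyond citation is that the unitary refinements of these statements apply to the $C^{*}$-categorical setting at hand.

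For (2), recall that an object of $\mathcal{C}$ is a $B$-$B$ bimodule $X\in\text{DHR}(B)$ equipped with a right $A$-action $X\otimes_B A\to X$. I would promote $X$ to an honest $A$-$A$ bimodule $\tilde{X}\in\text{Bim}(A)$ by defining the left $A$-action through the half-braiding $\beta_{A,X}$ of $X$ as an object of $\text{DHR}(B)$: the composite $A\otimes_B X\xrightarrow{\beta_{A,X}} X\otimes_B A\to X$ is a left $A$-action, and the half-braiding axioms guarantee that it is associative and commutes with the right action, so $\tilde{X}$ is a bimodule. This is precisely the $\alpha$-induction construction, and its monoidality with respect to $\otimes_A$ on both sides is the standard fact that $\alpha$-induction is a tensor functor. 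Faithfulness is automatic, since the functor is the identity on underlying maps; fullness reduces to the observation that $\text{DHR}(B)$ is a full subcategory of $\text{Bim}(B)$, so any $A$-$A$ bimodule map $\tilde{X}\to\tilde{Y}$ restricts to a $B$-$B$ bimodule map, hence lies in $\text{DHR}(B)$, and commutes with the right $A$-action by hypothesis, so it is already a morphism in $\mathcal{C}$.

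For (3), I would attach to each dualizable bimodule $\tilde{X}$ a unital completely positive map $\Phi_X\colon A\to A$ built from a standard solution to the conjugate equations for $\tilde{X}$ in the rigid $C^{*}$-tensor category $\text{Bim}(A)$, normalized by $d_X$ so that $\Phi_X(1)=1$; additivity over direct sums and composition according to the fusion multiplicities then yields a representation of the fusion ring of $\mathcal{C}$ by UCP maps. Bounded spread follows from localizing each DHR representative of $X$ in a bounded interval, so that $\Phi_X$ displaces the support of a local observable by at most the localization length. For the fixed-point claim, I would identify the averaged channel $E=(\dim\mathcal{C})^{-1}\sum_X d_X\,\Phi_X$ with the conditional expectation $A\to B$ furnished by the counit of the Lagrangian Q-system; since each $\Phi_X$ restricts to the identity on $B$ (boundary symmetric operators being transparent to the defects), one obtains $B\subseteq\bigcap_X\mathrm{Fix}(\Phi_X)\subseteq\mathrm{im}(E)=B$, giving equality.

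The conceptual content of (1) and (2) is essentially algebra once the unitary Lagrangian-algebra dictionary and the fullness of $\text{DHR}(B)\subseteq\text{Bim}(B)$ are in place, so I expect the real work to lie in (3), in two places. First, translating the categorical localization of DHR bimodules into a genuine bounded-spread statement for the operators $\Phi_X$ on the infinite chain, which requires choosing localized representatives compatibly and controlling the relative tensor products $\otimes_B$ over the quasi-local algebra. Second, and most delicately, the reverse inclusion $\bigcap_X\mathrm{Fix}(\Phi_X)\subseteq B$: this is exactly where the Lagrangian (maximality) hypothesis, rather than mere étale-ness, must be used, via the triviality of the category of local $A$-modules, to rule out operators that are transparent to every defect yet lie outside $B$.
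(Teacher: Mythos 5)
Your handling of (1) and (2) is essentially the paper's: the symmetry category is $\mathcal{C}=\text{DHR}(B)_{A}$ with the $\beta_{+}$ ($\alpha$-induction) tensor structure, the left $A$-action on $\tilde{X}$ is $m_{X}\circ\sigma_{A,X}$ using the braiding of $\text{DHR}(B)$, and the fully faithful embedding into bimodules over $A$ comes from Q-system completeness; the paper folds the center equivalence $\widetilde{\alpha}:\text{DHR}(B)\rightarrow\mathcal{Z}(\text{DHR}_{+}(A|B))$ into the very definition of physical boundary subalgebra, so your appeal to the Lagrangian-algebra/Drinfeld-center dictionary is the same move in different packaging.

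The genuine gap is in (3). A channel built from a standard solution to the conjugate equations for $\tilde{X}\in\text{Bim}(A)$ is not canonical: it depends on the representative within its unitary isomorphism class (replacing a representative by a unitary conjugate conjugates the channel), and for a representative localized in an interval $I$ the resulting left inverse fixes only operators commuting with the localized basis — roughly $B_{I^{c}}$ and $A_{<I}$ — but nothing forces it to fix $B_{I}$ pointwise. So your parenthetical justification that ``each $\Phi_{X}$ restricts to the identity on $B$ (boundary symmetric operators being transparent to the defects)'' is precisely the unproven crux; and without $B$-$B$ bimodularity of $\Phi_{X}$, neither your bounded-spread claim nor the exact (rather than up-to-conjugation) fusion relations $\Phi_{X}\circ\Phi_{Y}=\sum_{Z}\frac{d_{Z}}{d_{X}d_{Y}}N^{Z}_{XY}\Phi_{Z}$ follow. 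The paper sidesteps all of this with a different mechanism: it works with the commutative C*-algebra $(\text{End}_{\text{DHR}(B)}(A),\ast)$, $f\ast g:=m\circ(f\otimes g)\circ m^{\dagger}$, whose positive elements are exactly the $B$-$B$ bimodular cp maps on $A$; since $A$ is Lagrangian, its minimal projections $e_{X}$ are indexed by $\text{Irr}(\mathcal{C})$, and the normalizations $\lambda_{X}:=d_{X}^{-2}e_{X}$ are canonical, unital, $B$-bimodular, and satisfy the normalized fusion rules on the nose. Bounded spread is then automatic from weak relative Haag duality alone (for $b\in B_{I^{c}}$ and $a\in A_{I}$, $b\Psi(a)=\Psi(ba)=\Psi(ab)=\Psi(a)b$ forces $\Psi(a)\in Z_{A}(B_{I^{c}})\subseteq A_{I^{+R}}$), and the fixed-point statement needs no appeal to triviality of local $A$-modules: the conditional expectation is the explicit convex combination $E=\sum_{X}\frac{d_{X}^{2}}{\text{Dim}(\mathcal{C})}\lambda_{X}$, so any operator fixed by every $\lambda_{X}$ is fixed by $E$ and hence lies in $B$, while $B$-bimodularity plus unitality gives the inclusion $B\subseteq A^{\mathcal{C}}$. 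If you want to salvage your route, the missing step is to prove that your conjugate-equation channels can be chosen $B$-bimodular and independent of localization — which is, in effect, reconstructing the paper's cp-multiplier algebra.
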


The above theorem also gives us an alternative perspective on our story that puts symmetry (rather than SymTFT) first. Instead of starting from the subalgebra $B\subseteq A$, we could start with an action of the fusion ring of $\mathcal{C}$ (thought of as a ``hypergroup") on $A$ by bounded spread unital completely positive (ucp) maps, and consider the fixed point algebra $B$. If $B\subseteq A$ is a physical boundary subalgebra (which we would expect in a SymTFT decomposition), then we can apply our theory to recover the SymTFT and $\mathcal{C}$. Indeed, given an MPO representation of the fusion category $\mathcal{C}$, after renormalization there is a natural way to obtain an action of the fusion ring of $\mathcal{C}$ by u.c.p maps on the quasi-local algebra of the \textit{constrained Hilbert space} defined by the unit MPO \cite{jones2025quantumcellularautomatacategorical}. Applying the chain of reasoning described above will precisely recover these MPO \textit{channels} (rather than the traced out MPO operators with periodic boundary conditions) from the symmetric subalgebra. 

The mathematical structures we described above are closely parallel to (and are directly inspired by) results in the operator algebraic approach to conformal field theory in 1+1 dimensions \cite{GFr93}. Despite obvious differences (particularly with regard to the symmetry covariance requirements), there are many structural similarities between our set-up on the lattice and conformal nets\footnote{We note the more recent work \cite{Ho23} also studies categorical symmetries on the lattice in close analogy with conformal field theory.}. For example, the idea that local extensions correspond to commutative Q-systems (see Section \ref{sec:Q-sys}) in the DHR category originates in \cite{MR1332979}, and the realization of non-invertible symmetries as locality preserving quantum channels (see \ref{sebsec:SymasChan}) first appeared in the conformal net setting \cite{MR3595480}. We also point to the related work \cite{ahmad2026facesnoninvertiblesymmetries}, which emphasizes the role of weak Hopf algebras, but also utilizes subfactor theory in the context of categorical symmetries of quantum field theories.


In this paper, our applications primarily focus on symmetric kinematics, namely the observables and states at a fixed time slice. The dynamical aspects of categorical symmetries are very rich, and we plan to pursue development of this theory in future work. In the next two subsections, we give an overview of our results in the paper.

\subsection{Symmetries of tensor product quasi-local algebras} 
In our framework, we make very few assumptions on the nature of the original quasi-local algebra $A$. While the usual choice for a spin system corresponds to the tensor product UHF quasi-local algebra $\otimes_{\mathbbm{Z}} M_{d}(\mathbbm{C})$ \cite{MR1441540}, other interesting examples include ``anyon chains", which are important for capturing the full scope of potential MPO symmetries. Anyon chains are locally embeddable as \textit{non-unital} subalgebras of tensor product quasi-local algebras, and symmetries in these contexts are sometimes called \textit{emergent symmetries}. However, the question of realizability of symmetries on a genuine tensor product quasi-local algebra is still interesting from a physical point of view, since tensor product Hilbert spaces naturally occur for physical spin systems and are the basic model utilized for states on quantum processors.

For many fusion categories, the only known examples of quasi-local algebras hosting these categories as symmetries are anyon chains, and it is not clear whether these categories admit a realization on tensor product quasi-local algebras. Any fusion category which admits a fiber functor can easily be realized on tensor product Hilbert spaces by utilizing an on-site action of a Hopf algebra. A fusion category is called \textit{anomalous} if it does not admit a fiber functor. There are some examples of anomalous fusion categories (in particular, $\text{Vec}(\mathbbm{Z}/2\mathbbm{Z},\omega)$ which nevertheless admit actions on tensor product quasi-local algebras (for example, see \cite{kapustin2024anomaloussymmetriesquantumspin, bols2025classificationlocalitypreservingsymmetries}) raising the question of precisely characterizing which fusion categories admit actions on tensor product Hilbert spaces. Recall that a fusion category is \textit{integral} if the quantum dimension of each object is an integer, We have the following theorem that settles this question, which is a combination of Proposition \ref{prop:obstruction} and Theorem \ref{thm:intrealiz}.

\begin{thmalpha}
    A fusion category $\mathcal{C}$ is realized as symmetries of a tensor product quasi-local algebra if and only if it is integral.
\end{thmalpha}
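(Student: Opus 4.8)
The plan is to prove the two implications separately: that realizability forces integrality, which is the content of Proposition \ref{prop:obstruction}, and that integrality forces realizability, which is the content of Theorem \ref{thm:intrealiz}. The first is a dimension count for bimodules over a tensor product chain; the second is an explicit construction from a quasi-Hopf presentation of $\mathcal{C}$.

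For necessity, suppose $\mathcal{C}$ is realized on a tensor product quasi-local algebra $A = \bigotimes_{\mathbbm{Z}} M_d(\mathbbm{C})$, so that by Theorem A we have a unitary tensor embedding $\mathcal{C}\hookrightarrow \text{Bim}(A)$ whose image consists of dualizable bimodules of bounded spread. A unitary tensor functor preserves categorical dimension, and in a unitary (hence pseudounitary) fusion category the categorical dimension of a simple object equals its Frobenius--Perron dimension; so it suffices to show that every bounded-spread dualizable bimodule $X$ over $A$ has $d(X)\in \mathbbm{Z}_{\geq 0}$. Fix an interval $I$ containing the spread of $X$ and factor $A = A_I\otimes A_{I^c}$ with $A_I\cong M_D(\mathbbm{C})$. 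The bounded-spread condition says precisely that the left and right $A$-actions on $X$ agree on the subalgebra $A_{I^c}$, which forces $X\cong Y\otimes \mathbbm{1}_{A_{I^c}}$ for some $A_I$--$A_I$ bimodule $Y$, where $\mathbbm{1}_{A_{I^c}}$ is the trivial bimodule over $A_{I^c}$. The induction functor $\text{Bim}(A_I)\to \text{Bim}(A)$, $Y\mapsto Y\otimes \mathbbm{1}_{A_{I^c}}$, is a dimension-preserving unitary tensor functor, and the category of $M_D(\mathbbm{C})$--$M_D(\mathbbm{C})$ bimodules is equivalent to $\text{Hilb}$ with its unique simple object of dimension $1$; hence $d(X)=d(Y)\in\mathbbm{Z}_{\geq 0}$ and $\mathcal{C}$ is integral. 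The one delicate point is verifying that the categorical dimension computed in $\text{Bim}(A)$ is genuinely detected on the local matrix algebra $A_I$, i.e.\ that this induction is dimension-preserving; this is exactly where the tensor product structure of $A$ (as opposed to an anyon chain) is essential.

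For sufficiency I would invoke the theorem of Etingof, Nikshych and Ostrik that an integral fusion category is equivalent to $\text{Rep}(H)$ for a finite-dimensional semisimple quasi-Hopf algebra $H$, and fix a unitary structure making this a unitary equivalence. Taking the on-site Hilbert space to be the underlying space of $H$ and $A = \bigotimes_{\mathbbm{Z}} \text{End}(H)$ --- a genuine tensor product algebra, since $H$ is an honest associative algebra --- I would then realize $\mathcal{C}$ by the quasi-Hopf matrix product operator construction, in which the associator $\Phi\in H^{\otimes 3}$ supplies the tensor data making the operators well-defined bounded-spread defect bimodules and their traced-out channels a fusion-ring action by ucp maps. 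As in Examples \ref{ex:FusionSpin} and \ref{ex:fusionspinsubalg}, the fixed-point algebra $B$ of these channels should then be a physical boundary subalgebra whose symmetry category is $\mathcal{C}$. The main obstacle is precisely this sufficiency direction: one must check that the quasi-Hopf MPO data really produces a physical boundary subalgebra in the sense of the paper, in particular that $A$ is a Lagrangian algebra in $\text{DHR}(B)\cong\mathcal{Z}(\mathcal{C})$ and that the reconstructed symmetry category is $\mathcal{C}$ on the nose. The conceptual subtlety distinguishing this from the on-site (fiber functor) theorem is that the nontrivial associator $\Phi$ prevents the action from factorizing as a product of single-site actions, yet must not obstruct the underlying observable algebra from being a genuine tensor product; tracking this is exactly what separates ``integral'' from ``admits a fiber functor.''
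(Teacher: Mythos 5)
Your necessity argument contains a genuine error. The embedding $\mathcal{C}\hookrightarrow \text{Bim}(A)$ produced by the paper lands in $\text{DHR}_{+}(A|B)$, whose objects admit projective bases commuting with $B_{I^{c}}$ (the \emph{symmetric} subalgebra) and with $A_{J}$ only for $J$ on \emph{one side} of $I$; they are not two-sided localized bimodules over $A$ itself. Indeed, for a tensor product chain $\text{DHR}(A)\cong \text{Hilb}_{f.d.}$, so if the symmetry bimodules were honestly localized over $A$ on both sides your factorization $X\cong Y\otimes \mathbbm{1}_{A_{I^{c}}}$ with $Y\in\text{Bim}(A_{I})\simeq \text{Hilb}_{f.d.}$ would force the image of $\mathcal{C}$ to be trivial --- contradicting Theorem \ref{thm:intrealiz} itself, which realizes every integral fusion category. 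Your argument proves too much. There is also a more local mistake: even for a genuine DHR bimodule, localization asserts the \emph{existence of a projective basis} whose elements commute with $A_{I^{c}}$; it does not say the left and right $A$-actions agree on $A_{I^{c}}$ (for a general vector $x=\sum b_{i}\langle b_{i}|x\rangle$ one has $ax=\sum b_{i}a\langle b_{i}|x\rangle\neq xa$ in general), so no such factorization follows even in the two-sided case. The paper's actual obstruction (Proposition \ref{prop:obstruction}) is K-theoretic and avoids any localization analysis: the fusion ring acts on the ordered group $K_{0}(A)\cong \mathbbm{Z}[1/d]$, each object acts by multiplication by a positive real number by \cite[Proposition 5.3]{MR4419534}, and the only positive algebraic integers whose multiplication preserves the $d$-adic rationals are integers.

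Your sufficiency direction is a plan rather than a proof, and you say so yourself: the assertion that the quasi-Hopf MPO data yields a physical boundary subalgebra with $A$ Lagrangian in $\text{DHR}(B)\cong\mathcal{Z}(\mathcal{C})$ is exactly what is left unverified, and it is nontrivial --- the non-coassociative coproduct means the action of $H$ on $W^{\otimes n}$ depends on a parenthesization, so even defining compatible symmetry channels for nested intervals requires work, and one additionally needs a unitary (C*) structure on the quasi-Hopf presentation, which is not supplied by the Etingof--Nikshych--Ostrik theorem. Note also that if the quasi-Hopf action ever looked on-site, it would contradict Corollary \ref{cor:on-siteanomaly} for anomalous integral categories, so the associator must introduce spread; tracking how is the substance you have deferred. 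The paper's proof of Theorem \ref{thm:intrealiz} sidesteps all of this: it takes $\mathcal{C}$ as a module category over itself with the regular object $R$, so the inclusion $A(\mathcal{C},R)\subseteq A(\text{Mat}_{n}(\text{Hilb}_{f.d.}),H)$ is an instance of Example \ref{ex:MPO} where the Lagrangian property is already established; then the integrality hypothesis enters only through the dimension count $\dim({_{X}H_{Y}})=d_{X}d_{Y}$, which permits a factorization $_{X}H_{Y}\cong V_{X}\otimes V_{Y}$ and an explicit bounded spread isomorphism onto $\otimes_{\mathbbm{Z}}M_{d}(\mathbbm{C})$ with $d=\sum_{X}d_{X}^{2}$; the physical boundary structure is then transported by Theorem \ref{thm:transportsubalg} and Remark \ref{rem:transportsymmetry}. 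If you want to salvage your route, you would need to carry out the quasi-Hopf verification in full; as it stands, only the statement of the two-implication decomposition matches the paper.
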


The proof that any fusion category realized as symmetries on a tensor product quasi-local algebra must be integral uses a standard $K$-theory obstruction for categorical symmetries on C*-algebras \cite{MR4419534}. Our construction of tensor product symmetries for any integral fusion categories is new, and utilizes a standard MPO action on an anyon chain, but then conjugates by a bounded spread isomorphism to a tensor product quasi-local algebra. 

In our framework, we define a categorical symmetry to be \textit{on-site} if the associated symmetry channels have $0$-spread (i.e. are strictly locality preserving). As mentioned above, if $\mathcal{C}$ is anomaly-free, we can construct \textit{on-site} actions using a fiber functor (see \cite{meng2025noninvertiblesptsonsiterealization, seifnashri2025disentanglinganomalyfreesymmetriesquantum}). Examining our construction above, however, we see that it generally produces non on-site actions. This suggests a connection between ``onsite-ability" and the existence of a fiber functor. We prove the following theorem, which is Corollary \ref{cor:on-siteanomaly} in the main body of the paper.

\begin{thmalpha} A fusion category is realized as on-site symmetries of a tensor product quasi-local algebra if and only if $\mathcal{C}$ admits a fiber functor.
\end{thmalpha}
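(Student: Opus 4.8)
The plan is to prove the two implications separately, reading the existence of a fiber functor as the vanishing of an anomaly obstruction to \emph{strict} locality, and using the integral realization theorem (Theorem~\ref{thm:intrealiz}) to control quantum dimensions. The more elementary implication is that a fiber functor yields an on-site realization. Given a unitary fiber functor $F\colon\mathcal{C}\to\mathrm{Hilb}$, Tannaka--Krein reconstruction produces a finite-dimensional $C^*$-Hopf (Kac) algebra $H$ with $\mathcal{C}\simeq\mathrm{Rep}(H)$ and $F$ the forgetful functor. I would then place a faithful $H$-module $V$ at every site, set $A=\bigotimes_{\mathbb{Z}}\End(V)$, and let $H$ act through the iterated coproduct. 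After checking, via the framework of Section~\ref{sec:localsubalg} applied to this standard construction, that this is a physical boundary subalgebra with symmetry category $\mathcal{C}$, the point is that the symmetry channels are $0$-spread: applying the averaged action to a single-site operator $a$ and using the counit identity $\epsilon(h_{(2)})\,h_{(1)}=h$ collapses the coproduct back onto that one site, so each $\Phi_X$ preserves the single-site subalgebras and is strictly locality preserving.

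For the converse — that an on-site realization forces a fiber functor — I would first invoke Theorem~\ref{thm:intrealiz}: an on-site realization is in particular a realization on a tensor product quasi-local algebra, so $\mathcal{C}$ is integral and every $d_X\in\mathbb{Z}$. Write $A=\bigotimes_{\mathbb{Z}}M_d(\mathbb{C})$. Since each $\Phi_X$ has $0$-spread, it restricts to a unital completely positive map $\phi_X\colon M_d(\mathbb{C})\to M_d(\mathbb{C})$ on a single site, independent of the site by translation invariance, and the fusion relations descend to $\phi_X\circ\phi_Y=\tfrac{1}{d_Xd_Y}\sum_Z N_{XY}^Z\,d_Z\,\phi_Z$. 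Taking a minimal Stinespring dilation $\phi_X(a)=V_X^*(a\otimes 1_{K_X})V_X$ produces a finite-dimensional Hilbert space $K_X$ for each object, and the fusion relation together with minimality should force $\dim K_X=d_X$. I would then define the candidate fiber functor by $F(X)=K_X$, reading its action on morphisms off from functoriality of the minimal dilation and its tensorator $J_{X,Y}\colon K_X\otimes K_Y\xrightarrow{\ \sim\ }K_{X\otimes Y}$ from comparing the dilation of $\phi_X\circ\phi_Y$ with those of the $\phi_Z$.

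The hard part will be coherence: one must verify that the $J_{X,Y}$ satisfy the pentagon relative to the associator of $\mathcal{C}$, so that $F$ is a genuine monoidal (hence fiber) functor and not merely a dimension-matching assignment on the Grothendieck ring. This is precisely where the anomaly could obstruct the construction, and where the $0$-spread hypothesis must do real work. I expect the cleanest route is to upgrade the single-site data to an honest algebraic object: strict locality of the two-site action should let one factor the restriction of $\Phi_X$ across a single bond as a genuine comultiplication, assembling the operators $\{V_X\}$, their composition, and this comultiplication into a finite-dimensional $C^*$-Hopf algebra $H$ with $\mathrm{Rep}(H)\simeq\mathcal{C}$; since the local data then lives in $\mathrm{Hilb}$, whose associativity constraint is trivial, the associator of $\mathcal{C}$ is automatically trivialized and the forgetful functor of $H$ is the desired fiber functor. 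Making the passage from ``$0$-spread channels'' to ``coassociative comultiplication'' precise — in particular checking that no residual obstruction survives restriction to one site — is the crux, and is presumably where the detailed bimodule and Q-system description of Section~\ref{sec:localsubalg} enters.

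Finally I would record that the two constructions are mutually inverse (both satisfying $\dim F(X)=d_X$), which closes the equivalence and confirms consistency with the standard on-site examples arising from fiber functors, such as the forgetful functors on $\mathrm{Vec}_G$ and $\mathrm{Rep}(G)$.
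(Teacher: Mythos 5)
Your forward direction is fine: the Hopf-algebra (Tannaka--Krein) construction with the counit identity collapsing the coproduct legs is a correct and standard alternative to the paper's route, which instead takes the module category $\mathrm{Hilb}_{f.d.}$ given by the fiber functor and invokes the spatial realization of Example \ref{ex:MPO}. The converse, however, has a genuine gap, and it is not the deferred ``coherence check'' you flag --- it is that the data you feed into the construction cannot, even in principle, produce a fiber functor. The symmetry channels factor through the fusion ring: $\lambda_{X}$ depends only on the class of $X$ in $\mathrm{Irr}(\mathcal{C})$, and the single-site ucp maps $\phi_{X}$ together with the composition law $\phi_{X}\circ\phi_{Y}=\sum_{Z}\frac{d_{Z}}{d_{X}d_{Y}}N^{Z}_{XY}\phi_{Z}$ constitute a representation of the convex monoid $S(\mathcal{C})$, which is \emph{identical} for any two fusion categories with the same fusion rules (e.g.\ $\mathrm{Vec}(G)$ versus $\mathrm{Vec}(G,\omega)$). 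So no amount of Stinespring dilation of the $\phi_{X}$ can see the associator: the $H^{3}$-type anomaly is invisible at the level of channels. Concretely, (i) there is no candidate for $F$ on morphisms, since a morphism $f\colon X\to Y$ in $\mathcal{C}$ induces nothing between $\phi_{X}$ and $\phi_{Y}$ (these depend only on isomorphism classes); (ii) the claim that minimality forces $\dim K_{X}=d_{X}$ is unjustified and generally false --- the minimal dilation dimension is the Kraus rank of the single-site channel, i.e.\ the rank of a Choi matrix, which already for group-symmetric examples exceeds the quantum dimension; (iii) the proposed ``bond comultiplication'' presupposes exactly the coassociative structure whose existence is the theorem.

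The paper's converse avoids channels entirely and works with the bimodules, which is where the associator lives. It first shows that on-siteness of the channels is equivalent to \emph{strong relative Haag duality} of $B\subseteq A$ (given strong Haag duality of $A$). Then, for an on-site symmetry, each object of $\mathcal{C}$ admits projective bases localized in half-line intervals $[a,0]$ commuting with both $A_{(-\infty,a)}$ and $B_{(0,\infty)}$; restricting to $\widetilde{X}:=\mathrm{span}\{b_{i}A_{-}\}$ yields a unitary \emph{tensor} functor $\mathcal{C}\to \mathrm{DHR}_{L}(A_{-})$, with coherence automatic because the tensorator is the canonical identification of relative tensor products of correspondences --- precisely the structure your channel-level reduction discards. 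Finally, a separate computation using strong Haag duality and the factorization $X\cong A_{<I}\otimes X_{I}$ shows $\mathrm{DHR}_{L}(A_{-})\simeq \mathrm{Hilb}_{f.d.}$ for $A=\otimes_{\mathbbm{Z}}M_{d}(\mathbbm{C})$, so the half-line functor \emph{is} the fiber functor. (Your appeal to Theorem \ref{thm:intrealiz} for integrality is true but not needed.) If you want to salvage your outline, the fix is to transport the bimodules, not the channels, onto a half-line --- which is exactly the paper's mechanism.
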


Again, these arguments only use the structure arising from the inclusion of algebras $B\subseteq A$, and not on any particular representation of the symmetry (e.g. matrix product operators vs weak Hopf algebra, etc.).

\subsection{Symmetric states}

A motivating problem for categorical symmetries is understanding the universality classes of symmetric states and phase transitions between these in terms of categorical data (this program has been fittingly called the Categorical Landau Paradigm \cite{PhysRevLett.133.161601,PhysRevB.111.054432, PhysRevB.108.075105}). A symmetric state is a state $\phi$ on $A$ invariant under all symmetry channels. The definition of a physical boundary subalgebra $B\subseteq A$ guarantees the existence of a unique conditional expectation $E:A\rightarrow B$. Thus the symmetry condition can be equivalently restated as $\phi\circ E=\phi$. The conditional expectation establishes a bijection between symmetric states on $A$ and arbitrary states on $B$. We can view a state on $B$ as a concrete boundary condition for the TQFT $\mathcal{T}$ (which is precisely interpreted in the context of Levin-Wen type lattice models in \cite{jones2025localtopologicalorderboundary}). This gives us a way to formally interpret the SymTFT idea of gluing various boundary conditions onto the physical boundary (for example, see \cite{PhysRevLett.133.161601}), pictured below.

$$\begin{tikzpicture}
\draw (0,1.5)--(2,1.5);
\draw [dotted,red] (0,.7)--(2,0.7);
\draw (0,0.2)--(2,0.2);
\draw node at (1,1.8) {$\mathcal{B}_{\text{top}}$};
\draw [red] node at (-0.4,0.6) {$\mathcal{B}_{\text{phys}}$};
\draw node at (1,-0.3) {$B_{alt}$};
\end{tikzpicture}$$

How can we leverage the ``boundary of TQFT" interpretation for a state on $B$ to obtain a useful order parameter? From a physical perspective, the primary order parameters for boundaries of 2+1D TQFTs are \textit{algebra objects} internal to the bulk unitary modular tensor category $\mathcal{A}:=\mathcal{Z}(\mathcal{C})$ \cite{PhysRevB.108.075105, PhysRevLett.133.161601, PhysRevB.111.054432}. The idea is that bulk topological defects in $\mathcal{T}$ can be pushed onto the boundary, making the boundary defects into a module category for $\mathcal{A}$. Furthermore, the GNS representation $L^{2}(B,\phi)$ of the state itself provides a distinguished object in this module category. Taking the internal hom provides an algebra object\footnote{we usually assume our states are pure on $B$, which is equivalent to this algebra being connected}  $H_{\phi}\in \mathcal{A}$, which should be an invariant under any natural notion of equivalence of boundaries (and in particular, we show is invariant under symmetric finite-depth circuit equivalence in Corollary \ref{thm:finitedepth}). 

If the algebra object $H_{\phi}$ is commutative (in which case we say that the state is \textit{local}), then the distinguished object can be thought of ``vacuum" defect, and the module category acquires the structure of a unitary fusion category as a quotient of $\text{DHR}(B)$. If $H_{\phi}$ is a Lagrangian algebra object, we say $\phi$ is \textit{topological} (usually referred to in the physics literature as ``gapped"), and the Lagrangian algebra describes the \textit{topological order} of the symmetric state. We say a state is \textit{gapless} if $H_{\phi}$ is not Morita equivalent to a \textit{Lagrangian algebra}. While we do not have any Hamiltonian at hand to rigorously talk about gaplessness (or gapped, for that matter), we believe this word captures what most physicists mean when they use the term, since these states cannot be realized as defects of a topological state (at least, not a state topological with respect to $\mathcal{C}$). This is compatible with various notions in the literature of categorical symmetry protected topological order \cite{GarreRubio2023classifyingphases, PhysRevLett.133.161601, MR4861493}. 

Topological states realizing any given symmetric topological order are fairly easy to construct (for example, see \ref{ex:buildingtopstate}). However, gapless states, which we expect to have interesting universality classes (e.g. CFTs), are much more difficult to explicitly write down and analyze. Thus any abstract criterion that guarantees a state is gapless is very valuable, and the search for such criteria is one of the main motivations for studying categorical symmetry in the first place.

There are two general results from the physics literature which achieve this end: a categorical anomaly enforced gaplessness theorem \cite{TW24} and a version of Kramers-Wannier duality invariance \cite{PRXQuantum.4.020357,PRXQuantum.5.010338, 10.21468/SciPostPhys.16.6.154}. One of the main results of this paper is a formulation and proof of versions of these ideas on the lattice in the infinite volume limit.

Associated to a topological state $\phi$ is a module category $\mathcal{M}_{\phi}$ over the symmetry category $\mathcal{C}$ corresponding to the Lagrangian algebra $H_{\phi}$. We show that if $\phi$ is topological, then $\phi$ decomposes as a convex combination of inequivalent pure states on $A$, indexed by the simple objects of $\mathcal{M}_{\phi}$. In particular, a pure symmetric state on $A$ is topological if and only if the associated module category is rank one, in which case $\mathcal{C}$ has a fiber functor. This leads us to the following version anomaly-enforced gaplessness theorem, which could be viewed as a kind of categorical Lieb-Schultz-Mattis theorem (see Corollary \ref{cor:LSM}).

\begin{thmalpha}(Anomaly enforced gaplessness I). Let $A\subseteq B$ be a physical boundary subalgebra with symmetry category $\mathcal{C}$, and let $\phi$ be a pure symmetric state on $A$.  If $\mathcal{C}$ has no fiber functor, then $\phi$ is gapless.
\end{thmalpha}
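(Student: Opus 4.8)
The plan is to argue by contraposition: assuming $\phi$ is not gapless, I will produce a fiber functor on $\mathcal{C}$. Recall that under the canonical conditional expectation $E\colon A\to B$ the symmetric state $\phi$ corresponds to a state $\phi_B=\phi|_B$ on $B$, and that ``gapless'' is by definition the assertion that the internal endomorphism algebra $H_{\phi_B}\in\mathcal{Z}(\mathcal{C})\cong\text{DHR}(B)$ of the GNS object $L^2(B,\phi_B)$ is \emph{not} Morita equivalent to a Lagrangian algebra. So the hypothesis ``not gapless'' says precisely that $H_{\phi_B}$ is Morita equivalent to a Lagrangian algebra, whence its category of modules in $\mathcal{Z}(\mathcal{C})$ is, as a module category over $\mathcal{C}$, the module category $\mathcal{M}_\phi$ of a gapped boundary.

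The engine of the proof is the structural decomposition already established for such states: $\phi$ is a convex combination $\phi=\sum_i \lambda_i\,\psi_i$ of pairwise inequivalent pure states $\psi_i$ on $A$, indexed by the isomorphism classes of simple objects of $\mathcal{M}_\phi$. First I would confirm that this decomposition applies verbatim in the Morita-equivalent-to-Lagrangian case, and not merely when $H_{\phi_B}$ is literally Lagrangian; this is the point where the Morita class, rather than a chosen algebra representative, must be shown to govern the extremal content of $\phi$. Granting this, the hypothesis that $\phi$ is itself a pure state on $A$ forces the convex decomposition to be trivial, so that $\mathcal{M}_\phi$ has exactly one simple object up to isomorphism, i.e. $\mathcal{M}_\phi$ has rank one.

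It then remains to convert a rank-one module category into a fiber functor. A module category over $\mathcal{C}$ with a single simple object is $\text{Hilb}$ equipped with a $\mathcal{C}$-module structure, and the functor $c\mapsto \operatorname{Hom}_{\mathcal{M}}(m, c\triangleright m)$ at the generator $m$ is then a monoidal, dimension-preserving functor $\mathcal{C}\to\text{Hilb}$, that is, a fiber functor. This contradicts the standing hypothesis that $\mathcal{C}$ admits no fiber functor, and therefore $\phi$ must be gapless.

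The hard part will be the first step of the second paragraph: extending the decomposition-into-pure-states result from the topological case ($H_{\phi_B}$ Lagrangian) to the full non-gapless regime ($H_{\phi_B}$ merely Morita equivalent to a Lagrangian algebra). Concretely, one must check that Morita equivalence of connected algebras in $\mathcal{Z}(\mathcal{C})$ induces an equivalence of the corresponding $\mathcal{C}$-module categories compatible with the GNS data, so that the number of extremal components of $\phi$ is genuinely the rank of $\mathcal{M}_\phi$ and is constant on the Morita class. A secondary technical point, which I would treat carefully, is to distinguish purity of $\phi$ as a state on $A$ from purity of $\phi_B$ on $B$, since it is the former that collapses the decomposition to rank one.
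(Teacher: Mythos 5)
Your handling of the Lagrangian case agrees with the paper's: purity of $\phi$ collapses the decomposition of Theorem \ref{thm:rankofmodule} to a single summand, forcing $\text{rank}(\mathcal{M}_{\phi})=1$ and hence a fiber functor via the internal hom at the generator; and your secondary point about distinguishing purity of $\phi$ on $A$ from purity of $\phi|_{B}$ is legitimate (it is resolved by Theorem \ref{Endo-realization}: $\text{End}_{B}(L^{2}(B,\phi))$ sits inside $\text{End}_{A}(L^{2}(A,\phi))$ as the $X=\mathbbm{1}$ corner of $|\mathcal{A}\otimes\mathcal{H}^{op}|$, so a pure $\phi$ is automatically connected). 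The genuine gap is in the step you yourself flag as the hard part, and it is not merely hard---the claim you propose to prove is false. You want the number of extremal components of $\phi$ to equal $\text{rank}(\mathcal{M}_{\phi})$ and to be ``constant on the Morita class.'' But Theorem \ref{thm:rankofmodule} is proved only for \emph{local} states (those with $H_{\phi}$ commutative), and a connected algebra Morita equivalent to a Lagrangian need not be commutative, so the decomposition machinery does not apply to your state at all. Worse, the component count is $\dim\text{Hom}_{\text{DHR}(B)}(H_{\phi},A)$, an invariant of the algebra $H_{\phi}$ itself, not of its Morita class: for $H=\underline{\text{End}}(m)$ with $m$ simple in $\text{Mod}(K)$ one has $\dim\text{Hom}(H,A)=\sum_{X}\dim\text{Hom}(X\triangleright m,m)\,\dim\text{Hom}(X,A)$, which varies with $m$; the identity with $\text{rank}(\mathcal{M}_{K})$ is special to the representative $H=K$.

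What actually closes this case---and is the paper's route---is a one-sided comparison requiring no decomposition into pure states. Purity gives $\dim\text{Hom}_{\text{DHR}(B)}(H_{\phi},A)=1$ via Theorem \ref{Endo-realization}. If $H_{\phi}$ were Morita equivalent to a Lagrangian $K$, then by \cite[Lemma 4.46]{MR3308880} the left center satisfies $Z_{l}(H_{\phi})\cong K$, and since $Z_{l}(H_{\phi})$ is a subobject of $H_{\phi}$, semisimplicity yields $\dim\text{Hom}(K,A)\le\dim\text{Hom}(H_{\phi},A)=1$; as both connected algebras contain $\mathbbm{1}$ with multiplicity one, the dimension is exactly $1$, which by \cite{MR4309554} says $\text{rank}(\mathcal{M}_{K})=1$, i.e. $\mathcal{C}$ has a fiber functor---contradiction. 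So the true statement behind your plan is the inequality $\dim\text{Hom}(H_{\phi},A)\ge\text{rank}(\mathcal{M}_{K})$ across the Morita class, obtained through the left center, not the equality and Morita-invariance of the component count that your proposal rests on; as written, your argument would fail at exactly that point.
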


\noindent We then utilize some technical results from the theory of C*-algebras to prove such states always exist in the setting of fusion spin chains (see Corollary \ref{cor:existence of gapless}).

\begin{thmalpha}\label{thm:gaplessE} If $B$ is a fusion spin chain and $B\subseteq A$ is a physical boundary inclusion, there exists a symmetric pure state on $A$. Thus, if $\mathcal{C}$ has no fiber functor, then there exists a symmetric gapless state on $A$.
\end{thmalpha}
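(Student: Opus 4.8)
The second assertion is immediate from the categorical Lieb-Schultz-Mattis theorem above: a pure symmetric state on $A$ is gapless whenever $\mathcal{C}$ has no fiber functor. So the entire content is the \emph{existence} of a symmetric pure state, and this is where I would concentrate the effort. My reduction is the following. Via the conditional expectation $E:A\to B$, the symmetric states on $A$ are exactly the states $\phi\circ E$ with $\phi$ a state on $B$, and $\phi\mapsto\phi\circ E$ is an affine bijection onto $S(B)$. If $\phi\circ E$ is pure on $A$ then $\phi$ must already be pure on $B$, since a proper decomposition $\phi=t\phi_1+(1-t)\phi_2$ pulls back to a proper decomposition $\phi\circ E=t(\phi_1\circ E)+(1-t)(\phi_2\circ E)$ in $S(A)$. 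Thus the goal becomes: exhibit a pure state $\phi$ on $B$ for which $\phi\circ E$ is pure on $A$.

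To test purity of $\phi\circ E$ I would compute the commutant of its GNS representation. Because $A$ is a Lagrangian (hence connected) Q-system in $\text{DHR}(B)$, the GNS representation of $\phi\circ E$ is the representation of $A$ induced from $(\pi_\phi,\mathcal{H}_\phi)$ along this Q-system, realized on $L^2(A,E)\otimes_B\mathcal{H}_\phi$. A standard Frobenius-reciprocity computation for the induction functor $A\otimes_B(-)$ identifies its commutant with $\bigoplus_X n_X\,\text{Hom}_B(\mathcal{H}_\phi,\,X\otimes_B\mathcal{H}_\phi)$, where $A\cong\bigoplus_X n_X X$ is the decomposition of $A$ into DHR bimodules $X$, with $n_{\mathbbm{1}}=1$. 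The $X=\mathbbm{1}$ summand contributes exactly the scalars, so $\phi\circ E$ is pure if and only if $\pi_\phi$ is disjoint from $X\otimes_B\mathcal{H}_\phi$ for every nontrivial sector $X$ occurring in $A$; writing each such $X$ as a localized endomorphism $\rho_X$ of $B$, this says $\pi_\phi$ and $\pi_\phi\circ\rho_X$ share no subrepresentation. Crucially, only finitely many $X$ occur.

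The crux, and the place I expect to need the ``technical results from the theory of C*-algebras'', is producing a pure state whose sector is \emph{not fixed} by any of these finitely many nontrivial localized endomorphisms. This is exactly where the hypothesis that $B$ is a fusion spin chain enters: such a $B$ is a simple, separable, infinite-dimensional (hence antiliminal) C*-algebra with a very rich pure-state space. I would argue by Baire category in the Polish space $P(B)$ of pure states: for each nontrivial $X$ the ``bad set'' $\{\phi:\pi_\phi\text{ is not disjoint from }\pi_\phi\circ\rho_X\}$ is nowhere dense, using that $\rho_X$ is a nontrivial bounded-spread endomorphism together with a Glimm-type genericity statement (antiliminal algebras have pure states in general position with respect to any fixed localized data). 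Removing the finite union of bad sets yields a pure $\phi$ with $\phi\circ E$ pure, hence a symmetric pure state on $A$.

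The genuine difficulty is isolated in this last step. For a single finite system symmetric pure states typically fail to exist --- already $\mathbbm{C}\subseteq M_2(\mathbbm{C})$ with the tracial expectation has only the (mixed) trace as an invariant state --- so one cannot avoid using the antiliminal, infinite-volume structure in an essential way; the finitely-many-sectors fact and the genericity of pure states are precisely what make the passage to infinite volume succeed. Once existence is secured, the Lieb-Schultz-Mattis theorem above delivers gaplessness whenever $\mathcal{C}$ has no fiber functor, completing the proof.
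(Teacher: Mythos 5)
Your reduction (symmetric pure states on $A$ correspond to pure states $\phi$ on $B$ with $\phi\circ E$ pure) and your commutant computation ($\mathrm{End}_{A}(A\boxtimes_{B}L^{2}(B,\phi))\cong\bigoplus_{X}n_{X}\,\mathrm{Hom}_{B}(\mathcal{H}_{\phi},X\boxtimes_{B}\mathcal{H}_{\phi})$, with finitely many sectors since $A$ is Lagrangian and $n_{\mathbbm{1}}=1$) are both sound and match machinery the paper develops elsewhere. But the step you yourself identify as the crux contains a genuine error. Your ``bad set'' $\{\phi\in P(B):\mathrm{Hom}_{B}(\pi_{\phi},X\boxtimes_{B}\pi_{\phi})\neq 0\}$ depends only on the unitary equivalence class of $\pi_{\phi}$, hence is invariant under the unitary group action on pure states. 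For a simple separable C*-algebra such as a fusion spin chain, the unitary orbit of \emph{every} pure state is weak*-dense in $P(B)$ (Glimm--Powers; see also Kishimoto--Ozawa--Sakai on homogeneity of the pure state space). Consequently any nonempty invariant set is dense, so the bad sets are \emph{never} nowhere dense unless they are empty; the Baire-category argument as you set it up cannot run. One could try to salvage it by proving the bad sets are merely meager (dense meager sets are allowed), but that is precisely the hard analytic content, and no ``Glimm-type genericity for localized data'' theorem of the kind you invoke exists off the shelf. A smaller but real misstep: DHR sectors of $B$ here cannot in general be written as localized \emph{endomorphisms} $\rho_{X}$ of $B$ --- $B$ is an AF (stably finite) algebra and the sectors have dimension $>1$; they are correspondences with localized projective bases, which is why the paper works with bimodules throughout. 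That part is patchable by phrasing disjointness directly for $X\boxtimes_{B}\pi_{\phi}$.

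For comparison, the paper routes around this entirely. It verifies, for fusion spin chains, the norm form of property BEK (after Bratteli--Elliott--Kishimoto and Izumi): $\sup_{b\in B}\|xby\|/\|b\|=\|x\|\,\|y\|$, proved concretely by approximating $x,y$ by local elements, inserting $b=v\otimes 1\otimes v^{\dagger}$ for an isometry $v:\mathbbm{1}\to X^{\otimes(m+k)}$, and using that strong tensor generation makes the fusion graph of $X^{k}$ on $\mathcal{M}$ completely connected, so that $A_{J}\otimes A_{I}\hookrightarrow A_{J\cup I}$ is isometric for well-separated intervals. It then invokes Izumi's theorem that BEK passes to the basic construction $A\subseteq A_{1}$, takes a faithful irreducible representation $H$ of $A_{1}$ that restricts irreducibly to $A$, and observes that any unit vector $\xi$ in the range of the Jones projection $e$ yields a pure state $\phi(a)=\langle\xi\,|\,a\xi\rangle$ with $\phi\circ E=\phi$, since $eae=E(a)e$. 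Your instinct that the finite-sector structure and infinite-volume simplicity must enter essentially is right, but the existence mechanism in the paper is this concrete BEK/Jones-projection argument, not a genericity argument on $P(B)$; as it stands your proposal has a gap exactly where the theorem's difficulty lives.
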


\bigskip

The above theorem shows that for any of the Haagerup-Izumi \cite{MR1832764,MR2837122, IzumiUnpublished} categories or even the extended Haagerup fusion categories \cite{MR2979509, MR4598730}, there exists a gapless state on a quasi-local algebra. Following the standard physics mantra, the RG fixed point of this state, enforcing categorical symmetry, is expected to be a CFT with the same categorical symmetry. There has been significant effort spent looking for a CFT with Haagerup symmetry by searching for a symmetric critical Hamiltonian \cite{PhysRevLett.128.231602, PhysRevLett.128.231603, PhysRevLett.134.191602}, while our result provides a purely state-based alternative (though it is non-constructive). It would be very interesting to have a concrete expression for the state constructed about so that one could try to get their hands on an actual CFT in the universality class.

We also investigate Kramers-Wannier type dualities, which have emerged recently as objects of interest in their own right \cite{COGN,PRXQuantum.4.020357,jones2025quantumcellularautomatacategorical, JoLi24, jones2024dhr}. Given a physical boundary subalgebra $B\subseteq A$, a (self)duality can be described by a bounded spread automorphism $\alpha:B\rightarrow B$. Composing with the conditional expectation $E:A\rightarrow B\subseteq A$, the composition $\widetilde{\alpha}:=\alpha\circ E: A\rightarrow A$ induces a map on the state space of $A$ that is bijective on the set of symmetric states.

One of the main results of \cite{jones2024dhr} is that conjugation by $\alpha$ induces a braided autoequivalence of $\text{DHR}(B)$. We show that if a state $\phi$ on $B$ is invariant (or more generally covariant), then the induced autoequivalence of $\text{DHR}(B)$ must fix the algebra objects $H_{\phi}$. We call a duality \textit{anomalous} if its action on $\text{DHR}(B)$ does not fix any Lagrangian algebras. By the main result of \cite{jones2025quantumcellularautomatacategorical}, this is equivalent to the inability to extend $\alpha$ is a bounded spread isomorphism on \textit{any} physical boundary subalgebra.

There are many examples of anomalous dualities, the most famous of which is Kramers-Wannier duality. In this case $\mathcal{C}=\text{Vec}(\mathbbm{Z}/2\mathbbm{Z})$, and the induced action on the center is the $e\leftrightarrow m$ swap autoequivalence, which leaves neither of the two Lagrangian algebras $1+e$ and $1+m$ invariant. We discuss a handful of other examples that generalize the Kramers-Wannier example in Section \ref{sec:duaofstates}. The significance of anomalous dualities lies in the following (see Corollary \ref{cor:anomalousLagrangian}).

\begin{thmalpha}(Anomaly enforced gaplessness II) Let $B\subseteq A$ is a physical boundary subalgebra and $\alpha:B\rightarrow B$ an anomalous bounded spread isomorphism. If $\phi$ is a connected symmetric state on $A$ such that $\phi|_{B}$ is covariant under $\alpha$, then $\phi$ is gapless. Furthermore, if $B$ is a simple C*-algebra, such states always exist.
\end{thmalpha}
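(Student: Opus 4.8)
\emph{Gaplessness.} The plan is to derive gaplessness by contradiction, playing the braided autoequivalence induced by $\alpha$ against the classification of Lagrangian algebras by Morita class. By the main result of \cite{jones2024dhr}, conjugation by the bounded spread isomorphism $\alpha$ induces a braided autoequivalence $\widehat{\alpha}$ of the modular tensor category $\text{DHR}(B)=\mathcal{Z}(\mathcal{C})$. First I would feed in the covariance hypothesis: since $\phi$ is connected and $\phi|_{B}$ is covariant under $\alpha$, the fact established above (that covariance forces $\widehat{\alpha}$ to fix the algebra object of a covariant state) gives $\widehat{\alpha}(H_{\phi})\cong H_{\phi}$ as algebra objects in $\mathcal{Z}(\mathcal{C})$. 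Now suppose, for contradiction, that $\phi$ is \emph{not} gapless, so that $H_{\phi}$ is Morita equivalent to some Lagrangian algebra $L\in\mathcal{Z}(\mathcal{C})$.

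Because $\widehat{\alpha}$ is a braided tensor autoequivalence, it induces equivalences of module categories and hence permutes Morita classes of connected separable algebras; moreover it preserves all the data (fusion, dimensions, twists, braiding) characterizing the Lagrangian property, so it carries Lagrangian algebras to Lagrangian algebras. Since $\widehat{\alpha}(H_{\phi})\cong H_{\phi}$, the class $[H_{\phi}]$ is $\widehat{\alpha}$-fixed, and as $L\in[H_{\phi}]$ the algebra $\widehat{\alpha}(L)$ is again a Lagrangian in $[H_{\phi}]$. Thus $L$ and $\widehat{\alpha}(L)$ are two Lagrangian algebras sharing a Morita class. The main obstacle is the categorical input that \emph{a Morita class of algebras in a modular tensor category contains at most one Lagrangian algebra up to isomorphism}; equivalently, that $L\mapsto \text{Mod}_{\mathcal{A}}(L)$ is injective into the invertible $\mathcal{Z}(\mathcal{C})$-module categories, i.e.\ into $\text{Pic}(\mathcal{Z}(\mathcal{C}))\cong\mathrm{Aut}^{\mathrm{br}}(\mathcal{Z}(\mathcal{C}))$. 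Granting this, $\widehat{\alpha}(L)\cong L$, so $\widehat{\alpha}$ fixes a Lagrangian, contradicting the anomalousness of $\alpha$; hence $H_{\phi}$ is Morita equivalent to no Lagrangian and $\phi$ is gapless. I would isolate this uniqueness statement as its own lemma, and locating or proving it is where the real work lies, although the toric code case — where $1+e$ and $1+m$ are separated precisely by the differing actions of the fixed objects $e$ and $m$ on the simples of their module categories — already exhibits the mechanism and shows the two Lagrangians lie in distinct Morita classes swapped by the $e\leftrightarrow m$ autoequivalence.

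\emph{Existence for $B$ simple.} Here the goal is to produce a single connected symmetric state on $A$ whose restriction to $B$ is $\alpha$-covariant. Since symmetric states on $A$ correspond bijectively to states on $B$ via $\phi=(\phi|_{B})\circ E$, and connectedness of $\phi$ is equivalent to purity of $\phi|_{B}$, it suffices to find a \emph{pure} state $\psi$ on $B$ that is covariant under $\alpha$, i.e.\ with $\pi_{\psi}\circ\alpha\cong\pi_{\psi}$. The plan is to realize $\psi$ as a vector state of an irreducible covariant representation of the C*-dynamical system $(B,\alpha)$: passing to the crossed product $B\rtimes_{\alpha}\mathbbm{Z}$ (or by the finite cyclic group generated by $\alpha$ when it has finite order), an irreducible representation of the crossed product restricts to an $\alpha$-covariant representation of $B$, which simplicity of $B$ forces to be faithful. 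The technical heart — exactly as in the existence argument behind Theorem \ref{thm:gaplessE} — is reconciling covariance with \emph{purity} on $B$, that is, arranging an $\alpha$-fixed point in the spectrum $\widehat{B}$; I expect to handle this with the same C*-algebraic input used there (simplicity together with a Mackey-type analysis of the restriction), which yields an $\alpha$-covariant pure state whenever $B$ is simple. Feeding any such $\psi$ into the first part then produces the asserted gapless state.
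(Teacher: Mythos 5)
Your gaplessness argument follows the paper's proof of Corollary \ref{cor:anomalousLagrangian} almost step for step: covariance forces $\text{DHR}(\alpha)(\mathcal{H}_{\phi})\cong \mathcal{H}_{\phi}$, hence the Morita class of $H_{\phi}$ is preserved, and if that class contained a Lagrangian $L$ then $\text{DHR}(\alpha)(L)$ would be a Lagrangian Morita equivalent to $L$. The uniqueness statement you isolate as ``where the real work lies'' is exactly what the paper fills in by citation: by \cite[Lemma 4.46]{MR3308880}, Morita equivalent \emph{commutative} algebras are isomorphic as objects (the mechanism being the left center, as in the proof of Corollary \ref{cor:LSM}), so $\text{DHR}(\alpha)(L)\cong L$, contradicting anomalousness. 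So the first half is structurally correct, with the key lemma left as a black box that the literature supplies; note also that the paper's phrasing of the lemma (for commutative algebras, not just Lagrangians) subsumes your separate treatment of the topological case.

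The existence half, however, has a genuine gap. Your crossed-product sketch does not deliver what is needed: an irreducible representation of $B\rtimes_{\alpha}\mathbbm{Z}$ restricts to a covariant representation of $B$ that is faithful by simplicity but in general highly reducible, and ``arranging an $\alpha$-fixed point in $\widehat{B}$'' is precisely the hard analytic content, which you defer. The paper does not reprove this; it invokes the known result \cite[Theorem 2]{MR905025} that for any simple separable C*-algebra and any \emph{outer} automorphism there exists a pure covariant state. This exposes the step you omit entirely: that theorem carries an outerness hypothesis, and the paper verifies it from anomalousness --- an inner automorphism induces the trivial braided autoequivalence of $\text{DHR}(B)$, which fixes \emph{all} Lagrangian algebras, so since $\text{DHR}(\alpha)$ fixes none, $\alpha$ must be outer. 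Without this observation your argument never connects the hypothesis (anomalousness) to the C*-algebraic input. Finally, your appeal to ``the same C*-algebraic input used'' behind Theorem \ref{thm:gaplessE} is misplaced: that existence argument rests on Izumi's property BEK for the inclusion $B\subseteq A$ and produces pure \emph{symmetric} states, with no covariance under an automorphism in sight; the relevant tool here is the covariant-pure-state theorem just cited, not the BEK machinery.
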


\subsection{Remarks}
As mentioned above, our framework derives both technically and philosophically from the theory of subfactors \cite{MR0696688, MR1642584}. A unitary fusion category always acts (uniquely) on the  hyperfinite II$_1$  factor by Popa  - see e.g. the discussion in \cite{MR4419534}.
Hence, there is no loss of generality in studying fusion categories and fusion modules through factors and subfactors.
Note here also that a fusion module over a unitary fusion category is always equivalent to a unitary fusion module by \cite{MR4616673, ciamprone2025weakquasihopfalgebrasctensor}.
So subfactors suffice to understand unitary fusion categories and all their fusion modules
and related notions of $\alpha$ induction in purely categorical settings.
The basic tenet of subfactor theory is that symmetries are encoded in inclusions of von Neumann algebras.
There is a very similar story for categorical symmetries on C*-algebras via properties of their inclusions. 

These ideas date back
to \cite{MR985307, MR1234394}, where properties of group actions on prime C$^*$ algebras could be recovered from inclusion of the fixed point algebra in the ambient algebra. We are able to directly leverage the results from this earlier work together with recent results of \cite{IzumiBEK} to find pure invariant states in great generality, which allow us to derive the existence of gapless states with Haagerup symmetry (see Theorem \ref{thm:gaplessE}). Thus the analytical perspective afforded by subfactor theory and its C*-version is a useful new tool in this area. We expect this subfactor perspective to lead to many more new applications.

We also point out that the subfactor perspective suggest a very natural question: given a quasi-local algebra $A$, can we classify physical boundary subalgebras $B\subseteq A$ up to bounded spread equivalence (see Definition \ref{def:bddspreadequiv})? The work \cite{chen2022ktheoretic} can be used to give a partial answer to this question, but unfortunately the K-theoretic arguments used there only classify things up to equivalence by C*-isomorphism, rather than bounded spread equivalence. 

Our paper is formulated with $\mathbbm{Z}$ as the underlying metric space, but nothing is lost by replacing this with $\mathbbm{R}$. Most definitions, theorems, and proofs go through verbatim. In addition, if we require our algebras to be von Neumann algebras, then everything goes through except we should (probably) also add the hypothesis that our states and correspondences are locally normal, which is not assumed in the strictly C*-setting we use here but makes no real technical difference.

It is also natural to ask if our theory works in higher dimensions. The short answer is ``no". The problem is that DHR bimodules for 2-dimensional quasi-local algebras are only symmetric braided tensor category, hence cannot possibly recover the full topological order of a 3+1D bulk TQFT. While there is some progress in this direction \cite{jones2025holographybulkboundarylocaltopological}, we expect there to be an as-yet-unknown satisfactory extension of the DHR category to capture the putative bulk topological order for a physical boundary subalgebra.

Finally, we wish to point out several related works. The MPO story for fusion categorical symmetry emerges from the theory of \textit{tensor networks}, and there is a version of the SymTFT picture in this framework in \cite{PhysRevLett.121.177203}. The theory of fusion categorical symmetry has also been well-developed in the context of (1+1)D rational conformal field theory, see \cite{MR4642115} for a review, and the foundational work of Fuchs, Runkel, and Schweigert \cite{MR1940282, MR2076134, FRSIV}.

\subsection{Structure of the paper}
In Section \ref{Section:quasi-localandDHR}, we introduce formal definitions for the basic objects we are studying, in particular quasi-local algebras and DHR bimodules. In this section, we also attempt to give a more physically-oriented motivation for the abstract definition of DHR bimodules in terms of topological defects in the bulk theory. In Section \ref{sec:localsubalg}, we give the definition for local subalgebra and physical boundary subalgebra for quasi-local algebras, and show how to extract the symmetry category in this situation. In the last two sections establish the results on tensor product symmetries and symmetric states.

We warn the reader that familiarity with the theory of (unitary) fusion categories is required for much of this paper, as well as some knowledge of operator algebras. We refer the reader to \cite{MR3242743, MR3308880, MR3204665} for background on tensor categories and \cite{MR887100,MR1441540, MR3617688} for background on operator algebras with an emphasis on their applications in formalizing the thermodynamic limit of spin systems. A comprehensive reference on subfactor theory is \cite{MR1642584} with \cite{MR4642115} for a recent review oriented towards algebraic quantum field theory. A topic which we make frequent use of are \textit{Hilbert C*-modules and C*-correspondences}. We refer the reader to \cite{MR1325694} or \cite{MR4419534} for a more categorical perspective (see also \cite{moore2017quantummechanicsnoncommutativeamplitudes} for an interesting physical take).

\subsection{Acknowledgements}

The authors would like to thank Lea Bottini, Liang Kong, Pieter Naaijkens, Jake McNamara, David Penneys, Abhinav Prem, Sakura Schafer-Nameki, Sahand Seifnashri, Shu-Heng Shao, Nikita Sopenko, Xiao-Gang Wen, Dominic Williamson, and Xinping Yang for very interesting conversations. The first author was supported  by an Emeritus Fellowship from the Leverhulme Trust. This work was initiated when he visited North Carolina State University, Raleigh. The second author was supported by NSF DMS 2247202, and was inspired by his participation in the KITP program "Generalized Symmetries in Quantum Field Theory: High Energy Physics, Condensed Matter, and Quantum Gravity".

\section{Quasi-local algebras and DHR bimodules}\label{Section:quasi-localandDHR}

In the SymTFT picture for a quantum field theory, we have a decomposition of the system into a topological boundary and physical boundary, coupled via a bulk TQFT. In 1+1D for a fixed time slice, the operators $A_{I}$ localized in a spatial interval $I$ are represented in the SymTFT picture as operators supported in a rectangle which intersects the physical and topological boundaries along I, represented graphically by

$$\begin{tikzpicture}
\filldraw [violet!30]  (4.5,0.5) rectangle (5.5,1.5);
\draw node at (1,1.5) {$\mathcal{F}$};
\draw (0,1)--(2,1);
\draw node at (5,1.8) {$\mathcal{B}_{\text{top}}$};
\draw  (3.7,0.5)--(6.3,0.5);
\draw node at (3,1) {=};
\draw  (3.7,1.5)--(6.3,1.5);
\draw node at (5,0.18) {$\mathcal{B}_{\text{phys}}$};
\draw [dotted] (3.7,0.5)--(3.7,1.5);
\draw [dotted] (6.3,0.5)--(6.3,1.5);
\draw node at (4.2,1) {$\color{violet}A_{I}$};
\draw [violet] (0.5,1)--(1.5,1);
\draw [violet] (0.5,0.9)--(0.5,1.1);
\draw [violet] (1.5,0.9)--(1.5,1.1);
\draw node at (1,0.7) {$\color{violet}A_{I}$};
\end{tikzpicture}$$

Our key observation is that operators localized near intervals of the physical boundary form subalgebras $B_{I}\subseteq A_{I}$, which assemble into a global subalgebra $B\subseteq A$ of the quasi-local algebra of observables which we call the \textit{physical boundary subalgebra}\footnote{this will later have a formal abstract definition}.

$$\begin{tikzpicture}
\filldraw [violet!30]  (4.5,0.5) rectangle (5.5,0.7);
\draw node at (5,1.8) {$\mathcal{B}_{\text{top}}$};
\draw  (4,0.5)--(6,0.5);
\draw  (4,1.5)--(6,1.5);
\draw node at (5,0.18) {$\mathcal{B}_{\text{phys}}$};
\draw node at (5,1) {$\color{violet}B_{I}$};
\draw [dotted] (4,0.5)--(4,1.5);
\draw [dotted] (6,0.5)--(6,1.5);
\end{tikzpicture}$$

Furthermore, for a spin system with local Hamiltonian $H=\sum_{I} H_{I}$, the assertion that the physical boundary should contain the dynamical content of the theory, so that the local terms of the Hamiltonian are contained in the physical subalgebra, i.e. $H_{I}\in B_{I}$.

The canonical examples of symmetries on spin chains are onsite global group symmetries encoded by a homomorphism $G\rightarrow \text{Aut}(A)$, where $A$ is the quasi-local algebra of observables. Here we take $G$ to be finite. The fusion category $\text{Vec}(G)$ is the symmetry fusion category in this context. Then the physical boundary observables are defined by the local subalgebras $B_{I}:=A^{G}_{I}$ consisting of local operators which commute with the on-site symmetry. The physical boundary subalgebra is $B=A^{G}$. 

In this section, we address the question: which subalgebras of the quasi-local algebra $A$ can be interpreted as a physical boundary subalgebra for some SymTFT decomposition? In other words,  which subalgebras of $A$ are reasonable to interpret as the observables of a physical boundary theory, which couples via a bulk TQFT couple to a topological boundary to recover the original spin system?

The main goal of this paper is to propose an answer to this question using the theory of DHR bimodules, introduced in \cite{jones2024dhr}. In order to formulate our definitions, we first set up some background terminology. In particular, we need a good notion of an ``abstract quasi-local algebra" over $\mathbbm{Z}$ which can be used to capture the structure of the physical boundary as well as the entire spin system.

\begin{defn}\label{defn:quasi-localalg} Let $A$ be a unital C*-algebra. A quasi-local structure on $A$ over $\mathbbm{Z}$  consists of an assignment $I\mapsto A_{I}$, of a unital, C*-subalgebra $A_{I}\subseteq A$ to every (finite) interval $I\subseteq \mathbbm{Z}$ subject to the following axioms:

\begin{enumerate}
\item 
(Unital) $A_{\varnothing}=\mathbbm{C}1_{A}$.
\item 
(Isotony) If $I\subseteq J$, then $A_{I}\subseteq A_{J}$.
\item 
(Locality) If $I\cap J=\varnothing$, then $[A_{I},A_{J}]=0$.
\item 
The subalgebra $\vee_{F} A_{I}$ is dense in $A$.
\item 
(weak Haag duality) There exists some $R\ge 0$ such that for every interval $I$, $\{a\in A\ :\ [a,b]=0\ \text{for all b}\ \in A_{J},\ J\cap I=\varnothing\}\subseteq A_{I^{+R}}$.
\end{enumerate}

\end{defn}

\noindent In the above $I^{+R}=\{x\in \mathbbm{Z}\ : d(x,I)\le R\}$, and we interpret $\varnothing^{+R}=\varnothing$.

We can state the weak Haag duality condition a bit more smoothly If $B\subseteq A$ is a subalgebra, we define the centralizer of $B$ in $A$ as

$$Z_{A}(B):=\{a\in A\ :\ [a,B]=0\}.$$

When we have a quasi-local algebra $A$, then for any subset (possibly infinite) $F\subseteq \mathbbm{Z}$, we can define $A_{F}:=\left(\vee_{I\subseteq F} A_{I}\right)^{\|\cdot\|}$ to be the C*-subalgebra generated by the interval subalgebras contained in $F$. Then weak Haag duality can be stated as follows:

For all intervals $I\subseteq \mathbbm{Z}$,

$$Z_{A}(A_{I^{c}})\subseteq A_{I^{+R}}.$$

The physical interpretation of a quasi-local algebra is that the self-adjoint elements of the C*-algebra $A$ are norm limits of bounded local observables of the system. The self-adjoint elements of $A_{I}$ are interpreted as the observables of the system localized in the region $I$.  We note that the failure of more general types of Haag duality is intimately related to the non-triviality of DHR bimodules and realization of the theory at hand on the boundary of a TQFT \cite{CaMa,shao2025additivityhaagdualitynoninvertible}.

We now introduce the type of equivalence relation on quasi-local algebras that we consider.

\begin{defn}
Let $A,A^{\prime}$ be quasi-local algebras over $\mathbbm{Z}$. A \textit{bounded spread isomorphism} is an isomorphism $\alpha:A\rightarrow A^{\prime}$ of C*-algebras such that there exists an $R\ge 0$ with $\alpha(A_{I})\subseteq A^{\prime}_{I^{+R}}$.
\end{defn}

We note that weak Haag duality implies that if $\alpha$ has bounded spread, $\alpha^{-1}$ also has bounded spread. For a tensor product quasi-local algebra (i.e. a ``concrete spin chain"), bounded spread automorphisms are called \textit{quantum cellular automata} \cite{Farrelly2020reviewofquantum}. The idea is that bounded spread isomorphisms uniformly preserve the notion of geometric locality, which is why such isomorphisms are often called \textit{locality preserving}. Such an equivalence will preserve aspects of physics that are locally defined. In addition, as we are mainly interested in the theory in the IR, we expect locality preserving isomorphisms to flow to isomorphisms that are \textit{strictly locality preserving} (i.e. spread 0) in the continuum.

When considering dynamics of quasi-local algebras, in this paper, we are mainly interested in \textit{local dynamics}, which are specified by a Hamiltonian formally written as

$$H=\sum_{I} H_{I}$$

\noindent where 

\begin{enumerate}
    \item 
    the sum is taken over all finite $F\subseteq \Lambda$.
    \item 
    $H_{I}\in A_{I}$.
    \item 
    there is an $R\ge 0$ such that $H_{I}=0$ for $\text{diam}(I)\ge R$.
\end{enumerate}

\noindent Obviously $H$ can not be interpreted as a bounded observable since we essentially never expect this sum to converge, but rather as an (unbounded) derivation on the quasi-local algebra given on the local observable $a$ by

$$\delta_{H}(a):=\sum_{I} [H_{I},a]$$

\noindent Then time evolution of observables is given on local operators $a$ by $*$-automorphisms

$$\sigma^{H}_{t}(a):=\sum^{\infty}_{n=0}\frac{1}{n!}(it\delta_{H})^{n}(a)\approx e^{it H} a\ e^{-itH}\,.$$

\noindent We note the first sum above is guaranteed to converge by strict locality of the Hamiltonian and the Lieb-Robinson bound \cite{MR1441540}. 

A full $1+1$D quantum field theory requires us to specify a dynamics. However, in this paper we will not be focused on dynamics per se, but will instead study states, which should nominally arise as the \textit{ground states} and/or \textit{KMS} (i.e. thermal equilibrium) states of a Hamiltonian. This is motivated by the recent emergence of quantum information, which has emphasized the study of states (and their universality classes) independently of any parent Hamiltonian.

\bigskip

In this paper, we will be focused on 1+1D spin systems, hence we restrict our attention to quasi-local algebras satisfying the following conditions.

\begin{defn}
A quasi-local algebra over $\mathbbm{Z}$ is called an \textit{abstract spin chain} if $A_{I}$ is finite dimensional for each interval $I\subseteq \mathbbm{Z}$.
\end{defn}

Abstract spin chains will be the main example of abstract quasi-local algebras in this paper. We will now review the standard examples that will be relevant to our story.

\noindent \begin{ex}{\textbf{Spin systems}.}\label{ex:ConcreteEx} \noindent The standard examples of abstract spin chains are \textit{concrete spin chains}, which is the setting most directly used by physicists to describe quantum spin chain in the thermodynamic limit \cite{MR1441540}. The quasi-local algebra of concrete spin systems is the infinite tensor product $A:=\otimes_{\mathbbm{Z}} M_{d}(\mathbbm{C})$, with subalgebras $A_{I}:=\otimes_{I} M_{d}(\mathbbm{C})$. It is straightforward to verify the other axioms. 

In the physical version of this story, we typically start with an on-site Hilbert space $V$ of dimension $d$. Then we have a local Hilbert space associated to an interval $I$, defined as $V^{\otimes I}$. The local algebra $A_{I}\cong \otimes_{I} M_{d}(\mathbb{C})$ is naturally identified with all the bounded linear operators on $V^{\otimes I}$. Thus locally, we have a full quantum mechanical picture, wherein our observables are \textit{all} the operators on a local Hilbert space. Sectorization (or failure of the algebra of observables to be \textit{all} of the operators on a Hilbert space) only occurs in the infinite volume limit.  
\end{ex}

\bigskip

\noindent \begin{ex}{\textbf{Symmetric subalgebras from on-site group symmetry}.}\label{ex:SymSubalg} Start with a concrete spin system with a local $d$-dimensional Hilbert space $V$ as discussed in the previous example. A subalgebra of the full quasi-local algebra can be constructed by choosing a finite group $G$ and a faithful unitary representation of $G$ on $V$, which we denote by $\pi:G\rightarrow \text{U}(V)$ (where the latter denotes the unitary group of $V$). Then consider the usual quasi-local algebra of the concrete system $A:=\otimes_{\mathbbm{Z}} M_{d}(\mathbbm{C})$, and letting $G$ act diagonally on the tensor product Hilbert spaces and conjugating by the resulting unitary gives rise to a homomorphism $G\rightarrow \text{Aut}(A)$, which we denote $g\mapsto \alpha_{g}\in \text{Aut}(A)$. This action is on-site, which manifests in the Heisenberg picture as the property that $\alpha_{g}(A_{I})=A_{I}$ for all intervals $I\subseteq \mathbbm{Z}$. Define $B:=A^{G}=\{a\in A\ : \alpha_{g}(a)=a\ \text{for all}\ a\in A\}$.

For each interval $I\subseteq \mathbbm{Z}$, set $B_{I}:=A^{G}_{I}=\{a\in A_{I}\ :\ \alpha_{g}(a)=a\ \text{for all}\ g\in G\}=B\cap A_{I}$. Then $B$ obtains the structure of a quasi-local algebra, which satisfies Haag duality on the nose.

\end{ex}

This example is very important, since it clearly demonstrates how abstract spin systems can vary in significant ways from concrete spin systems before we even pick states or dynamics. Indeed, let $I$ be an interval with $J,K$ disjoint sub-intervals with the property that $J\cup K=I$. Then by construction, it is easy to see that for a concrete spin system $A_{I}\cong A_{J}\otimes A_{K}$. However, for the systems described above, we have $A_{J}\otimes A_{K}\subseteq A_{I}$ but we do not, in general, have equality. This is a basic fact from the representation theory of finite groups that an intertwiner between tensor product representations does not, in general, factorize as intertwiners between tensor factors. This can be thought of as a form of ``algebraic entanglement". This algebraic entanglement in-fact encodes intricate mathematical structure, and is ultimately responsible for the possibility of describing SymTFT decompositions algebraically.

We now move on to discuss a more general class of examples that can be utilized to formulate the standard examples of MPO symmetries. They also arise naturally (and have been studied for nearly 40 years) in the context of subfactor theory \cite{MR996454,MR1642584, MR1424954,MR1334479}.

\bigskip

\noindent \begin{ex}{\textbf{Multi-fusion spin chains}.}\label{ex:FusionSpin} Recall a unitary multi-fusion category is a rigid C*-tensor category with finitely many isomorphism classes of simple objects (but not necessarily a simple unit, see \cite[Section 2.2]{HP23} for a comprehensive overview of unitary (multi)-fusion categories). Let $\mathcal{C}$ be a unitary multi-fusion category, and let $X\in \mathcal{C}$. We typically assume $X$ is self-conjugate (for convenient access to subfactor theory) and strongly tensor generating, meaning there exists an $n>0$ such that every isomorphism class of simple object $Y\in \mathcal{C}$ occurs as a summand of $X^{\otimes n}$. Then we can build an abstract spin chain (over $\mathbbm{Z}$) as follows (for more details, see \cite{jones2024dhr,JoLi24,jones2025quantumcellularautomatacategorical}. For any interval $I$, define the algebra

$$A_{I}:=\text{End}_{\mathcal{C}}(X^{\otimes I}),$$

\noindent so that an element $f\in A_{I} $ is represented in the standard monoidal graphical calculus by 

$$\begin{tikzpicture}
\draw  (.1,1)--(0.1,1.3);
\draw  (.2,1)--(0.2,1.3);
\draw node at (0.51,1.2) {$\cdots$};
\draw  (.8,1)--(0.8,1.3);
\draw  (.9,1)--(0.9,1.3);
\draw  (0,0) rectangle (1,1);
\draw node at (0.5,0.5) {$f$};
\draw  (.1,-0.3)--(0.1,0);
\draw  (.2,-0.3)--(0.2,0);
\draw node at (0.51,-0.2) {$\cdots$};
\draw  (.8,-0.3)--(0.8,0);
\draw  (.9,-0.3)--(0.9,0);
\draw node at (0.5,-0.7) {$X^{\otimes I}$};
\draw node at (1.5,0.5) {$\in A_{I}$};
\end{tikzpicture}$$

\noindent where $X^{\otimes I}=X^{\otimes |I|}$, but we identify the tensor factors of $X$ with the points in the interval $I$. Then for $I\subseteq J$, we can define the natural inclusion 

$$A_{I}\hookrightarrow A_{J}$$

$$f\mapsto 1^{\otimes J<I}_{X}\otimes f\otimes \dots \otimes 1^{J>I}_{X}. $$

\noindent In graphical calculus this is expressed via

$$\begin{tikzpicture}
\draw  (.1,1)--(0.1,1.3);
\draw  (.2,1)--(0.2,1.3);
\draw node at (0.51,1.2) {$\cdots$};
\draw  (.8,1)--(0.8,1.3);
\draw  (.9,1)--(0.9,1.3);
\draw  (0,0) rectangle (1,1);
\draw node at (0.5,0.5) {$f$};
\draw  (.1,-0.3)--(0.1,0);
\draw  (.2,-0.3)--(0.2,0);
\draw node at (0.51,-0.2) {$\cdots$};
\draw  (.8,-0.3)--(0.8,0);
\draw  (.9,-0.3)--(0.9,0);
\draw node at (0.5,-0.7) {$X^{\otimes I}$};
\draw node at (1.4,0.5) {$\mapsto$};

\draw  (1.76,-0.3)--(1.76,1.3);
\draw  (1.86,-0.3)--(1.86,1.3);
\draw  (2.1,1)--(2.1,1.3);
\draw  (2.2,1)--(2.2,1.3);
\draw node at (2.51,1.2) {$\cdots$};
\draw  (2.8,1)--(2.8,1.3);
\draw  (2.9,1)--(2.9,1.3);
\draw  (2,0) rectangle (3,1);
\draw node at (2.5,0.5) {$f$};
\draw  (2.1,-0.3)--(2.1,0);
\draw  (2.2,-0.3)--(2.2,0);
\draw node at (2.51,-0.2) {$\cdots$};
\draw  (2.8,-0.3)--(2.8,0);
\draw  (2.9,-0.3)--(2.9,0);
\draw  (3.1,-0.3)--(3.1,1.3);
\draw  (3.2,-0.3)--(3.2,1.3);
\draw node at (2.5,-0.7) {$X^{\otimes J}$};
\end{tikzpicture}$$

\noindent Then define $A:=\text{colim}_{I} A_{I}.$ in the category of unital C*-algebras. Identifying each $A_{I}$ with its image in $A$ yields an abstract spin system over $\mathbbm{Z}$. We denote the abstract spin chain constructed this way by $A(\mathcal{C},X)$, and call it a \textit{fusion spin chain}. In the previous example built from a unitary representation of a group $\pi:G\rightarrow \text{U}(V)$, the associated spin chain $B$ we discussed is isomorphic to the fusion spin chain $A(\text{Rep}(G), \pi)$.

\end{ex}
\bigskip

\bigskip
\subsection{DHR bimodules.} In this section, we will review the basic theory of \text{DHR bimodules}. The main idea we are proposing in this paper is that DHR bimodules of a physical boundary algebra encode the bulk topological order, and hence the bulk TQFT, as explained in the introduction. First, we will give formal definitions of DHR bimodules and recall basic theorems and results. Then we will sketch a physical argument demonstrating how the bulk topological order actually corresponds to the category of DHR bimodules, which is an expansion of the ideas in \cite{jones2025localtopologicalorderboundary}.

Associated to an abstract quasi-local algebra $B$ is a C*-braided tensor category $\text{DHR}(B)$ of DHR bimodules \cite{jones2024dhr}. Objects in $\text{DHR}(B)$ consist of $B$-$B$ correspondences, which have projective bases localized in any sufficiently large interval $I$.

To be more precise, a DHR bimodule is a type of $B$-$B$ \textit{correspondence}, which is an algebraic $B$-$B$ bimodule with a right $B$-valued inner product (denoted $\langle x\ |\ y\rangle)$, satisfying a collection of axioms similar in spirit to a Hilbert space, but with the role of scalars played by the (right) action of the C*-algebra $B$ (see \cite{MR1325694} or \cite{moore2017quantummechanicsnoncommutativeamplitudes} for a more physically oriented interpretation of Hilbert modules, and \cite[Section 2.2]{MR4419534} for formal definitions of correspondences). We can think of a correspondence as a sector of quantum channels (or more generally, quantum operations) from $B$ to $B$, with a vector $x\in X$ giving rise to the completely positive map $\psi_{x}(b):=\langle x\ | bx\rangle\in B$. The left and right actions of $B$ correspond to pre and post local perturbations of the channel (for a more in-depth description of this picture, see \cite[Section 3]{jones2024dhr}).

A projective basis for a correspondence is a finite collection $\{b_{i}\}\subseteq X$ such that for any $x\in X$, $$x=\sum b_{i} \langle b_{i}\ | x\rangle .$$
A projective basis is localized in an interval $I$ if for all local operators $a\in B_{I^{c}}$ localized in the complement of the interval $I$, 
$$ab_{i}=b_{i}a.$$

\begin{defn} Let $B$ be a quasi-local algebra over $\mathbbm{Z}$. A $B$-$B$ correspondence is a \textit{DHR bimodule} if there exists an $R\ge 0$ such that for any interval $I$ with $|I|\ge R$, there is a projective basis localized in $I$.
\end{defn}

The collection of all DHR bimodules, $\text{DHR}(B)$, naturally forms a C*-tensor category, with morphisms adjointable bimodule intertwiners and the tensor product given by the relative tensor product of correspondences. This tensor category has a natural unitary braiding. We describe this result with the following theorem.

\begin{thm}\cite{jones2024dhr} If $B$ is a quasi-local algebra over $\mathbbm{Z}$, $\text{DHR}(B)$ naturally forms a unitary braided tensor category 
\end{thm}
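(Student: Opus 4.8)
The plan is to verify the three layers of structure in turn: that $\text{DHR}(B)$ is a C*-category, that the relative tensor product of correspondences equips it with a unitary monoidal structure, and that spatial separation of localized bases produces a unitary braiding. For the C*-category structure I would take morphisms to be the adjointable $B$-$B$ bimodule maps; these form complex Banach spaces closed under composition and adjoints, with the C*-identity $\|f^{*}f\|=\|f\|^{2}$ inherited from the general theory of adjointable operators on Hilbert C*-modules. The only DHR-specific point is that the class of DHR objects is closed under the operations needed for a C*-tensor category, namely finite direct sums and orthogonal summands cut out by projections. For these one checks that a projective basis localized in an interval $I$ for the ambient correspondence assembles (for direct sums) or compresses (for summands) to a localized projective basis for the new object, with the localization radius $R$ taken to be the maximum of the constituent radii.

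For the monoidal structure, the associator and unitors are the canonical ones from the theory of relative tensor products of C*-correspondences, so the pentagon and triangle coherence axioms hold automatically; the genuine content is closure under $\otimes_{B}$. Given DHR bimodules $X,Y$ with radii $R_{X},R_{Y}$, I would fix an interval $I$ with $|I|\ge\max(R_{X},R_{Y})$ and localized projective bases $\{b_{i}\}\subseteq X$, $\{c_{j}\}\subseteq Y$ in $I$, and claim that $\{b_{i}\otimes c_{j}\}$ is a projective basis for $X\otimes_{B}Y$ localized in $I$. The reconstruction identity follows from the defining formula for the relative inner product
$$\langle x_{1}\otimes y_{1}\ |\ x_{2}\otimes y_{2}\rangle=\langle y_{1}\ |\ \langle x_{1}\ |\ x_{2}\rangle y_{2}\rangle,$$
and for $a\in B_{I^{c}}$ one has $a(b_{i}\otimes c_{j})=ab_{i}\otimes c_{j}=b_{i}\otimes ac_{j}=(b_{i}\otimes c_{j})a$ using that both bases are localized in $I$. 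Hence $X\otimes_{B}Y$ is again DHR, with radius $\max(R_{X},R_{Y})$.

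The braiding is the crux. Given DHR bimodules $X,Y$, I would choose a projective basis $\{b_{i}\}$ for $X$ localized in an interval $I$ and $\{c_{j}\}$ for $Y$ localized in a disjoint interval $J$ lying strictly to the right of $I$, and define $\beta_{X,Y}$ by the swap
$$\beta_{X,Y}(b_{i}\otimes c_{j})=c_{j}\otimes b_{i},$$
extended $B$-bimodularly. Because $I$ and $J$ are disjoint, each $b_{i}$ commutes with all of $B_{J}$ and each $c_{j}$ with all of $B_{I}$, and this commutation is precisely what makes the swap respect the $B$-balancing and defines an isometric bimodule intertwiner whose inverse is the reverse swap; unitarity is then automatic from isometry plus surjectivity. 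I would then verify naturality in each variable (an intertwiner between DHR bimodules is itself localizable, hence can be slid past the spatially separated data by locality) and the two hexagon identities, which reduce to the geometric statement that three bimodules localized in three intervals ordered left-to-right can be re-bracketed and transposed consistently.

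The main obstacle is the \emph{well-definedness} of the braiding: showing that $\beta_{X,Y}$ is independent of the chosen localized bases and of the admissible separated intervals, subject only to keeping the support of $X$ to the left of that of $Y$. This is exactly where weak Haag duality enters, since it guarantees that any intertwiner witnessing two different localizations of the same bimodule is itself supported in a bounded enlargement of the relevant region; locality then forces the two candidate swap maps to agree. The fact that the opposite orderings $I<J$ and $J<I$ produce $\beta_{X,Y}$ and $\beta_{X,Y}^{-1}$ rather than a single symmetric isomorphism is precisely why the construction yields a genuine braiding rather than a symmetric tensor structure.
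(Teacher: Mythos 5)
Your proposal is correct and takes essentially the same route as the paper, which proves nothing new here but cites \cite{jones2024dhr} and sketches exactly your construction: the braiding $\sigma_{X,Y}(b_{i}\boxtimes c_{j})=c_{j}\boxtimes b_{i}$ on projective bases localized in separated intervals $I<<J$, with the genuine content lying in independence of the choices and the hexagon identities. You also correctly isolate weak Haag duality as the key mechanism---it places inner products such as $\langle b_{i}\ |\ x\rangle$ in $Z_{B}(B_{I^{c}})\subseteq B_{I^{+R}}$, so they can be slid past data localized sufficiently far away---which is precisely how the cited argument secures well-definedness, left $B$-modularity, and unitarity of the swap.
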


The unitary braiding $\sigma_{X,Y}: X\boxtimes_{B} Y\cong Y\boxtimes_{B} X$ is defined by picking two disjoint intervals $I<<J$, and choosing projective bases $\{b_{i}\}$ localized in $I$ and $\{c_{j}\}$ localized in $J$ respectively. Then define
$$\sigma_{X,Y}(b_{i}\boxtimes c_{j}):=c_{j}\boxtimes b_{i}.$$

Since the set $\{b_{i}\boxtimes c_{j}\}$ is projective basis for $X\boxtimes_{B} Y$ this uniquely extends to a right $B$-module intertwiner. It is not immediately obvious, but $\sigma_{X,Y}$ actually extends to a unitary intertwiner of correspondences which does not depend on the choice of intervals $I<<J$ or on basis elements. It is also shown that this gives a braiding on the category (i.e. satisfies the hexagon axioms \cite{MR3242743}). Despite the simple form given here, this braiding can be highly non-trivial. Indeed, fairly generically in 1+1D, the $\text{DHR}$ category is non-degenerate, and is even a Drinfeld center. Indeed, we have the following result, identifying the DHR bimodule category of fusion spin chains.

\begin{thm}{\cite[Theorem C]{jones2024dhr}}.\label{thm:Drinfeldcenter-fusionspin}
    If $A(\mathcal{C},X)$ is a fusion spin chain associated to the fusion category $\mathcal{C}$ and self-dual strong tensor generator $X$ (see Example \ref{ex:FusionSpin}), then $\text{DHR}(A(\mathcal{C},X))\cong \mathcal{Z}(\mathcal{C}))$ as unitary braided tensor categories.
\end{thm}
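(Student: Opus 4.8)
The plan is to construct an explicit unitary braided tensor functor $F:\mathcal{Z}(\mathcal{C})\to \text{DHR}(A(\mathcal{C},X))$ and prove it is an equivalence. On objects, send a half-braided object $(Z,\gamma)$, where $\gamma_Y:Z\otimes Y\xrightarrow{\sim} Y\otimes Z$ is the half-braiding, to the correspondence $M_Z$ realized as the colimit over intervals $I$ of the finite-dimensional spaces $\text{Hom}_\mathcal{C}(X^{\otimes I}, X^{\otimes I}\otimes Z)$. Two commuting actions of $A_I=\text{End}_\mathcal{C}(X^{\otimes I})$ — by precomposition (right action) and by post-composition with $b\otimes 1_Z$ (left action) — together with the $A_I$-valued inner product $\langle f\,|\,g\rangle := f^*\circ g$, make each $M_I$ an $A_I$-$A_I$ correspondence. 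The connecting maps $M_I\to M_{I'}$ add the new strands and use $\gamma_X$ to keep $Z$ at the right-hand boundary, so that the colimit $M_Z$ is a genuine $A$-$A$ correspondence; using $\gamma$ to transport the charge into any sufficiently large interval then produces a projective basis localized there, so $M_Z$ is a DHR bimodule. On morphisms, a central morphism $\phi:(Z,\gamma)\to(Z',\gamma')$ induces the bimodule intertwiner $1\otimes\phi$.

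First I would verify the tensor structure: the relative tensor product $M_Z\boxtimes_A M_{Z'}$ is computed by localizing the two charges in adjacent large intervals and composing the corresponding Hom-spaces, which identifies it with $M_{Z\otimes Z'}$, and the associativity/unit coherences reduce to the tensor product of half-braidings, which is precisely the monoidal structure of the center. Next I would check that $F$ is braided: the DHR braiding $\sigma_{M_Z,M_{Z'}}$ is defined by choosing bases localized in disjoint intervals $I\ll J$ and swapping them, and a direct graphical computation shows that moving the charge $Z$ (localized to the left) past the charge $Z'$ realizes exactly the half-braiding $\gamma_{Z'}:Z\otimes Z'\to Z'\otimes Z$, i.e.\ the braiding of $\mathcal{Z}(\mathcal{C})$. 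Unitarity of $F$ is automatic from the definition of the inner products.

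The two remaining points are full faithfulness and essential surjectivity. For full faithfulness, one shows that any adjointable $A$-$A$ bimodule intertwiner $M_Z\to M_{Z'}$, being compatible with both actions, must restrict on a sufficiently large interval to a morphism of the form $1\otimes\phi$ with $\phi:Z\to Z'$, and that the intertwiner property forces $\phi$ to commute with the half-braidings; weak Haag duality with uniform constant $R$ guarantees the intertwiner is determined by its behaviour on a single large interval, so no information is lost in this restriction.

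I expect essential surjectivity to be the main obstacle. Here the task is to show every DHR bimodule $M$ is isomorphic to some $M_Z$. I would use the localized projective basis guaranteed by the DHR condition: fixing an interval $I$ with $|I|\ge R$, the finitely many basis elements generate $M$ over $A$, and restricting to $A_I$ presents $M$ near $I$ as a finite-dimensional $A_I$-$A_I$ correspondence, which by the Morita picture of $A_I=\text{End}_\mathcal{C}(X^{\otimes I})$ — using that $X$ is a strong tensor generator, so every simple of $\mathcal{C}$ appears in $X^{\otimes I}$ — corresponds to an object $Z\in\mathcal{C}$. The compatibility of the left and right $A$-actions outside $I$, controlled by transportability of the localized basis to disjoint intervals, is then distilled into a natural family of isomorphisms commuting $Z$ past each strand $X$, i.e.\ a half-braiding $\gamma$; checking naturality and the hexagon axioms for $\gamma$ is where the bulk of the analytic work lies. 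The genuinely hard part is verifying that an abstract correspondence with localized bases is automatically categorically \emph{finite} (no infinite multiplicities) and genuinely \emph{transportable}, so that the reconstructed pair lands in $\mathcal{Z}(\mathcal{C})$ with $F(Z,\gamma)\cong M$; finite-dimensionality of the local algebras $A_I$ and the uniform Haag-duality radius $R$ are exactly the inputs that force this finiteness and transportability.
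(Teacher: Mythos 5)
This paper contains no proof of the statement: it is quoted as \cite[Theorem C]{jones2024dhr} and used as a black box, so there is no internal argument to compare your attempt against. Measured against the cited source, your outline tracks the right strategy --- realizing $F(Z,\gamma)$ as the colimit of the spaces $\text{Hom}_{\mathcal{C}}(X^{\otimes I}, X^{\otimes I}\otimes Z)$ with half-braiding--twisted connecting maps, computing the DHR braiding by transporting localized bases past each other, and, for essential surjectivity, distilling a half-braiding out of the compatibility of localizations in different intervals --- and you correctly identify essential surjectivity as where the real work sits.

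Two gaps are genuine, though. First, you invoke ``weak Haag duality with uniform constant $R$'' for $A(\mathcal{C},X)$ as if it were part of the setup, but it is not part of the definition of a fusion spin chain: it is itself a nontrivial theorem, and it is exactly where the hypotheses you never use --- self-duality of $X$ and strong tensor generation --- enter, since one needs evaluation/coevaluation-type morphisms available inside the chain to control relative commutants and to build transportable localized bases. Without establishing this first, the braiding on $\text{DHR}(A(\mathcal{C},X))$ is not even known to be well behaved, and your full-faithfulness argument (``determined by its behaviour on a single large interval'') has no footing. Second, the Morita step in your surjectivity sketch is too quick: a finite-dimensional $A_I$-$A_I$ correspondence over $A_I=\text{End}_{\mathcal{C}}(X^{\otimes I})$ does \emph{not} by itself correspond to an object $Z\in\mathcal{C}$; bimodules over this algebra only encode a pair of module-category data, and the diagonal form $\text{Hom}_{\mathcal{C}}(X^{\otimes I}, X^{\otimes I}\otimes Z)$ must be forced by the interplay of the left and right $A$-actions outside $I$ --- the same data from which the half-braiding is extracted. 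These two extractions have to be run simultaneously, not sequentially as you propose, since the identification of $Z$ across different intervals is exactly what the candidate half-braiding mediates; as written, the object $Z$ produced from a single interval is not canonical, and naturality plus the hexagon for $\gamma$ cannot be checked afterwards against a non-canonical choice.
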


Note that in particular, for the symmetric subalgebras $B=A^{G}$ described in Example \ref{ex:SymSubalg}, $\text{DHR}(B)\cong \mathcal{Z}(\text{Rep}(G))$.

\subsubsection{Physical picture for DHR bimodules}\label{subsec:physpic} We will now provide a sketch that connects the DHR bimodules of the physical boundary algebra to the bulk topological order. This section is not intended to prove  theorems, but give a physical story of why you might expect such an abstract structure as a DHR bimodule of the boundary algebra to encode the topological order of the bulk. Let us assume we are working with a $1+1D$ theory $\mathcal{F}$, so that the bulk TQFT is a 2+1D TQFT, determined by its unitary modular category of point defects $\mathcal{A}$. 

If $A$ denotes the quasi-local algebra for the whole theory, then the quasi-local subalgebra $B$ associated to the physical boundary is built from operators localized near the boundary. In particular, for an interval $I$ along the boundary, $B_{I}$ consists of operators localized in a region intersecting the boundary at $I$, so that $a\in B_{I}$ is represented graphically by

$$\begin{tikzpicture}
\draw  (4,0.5)--(6,0.5);
\draw  node at (5,1.5){$\mathcal{T}$};
\draw node at (5,0.18) {$\color{violet}I$};
\draw [violet]  (4.5,0.5) rectangle (5.5,0.7);
\draw [dotted] (4,0.5)--(4,2)--(6,2)--(6,0.5);
 \draw node at (7, 1){$\xrightarrow{\ \ \ a\in B_{I} }$};
 \draw  (8,0.5)--(10,0.5);
\draw  node at (9,1.5){$\mathcal{T}$};
\draw node at (9,0.18) {$\color{violet}I$};
\draw [violet]  (8.5,0.5) rectangle (9.5,0.7);
\draw [dotted] (8,0.5)--(8,2)--(10,2)--(10,0.5);
\end{tikzpicture}$$

Now, suppose we have a topological point defect, labelled by the simple object $x\in \mathcal{A}$. We define the vector space $X=\bigcup_{I} X_{I}$, where the $X_{I}$ operators localized near the (physical boundary) as before, but mapping from the bulk with vacuum to the bulk with an $X$ defect. An operator $\xi\in X_{I}$ can be visualized with the picture

$$\begin{tikzpicture}
\draw node at (4.5,1.5) {$\mathcal{T}$};
\draw node at (8.5,1.5) {$\mathcal{T}$};
\draw  (4,0.5)--(6,0.5);
\draw node at (5,0.18) {$\color{violet}I$};
\draw [violet]  (4.5,0.5) rectangle (5.5,1);
\draw [dotted] (4,0.5)--(4,2)--(6,2)--(6,0.5);
 \draw node at (7, 1){$\xrightarrow{\ \ \ \xi\in X_{I} }$};
 \draw  (8,0.5)--(10,0.5);
\filldraw [black] (9,0.7) circle (1pt);
\draw node at (9.2,0.8){$x$};
\draw node at (9,0.18) {$\color{violet}I$};
\draw [violet]  (8.5,0.5) rectangle (9.5,1);
\draw [dotted] (8,0.5)--(8,2)--(10,2)--(10,0.5);
\end{tikzpicture}$$

In general, the defect has a fixed position, and the region of support for localized operators may or may not contain it. In the picture above, we have featured the case where regions intersecting the interval $I$ along the boundary \textit{do} contain the defect. 

\begin{remark} Our convention below is to draw a violet rectangle to indicate the support of a local operator $a\in B$ since we have to take this into account when composing things, but \textit{not} continuing to draw the support region for a vector in $X$
\end{remark}

Now we claim that $X$ has the structure of a right $A$ module. Indeed for any $a\in B$, we define the right $A$ action

$$\xi\triangleleft a:= \xi \circ a$$

$$\begin{tikzpicture}
\draw node at (4.5,1.5) {$\mathcal{T}$};
\draw node at (7.5,1.5) {$\mathcal{T}$};
\draw node at (10.5,1.5) {$\mathcal{T}$};
\draw  (4,0.5)--(6,0.5);
\draw [violet]  (4.5,0.5) rectangle (5.5,1);
\draw [dotted] (4,0.5)--(4,2)--(6,2)--(6,0.5);
\draw node at (6.5, 1){$\xrightarrow{a}$};
\draw  (7,0.5)--(9,0.5);
\draw [violet]  (7.5,0.5) rectangle (8.5,1);
\draw [dotted] (7,0.5)--(7,2)--(9,2)--(9,0.5);
 \draw node at (9.5, 1){$\xrightarrow{\xi}$};
 \draw  (10,0.5)--(12,0.5);
\filldraw [black] (11,1) circle (1pt);
\draw node at (11.2,1.1){$x$};
\draw [dotted] (10,0.5)--(10,2)--(12,2)--(12,0.5);
\end{tikzpicture}$$

\noindent $X$ also has the structure of a $B$-valued inner product

$$\langle \xi | \eta\rangle := \xi^{\dagger}\circ\eta.$$

To define the left action, we need to consider the location of the defect. If $a$ is localized in a rectangle away from $x$, the post composition makes sense, and we could define $a\triangleright \xi=a\circ \xi$. In general, however, we have to ``move the defect out of the way". We consider a localized unitary $U_{+}$ that maps the Hilbert space with the defect localized near the boundary to the Hilbert space with the $x$-defect localized far from the boundary, and define

$$a\triangleright \xi:= U^{\dagger}_{+}\circ a\circ U_{+}(\xi).$$

$$\begin{tikzpicture}
\draw node at (4.5,1.5) {$\mathcal{T}$};
\draw node at (7.5,1.5) {$\mathcal{T}$};
\draw node at (10.5,1.5) {$\mathcal{T}$};
\draw node at (13.5,1.5) {$\mathcal{T}$};
\draw node at (16.5,1.5) {$\mathcal{T}$};
\draw  (4,0.5)--(6,0.5);
\draw [dotted] (4,0.5)--(4,2)--(6,2)--(6,0.5);
\draw node at (6.5, 1){$\xrightarrow{\xi}$};
\draw  (7,0.5)--(9,0.5);
\filldraw [black] (8,1) circle (1pt);
\draw node at (8.2,1.1){$x$};
\draw [dotted] (7,0.5)--(7,2)--(9,2)--(9,0.5);
 \draw node at (9.5, 1){$\xrightarrow{U_{+}}$};
 \draw  (10,0.5)--(12,0.5);
\filldraw [black] (11,1.6) circle (1pt);
\draw node at (11.2,1.7){$x$};
\draw [dotted] (10,0.5)--(10,2)--(12,2)--(12,0.5);
\draw  (13,0.5)--(15,0.5);
\filldraw [black] (14,1.6) circle (1pt);
\draw node at (14.2,1.7){$x$};
\draw [dotted] (13,0.5)--(13,2)--(15,2)--(15,0.5);
 \draw node at (12.5, 1){$\xrightarrow{a}$};
\draw [violet]  (13.5,0.5) rectangle (14.5,1);
\draw [violet]  (10.5,0.5) rectangle (11.5,1);
 \draw  (16,0.5)--(18,0.5);
\filldraw [black] (17,1) circle (1pt);
\draw node at (17.2,1.1){$x$};
 \draw node at (15.5, 1){$\xrightarrow{U^{\dagger}_{+}}$};
\draw [dotted] (16,0.5)--(16,2)--(18,2)--(18,0.5);
\end{tikzpicture}$$

Note that since the defect is topological, this won't depend on the choice of transport unitary $U_{+}$. Clearly the left action of $B$ commutes with the right action $B$ giving us a $B$-$B$ correspondence (after completing). 

It remains to make sense of the localized basis condition, which is crucial to constructing the braiding on $\text{DHR}(B)$. For this, we assume that the physical boundary absorbs bulk defects. In other words, we'd expect the defect-near-the-boundary Hilbert space $H_{x}$ to be unitarily equivalent to a finite direct sum of the physical boundary Hilbert space with no bulk defects ($H_{\mathbbm{1}}$) locally. This implies in particular that we would have operators $v_{i}: H_{\mathbbm{1}}\rightarrow H_{x}$ such that $\sum_{i} v_{i}\circ v^{\dagger}_{i}=\text{Id}_{H_{x}}$. But viewing $v_{i}\in X$, this is precisely the statement 

$$\text{Id}_{X}=\sum_{i} v_{i}\langle v_{i}\ |\ $$

\noindent which is the projective basis condition. But the $v_{i}$ can be localized anywhere, since they can be transported under the defect transport unitaries.  This yields a DHR bimodule.

\section{Local subalgebras}\label{sec:localsubalg} In this section, we will introduce the notion of a local subalgebra $B$ of an abstract spin system $A$. This is motivated by the SymTFT picture, and partly captures the topological nature of the decomposition. 

Let $A$ be the quasi-local algebra for our original theory, and $B$ the subalgebra of quasi-local observables localized near the physical boundary. The first thing to note is that this subalgebra itself inherits a quasi-local structure. We formalize this with the following definition.

\begin{defn}{Inclusions of quasi-local algebras}\label{def:inclusions} If $A$ is a quasi-local algebra, a unital C*-subalgebra $B\subseteq A$ is called a \textit{quasi-local subalgebra} if the union of the algebras $B_{I}:=B\ \cap A_{I}$ is norm dense in $B$.
\end{defn}

\noindent In this case, the assignment $I\mapsto B_{I}$ equips $B$ with the structure of a quasi-local algebra. We also say $B\subseteq A$ is an inclusion of quasi-local algebras. 

Returning to the SymTFT picture, take an interval $I$, and consider operators localized near the physical boundary in the complement of $I$, $B_{I^{c}}$. The operators $A_{I}$ localized in $I$ will necessarily commute with $B_{I^{c}}$. However, since the bulk and gapped boundary are topological, there are no non-trivial operators localized away from the physical boundary, and the support of operators in the topological region is contractible. This suggests that \text{all operators} in $A$ that commute with $B_{I^{c}}$ should actually reside in $A_{I}$. We formalize this below, and weaken it up to bounded spread, to take into account the discrete nature of the lattice.

\begin{defn}{\textbf{Weak relative Haag duality}}\label{def:Haag} A quasi-local subalgebra $B\subseteq A$ satisfies \textit{weak relative Haag duality} if there exists some $R\ge 0$ such that $Z_{A}(B_{I^{c}})\subseteq A_{I^{+R}}$. If we can choose $R=0$, we say $B\subseteq A$ satisfies strong relative Haag duality.
\end{defn}

Note that if $B\subseteq A$ satisfies strong relative Haag duality, then $A$ satisfies strong relative Haag duality. This also implies $B$ satisfies strong Haag duality itself, since $B_{I}=A_{I}\cap B$. We will not typically assume relative strong Haag duality because it is not invariant under bounded spread equivalence, though in some situations we can say more under this hypothesis.

We also note that using our conventions that $A_{\varnothing}=\mathbbm{C}$ and $\varnothing^{+R}=\varnothing$, if $B\subseteq A$ satisfies weak relative Haag duality, then $Z_{A}(B)=\mathbbm{C}1$, which means the inclusion $B\subseteq A$ is irreducible.

Another feature of the physical boundary subalgebra $B\subseteq A$ in a SymTFT decomposition is the existence of a canonical \textit{conditional expectation} of $A$ onto $B$ (see, for example, \cite{MR2391387}). Recall that if $B\subseteq A$, a unital, completely positive map of C*-algebras $E:A\rightarrow B$ is called a conditional expectation if $E=E\circ E$ and $E(b_{1}ab_{2})=b_{1}E(a)b_{2}$. The conditional expectation is called \textit{faithful} if $E(x^{*}x)=0$ implies $x=0$.

In the SymTFT picture, the conditional expectation is obtained by projecting the gapped boundary into the bulk (or, if you prefer, averaging over the fusion category symmetry on the gapped boundary). Note that since $B\subseteq A$ satisfies weak relative Haag duality, then $E(A_{I})\subseteq B_{I^{+R}}$, so any conditional expectation is always bounded spread. 

The final technical assumption we need is that for sufficiently large intervals $I$, we assume that $A_{I}$  generates $A$ as a (right) $B$-module in a particularly nice way. In particular, we require the existence of localized projective bases with respect to the local conditional expectation. Essentially, this ensures that $A$, equipped with the structure of a $B$-$B$ correspondence via $E$, is a DHR bimodule. We encode this in the following definition.

\begin{defn} A conditional expectation $E:A\rightarrow B$ is called \textit{local} if for every sufficiently large interval $I$, there exists a finite subset $\{b_{i}\}\subseteq A_{I}$ such that for all $a\in A$, $$a=\sum_{i} b_{i}E(b^{*}_{i}a).$$
\end{defn}

\begin{defn} A \textit{local subalgebra of A} is a subalgebra $B\subseteq A$ satisfying weak relative Haag duality such that there exists a faithful local conditional expectation $E:A\rightarrow B$.
\end{defn}

We will sometimes say that $B\subseteq A$ is a local inclusion or that $A$ is a local extension of $B$. As mentioned above, the inclusion $B\subseteq A$ is irreducible and by the definition of locality, the inclusion $(B\subseteq A, E)$ has finite Watatani index \cite{MR996807}, which together imply that there is a \textit{unique} conditional expectation \cite{MR4079745}, and thus we can take the existence of $E$ as a property, rather than an actual choice (i.e. structure).

\bigskip

\begin{ex}\textbf{On-site group symmetry}\label{ex:SymsubAlg2}. Recall the case of an on-site unitary representation of the finite group $G$ on a finite dimensional Hilbert space, considered in example \ref{ex:SymSubalg}. We assume the representation of $G$ is faithful (otherwise, replace $G$ with $G$ modulo its kernel). Define the abstract quasi-local algebra $B:=A^{G}=\{x\in A\ :\ \alpha_{g}(x)=x\}$, with local subalgebras $B_{I}:=A^{G}_{I}$.
By construction, $B$ is naturally contained in $A$, and this inclusion restricts to local inclusions $B_{I}\subseteq A_{I}$, yielding an inclusion of quasi-local algebras. That this satisfies relative Haag duality follows, for example, from \cite{jones2025quantumcellularautomatacategorical}. The conditional expectation is given by

$$E(a):=\frac{1}{|G|}\sum_{g\in G} \alpha_{g}(a).$$

\noindent The inclusion $(A^{G}\subseteq A, E)$ is the canonical example of a local inclusion.

\end{ex}

\begin{ex}{\textbf{Categorical inclusions of fusion spin chains}.}\label{ex:fusionspinsubalg} We consider the following example introduced in \cite{jones2025quantumcellularautomatacategorical}. Let $\mathcal{C}$ be a unitary fusion category and $\mathcal{D}$ a unitary, indecomposable multi-fusion category. For convenience, assume both categories are strict. Let $F$ be a dominant tensor functor $F:\mathcal{C}\rightarrow \mathcal{D}$. Let $X\in \mathcal{C}$ be a strong tensor generator and $Y:=F(X)\in \mathcal{D}$. Then we have a natural local inclusion of spin chains

$$A(\mathcal{C},X)\hookrightarrow A(\mathcal{D},Y)$$
defined by sending 
$$f\in \mathcal{C}(X^{\otimes n}, X^{\otimes n})\mapsto c_{X^{n}}\circ F(f)\circ c^{-1}_{X^{n}}\in \mathcal{D}(Y^{\otimes n}, Y^{\otimes n}),$$

\noindent
where $c_{X^{n}}:F(X^{\otimes n})\rightarrow F(X)^{\otimes n}$ is the unitary coherence isomorphism. Dominant tensor functors $F:\mathcal{C}\rightarrow \mathcal{D}$ correspond to connected, commutative $Q$-systems (see Section \ref{sec:Q-sys} for definitions) $A\in \mathcal{Z}(\mathcal{C})\cong \text{DHR}(A(\mathcal{C},X))$. $\mathcal{D}$ can be identified with $\mathcal{C}_{A}$, where we consider $A$ as an algebra in $\mathcal{C}$, take right $A$ modules, and use the commutative half-braiding to equip this category with a unitary tensor structure. Then $F:\mathcal{C}\rightarrow \mathcal{D}$ can be identified with the free module functor $x\mapsto x\otimes A$.

We will see in the next section a generalization of this story: essentially arbitrary local extensions of a quasi-local algebra $B$ correspond to connected, commutative Q-systems in the unitary braided category $\text{DHR}(B)$.
\end{ex}

The following is a natural notion of equivalence between the local subalgebras.
    
\begin{defn}\label{def:bddspreadequiv} Given quasi-local algebras $A,A^{\prime}$ with local subalgebras $B\subseteq A,\ B^{\prime}\subseteq A^{\prime}$ and a \textit{bounded spread equivalence} is a bounded spread isomorphism $\alpha:A\rightarrow A^{\prime}$ such that $\alpha(B)=B^{\prime}$.
\end{defn}

\begin{prop} If $\alpha$ is a bounded spread equivalence between local inclusions $B\subseteq A$ and $B^{\prime}\subseteq A^{\prime}$, then $\alpha|_{B}:B\rightarrow B^{\prime}$ is a bounded spread equivalence between the induced quasi-local structures.
\end{prop}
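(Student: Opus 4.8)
The plan is to verify directly that the restriction $\alpha|_B$ meets the definition of a bounded spread isomorphism between the quasi-local algebras $(B, I\mapsto B_I)$ and $(B', I\mapsto B'_I)$, where by construction $B_I = B\cap A_I$ and $B'_I = B'\cap A'_I$. Since the assignments $I\mapsto B_I$ and $I\mapsto B'_I$ were already observed (just after Definition \ref{def:inclusions}) to equip $B$ and $B'$ with quasi-local structures, there is nothing further to check on that front; the only content is that $\alpha|_B$ is a C*-algebra isomorphism and that it carries localized operators to localized operators with uniformly bounded spread.

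First I would note that $\alpha|_B\colon B\to B'$ is a $*$-isomorphism: it is the restriction of the $*$-isomorphism $\alpha$, hence injective and multiplicative, and it is surjective onto $B'$ precisely because the defining property of a bounded spread equivalence (Definition \ref{def:bddspreadequiv}) is $\alpha(B)=B'$. For the spread bound, let $R\ge 0$ be a spread constant for $\alpha$, so that $\alpha(A_I)\subseteq A'_{I^{+R}}$ for every interval $I$. Given $b\in B_I=B\cap A_I$, we have simultaneously $\alpha(b)\in\alpha(B)=B'$ and $\alpha(b)\in\alpha(A_I)\subseteq A'_{I^{+R}}$, whence $\alpha(b)\in B'\cap A'_{I^{+R}}=B'_{I^{+R}}$, using that $I^{+R}$ is again an interval and that the induced structure is defined by intersection with the ambient localization. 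Thus $\alpha|_B(B_I)\subseteq B'_{I^{+R}}$ with the very same constant $R$, which is exactly the bounded spread condition. Applying the identical argument to $\alpha^{-1}$, which has bounded spread by weak Haag duality as remarked after the definition of bounded spread isomorphism, shows that $(\alpha|_B)^{-1}=\alpha^{-1}|_{B'}$ is bounded spread as well.

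I do not anticipate a genuine obstacle here: the statement is essentially a bookkeeping check that the induced localization $B'_{J}=B'\cap A'_{J}$ is compatible with the spread of $\alpha$. The single point deserving a moment's care is the intersection identity $\alpha(b)\in B'\cap A'_{I^{+R}}=B'_{I^{+R}}$, which is what makes the argument go through; it holds precisely because the induced quasi-local structure on $B'$ is defined by intersecting $B'$ with the ambient interval algebras. In particular the same $R$ that controls the spread of $\alpha$ on $A$ controls the spread of $\alpha|_B$ on $B$, so no new constant is needed and the conclusion follows.
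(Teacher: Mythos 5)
Your proof is correct and follows essentially the same route as the paper's, whose entire argument is the one-line computation $\alpha(B_{I})=\alpha(B\cap A_{I})=\alpha(B)\cap\alpha(A_{I})\subseteq B^{\prime}\cap A^{\prime}_{I^{+R}}=B^{\prime}_{I^{+R}}$, i.e.\ exactly your intersection identity with the same constant $R$. Your additional remarks (that $\alpha|_{B}$ is a $*$-isomorphism onto $B^{\prime}$, and that the inverse has bounded spread via the weak Haag duality remark) are routine and consistent with observations the paper makes elsewhere.
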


\begin{proof}
Suppose $\alpha:A\rightarrow A^{\prime}$ has spread at most $R$. Then $\alpha(B_{I})=\alpha(B\cap A_{I})=\alpha(B)\cap \alpha(A_{I})\subseteq B^{\prime}\cap A^{\prime}_{I^{+R}}=B^{\prime}_{I^{+R}}$.
\end{proof}

\begin{thm}\label{thm:transportsubalg}
  Suppose $B\subseteq A$ is a local inclusion, and $\alpha:A\rightarrow A^{\prime}$ is a bounded spread isomorphism of quasi-local algebras, with spread at most $R$. Then $B^{\prime}:=\alpha(B)\subseteq A^{\prime}$ is a local subalgebra. 
\end{thm}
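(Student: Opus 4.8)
The plan is to transport every piece of the local-inclusion structure of $B\subseteq A$ across $\alpha$, checking that the relevant constants degrade only by a bounded amount. Since $\alpha$ is a unital $*$-isomorphism, $B'=\alpha(B)$ is a unital C*-subalgebra of $A'$; and since $\alpha(A_{I})\subseteq A'_{I^{+R}}$ we get $\alpha(B_{I})=\alpha(B\cap A_{I})=\alpha(B)\cap\alpha(A_{I})\subseteq B'\cap A'_{I^{+R}}=B'_{I^{+R}}$, so the images of the $B_{I}$ are dense in $\alpha(B)=B'$ (using that $\alpha$ is a homeomorphism). Hence $\bigcup_{I}B'_{I}$ is dense and $B'\subseteq A'$ is an inclusion of quasi-local algebras. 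The candidate conditional expectation is $E':=\alpha\circ E\circ\alpha^{-1}\colon A'\to B'$. A direct check shows $E'$ is unital completely positive, satisfies $E'\circ E'=E'$ and the bimodule property $E'(b_{1}'a'b_{2}')=b_{1}'E'(a')b_{2}'$, and is faithful, all of these following from the corresponding properties of $E$ together with injectivity of $\alpha$. The one identity worth recording is $\alpha\circ E=E'\circ\alpha$.

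Next I would verify that $E'$ is local. Fix an interval $I$ large enough that $J:=I^{-R}$ (the interval $I$ shrunk by $R$ on each side) still carries a projective basis for $E$; concretely this holds once $|I|\ge R_{E}+2R$, where $R_{E}$ is the threshold in the locality of $E$. Choosing $\{c_{i}\}\subseteq A_{J}$ with $a=\sum_{i}c_{i}E(c_{i}^{*}a)$ for all $a\in A$, set $b_{i}':=\alpha(c_{i})$. Then $b_{i}'\in\alpha(A_{J})\subseteq A'_{J^{+R}}=A'_{I}$, and applying $\alpha$ to the basis identity gives, for every $a'=\alpha(a)\in A'$,
\[
a'=\sum_{i}\alpha(c_{i})\,\alpha\big(E(c_{i}^{*}a)\big)=\sum_{i}b_{i}'\,E'(b_{i}'^{*}a').
\]
Thus $\{b_{i}'\}\subseteq A'_{I}$ is a projective basis for $E'$ localized in $I$, so $E'$ is local.

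The main obstacle is the weak relative Haag duality estimate, where one must watch the direction of the inclusions. Write $R_{0}$ for the Haag duality constant of $B\subseteq A$, so $Z_{A}(B_{J^{c}})\subseteq A_{J^{+R_{0}}}$ for all intervals $J$. Because $\alpha$ is an isomorphism, centralizers transport exactly: $Z_{A'}(\alpha(C))=\alpha(Z_{A}(C))$ for any subalgebra $C\subseteq A$. To exploit this I need a lower bound on $\alpha^{-1}(B'_{I^{c}})$ rather than an upper bound. Setting $K:=(I^{+R})^{c}$, the spread bound gives $\alpha(B_{K})\subseteq B'_{K^{+R}}=B'_{I^{c}}$, using $\big((I^{+R})^{c}\big)^{+R}=I^{c}$ for intervals, hence $B_{K}\subseteq\alpha^{-1}(B'_{I^{c}})$. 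Since centralizers reverse inclusions,
\[
Z_{A}\big(\alpha^{-1}(B'_{I^{c}})\big)\subseteq Z_{A}(B_{K})=Z_{A}\big(B_{(I^{+R})^{c}}\big)\subseteq A_{(I^{+R})^{+R_{0}}}=A_{I^{+(R+R_{0})}},
\]
the middle inclusion being weak relative Haag duality for $B\subseteq A$ applied to $J=I^{+R}$. Applying $\alpha$ and its spread bound once more yields
\[
Z_{A'}(B'_{I^{c}})=\alpha\big(Z_{A}(\alpha^{-1}(B'_{I^{c}}))\big)\subseteq\alpha(A_{I^{+(R+R_{0})}})\subseteq A'_{I^{+(2R+R_{0})}},
\]
so $B'\subseteq A'$ satisfies weak relative Haag duality with constant $2R+R_{0}$.

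Putting the three pieces together, $B'=\alpha(B)\subseteq A'$ is a quasi-local subalgebra satisfying weak relative Haag duality and admitting the faithful local conditional expectation $E'$, i.e.\ a local subalgebra. The delicate point throughout is the interval bookkeeping $I\mapsto I^{\pm R}$, and in the Haag duality step specifically the need to bound $\alpha^{-1}(B'_{I^{c}})$ \emph{below} by some $B_{K}$ (rather than above), so that the inclusion-reversing centralizer lands on the correct side.
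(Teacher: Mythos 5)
Your proof is correct and follows essentially the same route as the paper's: density of the $B'_I$ via the spread bound, weak relative Haag duality by pulling centralizers back through $\alpha$ (your bound $B_{(I^{+R})^c}\subseteq\alpha^{-1}(B'_{I^c})$ is exactly the paper's containment $\alpha^{-1}(A'_{I^c})\supseteq A_{(I^{+R})^c}$, and your constant $2R+R_0$ is the honest version of the paper's slightly loose $R+S$), and the conjugated expectation $\alpha\circ E\circ\alpha^{-1}$. Your explicit verification that $E'$ is local, with the transported basis $\{\alpha(c_i)\}$, fills in a step the paper dismisses as easy, so nothing further is needed.
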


\begin{proof}
    
Pick any interval $I$. Then
  
$$B^{\prime}_{I^{+R}}=B^{\prime}\cap A^{\prime}_{I^{+R}}=\alpha(B)\cap A^{\prime}_{I^{+R}}\supseteq \alpha(B\cap A_{I})=\alpha(B_{I}).$$ 

\noindent Thus, since the collection $\{B_{I}\}$ is dense in $B$, the collection $\{B^{\prime}_{I}\}$ is dense in $B^{\prime}$, so $B^{\prime}\subseteq A^{\prime}$ is quasi-local. 

Now we will see that $B^{\prime}\subseteq A^{\prime}$ satisfies weak relative Haag duality. For any interval $I$, let $a\in Z_{A^{\prime}}(B^{\prime}_{I^{c}})$. Then, since $B^{\prime}_{I^{c}}=\alpha(B)\cap A_{I^{c}}$, $\alpha^{-1}(a)$ centralizes $B\cap \alpha^{-1}(A^{\prime}_{I^{c}})$. But $\alpha^{-1}(A^{\prime}_{I^{c}})\supseteq A_{(I^{+R})^{c}}$, thus

$$\alpha^{-1}(a)\in Z_{A}(B\cap A_{(I^{+R})^{c}})=Z_{A}(B_{(I^{+R})^{c}})\subseteq A_{I^{+R+S}},$$

\noindent where $S$ is the weak Haag duality constant for $A$. Then 

$$a=\alpha(\alpha^{-1}(a))\in A^{\prime}_{I^{+R+S}}.$$

\noindent Thus $B^{\prime}\subseteq A^{\prime}$ satisfies weak relative Haag duality. Finally, it is easy to see that if $E:A\rightarrow B$ is a local conditional expectation, so is $\alpha\circ E\circ \alpha^{-1}: A^{\prime}\rightarrow B^{\prime}$.  

\end{proof}

\begin{remark}
If $B\subseteq A$ and $B\subseteq A^{\prime}$ are both local extensions of the same algebra, then we could require bounded spread isomorphisms such that $\alpha|_{B}=\text{Id}_{B}$. In this case, we do not even need to require $\alpha$ to be bounded spread from $A\rightarrow A^{\prime}$, since this follows automatically from weak Haag duality. We will call such an equivalence of $A$ and $A^{\prime}$ an \textit{equivalence of extensions}.
\end{remark}

\subsection{Q-systems and local extensions}\label{sec:Q-sys}

We will show that there is a correspondence between local extensions of a quasi-local algebra $B$ and connected, commutative (dual) Q-systems in $\text{DHR}(B)$. This will be a crucial component in our definition of physical boundary subalgebra. For general references on Q-systems and related topics, we refer the reader to \cite{MR3308880, 10.1063/5.0071215, MR4419534} with the caveat that different terminology may be used in these references.

\begin{defn} Let $\mathcal{B}$ be a unitary tensor category. Recall a \textit{C*-Frobenius algebra} is an object $A\in \mathcal{C}$ such that

\begin{enumerate}
    \item 
    (Associativity)\ $m:A\otimes A\rightarrow A $ satisfies  $m\circ (m\otimes 1_{A})=m\circ (1_{A}\otimes m)$
    \item 
    (Unital)\ $m\circ (\iota\otimes 1_{A})=m\circ(1_{A}\otimes i)=1_{A}$.
    \item 
    (Frobenius condition)\ $(m \otimes1_{A})\circ (1_{A}\otimes m^{\dagger})=m^{\dagger}\circ m=(1_{A}\otimes m)\circ(m^{\dagger}\otimes 1_{A})$.

\end{enumerate}

\end{defn}

This is simply the technical, unitary version of semi-simple algebra object in $\mathcal{C}$.  A further assumption is \textit{specialness}, i.e.    
$$m\circ m^{\dagger}=\lambda 1_{A}$$ for some scalar $\lambda$.

Note that $i^{\dagger}\circ i=\beta 1_{\mathbbm{1}}$, and thus a special C*-Frobenius algebra has two positive numerical parameters $\lambda$ and $\beta$. It turns out only the product $\lambda \beta$ is an invariant of the algebra up to C*-algebra isomorphism. Indeed, we can always simultaneously rescale the multiplication to normalize the scalar $\lambda$ to be any non-zero scalar we want, but this changes $\beta$ by the inverse.

A special C*-Frobenius algebra is a $\textit{Q}$-system if $\lambda=1$ and is a \textit{dual} $Q$-system if $\beta=1$. There  is a clear bijective correspondence between Q-systems and dual Q-systems, and these are isomorphic as C*-Frobenius algebras. It is shown in \cite{MR4419534} that if $B$ is a C*-algebra with trivial center, inclusions $(B\subseteq A, E)$ where $E$ is a conditional expectation are precisely the dual Q-systems in the unitary tensor category of dualizable $B$-$B$ correspondences.

Now suppose that $\mathcal{B}$ also has a unitary braiding $\{\sigma_{X,Y}:X\otimes Y\cong Y\otimes X\}$ for all $X,Y\in \mathcal{B}$. Then we say a C*-Frobenius algebra $A$ with multiplication $m$ is \textit{commutative} if 

$$ m\circ \sigma_{A,A}=m.$$

\begin{thm} Suppose $B$ has the structure of a quasi-local algebra over $\mathbbm{Z}$ with weak Haag duality. Then any local extension $B\subseteq A$ is a connected commutative dual Q-system in $\text{DHR}(B)$. If $B$ satisfies strong relative Haag duality, every connected commutative dual Q-system is a local extension.
\end{thm}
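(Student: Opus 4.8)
The statement splits into two implications, and the key external input is the result of \cite{MR4419534} identifying inclusions $(B\subseteq A,E)$ with conditional expectation with dual Q-systems in the unitary tensor category of dualizable $B$-$B$ correspondences. For the forward direction I would first observe that weak relative Haag duality forces $Z_A(B)=\mathbbm{C}1$, whence $Z(B)=B\cap Z_A(B)=\mathbbm{C}1$, so $B$ has trivial center and the hypotheses of \cite{MR4419534} apply. The faithful local conditional expectation $E$ has finite Watatani index, so $A$ is a dualizable $B$-$B$ correspondence and thus a dual Q-system in that category. The crucial upgrade is that this correspondence is a DHR bimodule: for a sufficiently large interval $I$ the local conditional expectation supplies a finite set $\{b_i\}\subseteq A_I$ with $a=\sum_i b_iE(b_i^*a)$ for all $a\in A$, which is exactly the projective basis condition for the right inner product $\langle b_i\,|\,a\rangle=E(b_i^*a)$; since $b_i\in A_I$ and $A$ satisfies locality, each $b_i$ commutes with $B_{I^c}\subseteq A_{I^c}$, so the basis is localized in $I$. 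Hence $A\in\text{DHR}(B)$ is a dual Q-system.

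Connectedness is then immediate from irreducibility, since $\text{Hom}_{\text{DHR}(B)}(B,A)\cong Z_A(B)=\mathbbm{C}1$. For commutativity I would test $m\circ\sigma_{A,A}=m$ on basis vectors localized in well-separated intervals $I\ll J$. Choosing localized projective bases $\{b_i\}\subseteq A_I$ and $\{c_j\}\subseteq A_J$, the explicit description of the DHR braiding gives $\sigma_{A,A}(b_i\boxtimes c_j)=c_j\boxtimes b_i$, so $m\circ\sigma_{A,A}(b_i\boxtimes c_j)=c_jb_i$ while $m(b_i\boxtimes c_j)=b_ic_j$; these agree because $b_i$ and $c_j$ have disjoint supports and $A$ satisfies locality. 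As such vectors span $A\boxtimes_B A$, commutativity follows.

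For the converse, given a connected commutative dual Q-system $A\in\text{DHR}(B)$, \cite{MR4419534} reconstructs a C*-algebra $A\supseteq B$ with a faithful conditional expectation $E$, and connectedness gives $Z_A(B)=\mathbbm{C}1$. The work is to endow $A$ with a quasi-local structure: I would set $A_I$ to be the C*-subalgebra generated by $B_I$ together with a projective basis localized in $I$, then check independence of the basis choice and verify isotony and density. Locality $[A_I,A_J]=0$ for disjoint intervals (which, being intervals in $\mathbbm{Z}$, we may order as $I\ll J$) is precisely where commutativity is used: running the computation of the previous paragraph in reverse shows that $I$-localized and $J$-localized basis elements commute in $A$, and combining this with the localization of the bases against $B_{I^c}$ and $B_{J^c}$ yields $[A_I,A_J]=0$. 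The localized bases directly provide the local conditional expectation, and faithfulness is inherited from the Q-system.

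The main obstacle is the verification of weak relative Haag duality for the reconstructed extension, which is exactly where the strong Haag duality hypothesis on $B$ enters. Given $a\in Z_A(B_{I^c})$, I would expand $a$ in a projective basis localized in a slightly enlarged interval $I^{+R}$ and argue that the resulting $B$-valued coefficients lie in $Z_B(B_{(I^{+R})^c})$; strong Haag duality of $B$ then places these coefficients in $B_{I^{+R}}$, forcing $a\in A_{I^{+R}}$. Making this coefficient argument precise --- controlling how the localization region of the basis interacts with the centralizer condition, and ensuring the resulting spread constant is uniform in $I$ --- is the technical heart of the proof; everything else reduces to the correspondence/Q-system dictionary of \cite{MR4419534} together with the explicit form of the DHR braiding.
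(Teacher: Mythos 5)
Your forward direction coincides with the paper's proof: both invoke \cite{MR4419534} to identify $(B\subseteq A,E)$ with a dual Q-system, observe that the bases supplied by the local conditional expectation are exactly localized projective bases so that $A\in\text{DHR}(B)$, obtain connectedness from the irreducibility $Z_A(B)=\mathbbm{C}1$ forced by weak relative Haag duality, and verify commutativity by evaluating $m\circ\sigma_{A,A}$ on $b_i\boxtimes c_j$ with bases localized in well-separated intervals $I\ll J$. The converse is where you genuinely diverge. The paper does not construct the local net as generated subalgebras: it \emph{defines} $Q_I:=Z_Q(B_{I^c})$, the relative commutant of $B_{I^c}$ in the realized algebra $Q$. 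With that definition, Haag duality for the extension is nearly automatic --- once locality is established, the chain $Q_I\subseteq Z_Q(Q_{I^c})\subseteq Z_Q(B_{I^c})=Q_I$ shows the extension satisfies \emph{strong} Haag duality, not merely the weak form you aim for --- and the entire burden shifts to locality, which is proved by precisely your braiding computation after the span identity $Q_{I'}=\text{span}\{b_iB_{I'}\}$ is extracted from strong Haag duality of $B$. Your ``technical heart'' is this same mechanism in disguise, and it does close: for $a\in Z_A(B_{I^c})$ and a basis $\{b_i\}$ localized in $I'=I^{+R}$, bimodularity of $E$ gives $E(b_i^*a)b=E(b_i^*ab)=E(bb_i^*a)=bE(b_i^*a)$ for all $b\in B_{(I')^c}$, so strong Haag duality of $B$ places the coefficients in $B_{I'}$ and hence $a\in\text{span}\{b_iB_{I'}\}$; the identical computation also disposes of the isotony and basis-independence issues you flag, since an $I$-localized basis is $J$-localized for $I\subseteq J$ and its elements expand over $B_J$ against any $J$-localized basis. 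So your route is sound and completable, but the paper's relative-commutant definition buys the stronger conclusion with far less bookkeeping: the net, its Haag duality, and independence of all choices are built into the definition, leaving only the locality axiom --- the one place commutativity of the Q-system is genuinely needed --- to be checked.
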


\begin{proof}
Suppose we have a local extension $B\subseteq A$. Let $E:A\rightarrow B$ be the unique (local) conditional expectation. Then by \cite{MR4419534}, $(B\subseteq A, E)$ is a dual Q-system. By definition $A$ is a DHR correspondences. It remains to check that $A$ is commutative. Let $I<<J$ and choose bases $\{b_{i}\}\subseteq A_{I}$ and $\{c_{j}\}\subseteq A_{J}$ (which exist by our locality hypothesis). Since these are bases localized in $I$ and $J$ respectively, then

$$m\circ \sigma_{A,A}(b_{i}\boxtimes c_{j})=c_{j}b_{i}=b_{i}c_{j}=m(b_{i}\boxtimes c_{j}),$$

\noindent since $b_{i}$ and $c_{j}$ commute in $A$. As $m\circ \sigma_{A,A}=m$ on a projective basis, they agree as morphisms between correspondences.

To show the second part of the theorem, assume that $B$ satisfies strong Haag duality on the nose. Let $Q$ be a commutative dual Q-system in $\text{DHR}(B)$. Then by again by \cite{MR4419534}, $Q$ is a C*-algebra containing $B$ irreducibly with a canonical conditional expectation $E:Q\rightarrow B$.

We need to define a local structure on $Q$. For each interval $I$, set $Q_{I}:=B^{\prime}_{I^{c}}\cap B$. This is clearly a net of algebras (satisfying isotonoy, etc.) but it is not obvious at this stage that locality or (weak) Haag duality is actually satisfied.
Let $I$ and $J$ be disjoint intervals, and consider large intervals $I\subseteq I^{\prime},\ J\subseteq J^{\prime}$ with $I^{\prime}\cap J^{\prime}=\varnothing$ such that the inclusion $(B\subseteq A,E)$ has a projective bases $\{b_{i}\}$ and $\{c_{j}\}$ localized in $I^{\prime}$ and $J^{\prime}$ respectively. Since $B$ satisfies algebraic Haag duality, we see that $Q_{I^{\prime}}=\text{span}\{b_{i}B_{I^{\prime}}\}$ and $Q_{J^{\prime}}=\text{span}\{c_{j}B_{J^{\prime}}\}$. Then for $x\in Q_{I^{\prime}}, y\in Q_{J^{\prime}}$ we have $\langle b_i\ |\ x\rangle_{B}\in B_{I^{\prime}}$ and $\langle c_{j}\ |\ y\rangle_{B}\in B_{J^{\prime}}$, hence

\begin{align*}
xy&=m\left(\sum_{i} b_{i}\langle b_{i}\ |\ x\rangle_{B} \boxtimes \sum_{j} c_{j} \langle c_{j}\ |\ y\rangle_{B}\right)\\
&=m\left(\sum_{i,j} b_{i} \boxtimes c_{j} \langle b_{i}\ |\ x\rangle_{B} \langle c_{j}\ |\ y\rangle_{B})\right)\\
&=m\left(\sum_{i,j}b_{i}\boxtimes c_{j}\right)\langle b_{i}\ |\ x\rangle_{B}\ \langle c_{j}\ |\ y\rangle_{B}\\
&=m\circ \sigma_{Q,Q} \left(\sum_{i,j}b_{i}\boxtimes c_{j}\right)\langle b_{i}\ |\ x\rangle_{B}\ \langle c_{j}\ |\ y\rangle_{B}\\
&=m\circ\left(\sum_{i,j}c_{j}\boxtimes b_{i}\right)\langle b_{i}\ |\ x\rangle_{B}\ \langle c_{j}\ |\ y\rangle_{B}\\
&=m\left(\sum_{j} c_{j} \langle c_{j}\ |\ y\rangle_{B}\boxtimes\sum_{i} b_{i}\langle b_{i}\ |\ x\rangle_{B} \right)\\
&=yx.
\end{align*}

Thus $Q$ satisfies the locality axiom. Finally, by definition $B^{\prime}_{I^{c}}\cap Q=Q_{I}$, hence by locality 

$$Q_{I}\subseteq Q^{\prime}_{I^{c}}\cap Q\subseteq B^{\prime}_{I^{c}}\cap Q=Q_{I},$$

hence $Q_{I}=Q^{\prime}_{I^{c}}\cap Q$. Thus $Q$ in fact satisfies \textit{strong} Haag duality.

\end{proof}

\begin{remark}
The statement of the above theorem seems a bit awkward, since the second part of the theorem requires strong Haag duality. Following through the construction given above by applied to the case that $B$ satisfies weak algebraic Haag duality, we see that a Q-system gives rise to a net of algebras over $\mathbbm{Z}$ that satisfies the locality axiom $[A_{I},B_{J}]=0$ not for arbitrary disjoint intervals, but for intervals $I$ and $J$ that are sufficiently far apart. In some sense, this suggests that these objects are more natural to work with (rather than our strictly local version of quasi-local algebra). However, based on our concrete examples of quasi-local algebras, which in practice seem to be always strictly local, we will stick to this set-up.
\end{remark}

\subsection{Categories of bimodules from local extensions}
 
Recall that if $A$ is a (dual) Q-system in a unitary tensor category $\mathcal{B}$, we can consider the category of right $A$-modules $\mathcal{B}_{A}$. Objects are pairs $(X,m_{X})$, where $m_{X}:X\otimes A\rightarrow X$ is a morphism satisfying ``one-sided" versions of the C*-Frobenius axioms \cite{MR3308880}, i.e. unital, associative, and Frobenius. We also require ``specialness", i.e. 
$$m_{X}\circ m^{\dagger}_{X}=\nu 1_{X}$$

\noindent for some non-zero scalar $\nu$. Morphisms $f:(X,m_{X})\rightarrow (Y,m_{Y})$ are simply intertwining morphisms, i.e. $f\in \mathcal{B}(X,Y)$ such that 
$$f\circ m_{x}=n_{x}\circ (f\otimes 1_{A}).$$ 

\noindent The category of $A$-modules $_{A}\mathcal{B}$ is defined similarly. At this level of generality, $\mathcal{B}_{A}$ is only a left $\mathcal{B}$-module category, and $_{A} \mathcal{B}$ is a right $\mathcal{B}$-module category.

However, the category $_{A}\mathcal{B}_{A}$ of $A$-$A$ bimodules is again a unitary tensor category. Recall a bimodule is a triple $(X,n_{X},m_{X})$ where $n_{X}:A\otimes X\rightarrow X$ is a left $A$-module structure, $m_{X}:X\otimes A\rightarrow X$ is a right $A$-module structure, and these morphisms commute with each other in the sense that 

$$m_{X}\circ (n_{X}\otimes 1_{A})=n_{X}\circ(1_{A}\otimes m_{X}).$$

Morphisms are simultaneous left and right module intertwiners. The relative tensor product of $(X,n_{X},m_{X})$ and $(Y, n_{Y}, m_{Y})$ can be defined as the image of the idempotent 

$$p_{X\otimes_{A}Y}:=\frac{1}{\lambda} (1_{X}\otimes n_{Y})\circ (m^{*}_{X}\otimes 1_{Y})\in \mathcal{B}(X\otimes Y, X\otimes Y),$$ 

\noindent with left and right actions induced from the left and right actions of $A$ $X$ and $Y$ respectively. For more details, see also \cite{MR4419534}.

If $A\in \text{Corr}(B)$ is a (dual) Q-system in the category of $B$-$B$ correspondences, then as above $A$ is a C*-algebra containing $B$, equipped with a conditional expectation $E:A\rightarrow B$. The philosophy is that (dual) Q-systems internal to the category $\text{Cor}(B)$ correspond to \textit{actual} finite index C*-algebra extensions of $B$, and that internal bimodules between the (dual) Q-systems correspond to actual correspondences of the associated C*-algebras.

In particular, the unitary tensor category $_{A} \mathcal{B} _{A}$ embeds fully faithfully into $\text{Corr}(A)$. If $\mathcal{B}\subseteq \text{Corr}(B)$ is a full tensor subcategory of $B$-$B$ correspondences containing $A$, then we have a fully faithful embedding of unitary tensor categories $_{A}\mathcal{B}_{A}\hookrightarrow \text{Corr}(A)$, which is an immediate consequence of the statement that the C* 2-category of C*-algebras and correspondences is Q-system complete \cite{MR4419534}.

Now let $A$ be a commutative, connected (dual) Q-system in a unitary braided tensor category $\mathcal{B}$. Let $\mathcal{B}_{A}$ denote the category of right $A$-modules. Then there are two ways to embed $\mathcal{B}_{A}$ into $_{A}\mathcal{B}_{A}$ using the braiding. More explicitly, for a right $A$-module $X$ with right multiplication denoted $m_{X}:X\otimes A\rightarrow A$ we define the $A$-$A$ bimodules

$$\beta_{+}(X,m_{X})=(X, m_{X}\circ\sigma_{A,X},m_{X})$$
$$\beta_{-}(X,m_{X})=(X, m_{X}\circ \sigma^{-1}_{X,A}, m_{X}).$$

Both $\beta_{+}$ and $\beta_{-}$ are fully faithful embeddings $\mathcal{B}_{A}\hookrightarrow _{A}\mathcal{B}_{A}$ whose images are closed under $\otimes_{A}$ \cite{MR3242743}. We can transport structure to obtain two unitary tensor structures on $\mathcal{B}_{A}$. As it turns out, these are monoidally opposite equivalence, or in other words, as unitary tensor categories

$$\beta_{+}(\mathcal{B}_{A})\cong \beta_{-}(\mathcal{B}_{A})^{mp},$$

\noindent where the superscript $\textit{mp}$ denotes the monoidal opposite. As a convention, we will consider $\mathcal{B}_{A}$ as a unitary tensor category with the $\beta_{+}$ monoidal structure. 

Now assume that $B\subseteq A$ is a local extension, hence is a (dual) connected Q-system in $\text{DHR}(B)$. 
Define 

\begin{enumerate}
\item 
$\mathcal{C}_{+}:=\beta_{+}(\text{DHR}(B)_{A})$
\item 
$\mathcal{C}_{-}:=\beta_{-}(\text{DHR}(B)_{A})$
\item 
$\mathcal{C}_{0}=\mathcal{C}_{+}\cap \mathcal{C}_{-}=\beta_{\pm}(\text{DHR}(B)^{loc}_{A})$. 
\end{enumerate}

Then as described above we have fully faithful embeddings $\mathcal{C}_{\pm}\hookrightarrow \text{Cor}(A)$. We will identify $\mathcal{C}_{\pm}$ with their (replete) images. Our next goal is to characterize these with a DHR-like property.

\begin{defn} Let $B\subseteq A$ be a local inclusion.

\begin{enumerate} 

\item 
$\text{DHR}_{+}(A|B)$ is the full tensor subcategory of $A$-$A$ correspondences with the property that for all sufficiently large intervals, there exists a finite projective basis $\{c_{j}\}$ such that

\begin{enumerate}
\item 
$b c_{j}=c_{j}b$ for all $b\in B_{I^{c}}$
\item 
$ac_{j}=c_{j} a$ for all $a\in A_{J}$ with $J<I$.
\end{enumerate}

\item 
$\text{DHR}_{-}(A|B)$ is the full tensor subcategory of $A$-$A$ correspondences with the property that for all sufficiently large intervals, there exists a finite projective basis $\{c_{j}\}$ such that

\begin{enumerate}
\item 
$b c_{j}=c_{j}b$ for all $b\in B_{I^{c}}$
\item 
$ac_{j}=c_{j} a$ for all $a\in A_{J}$ with $J>I$.
\end{enumerate}

\end{enumerate}

\end{defn}

\begin{thm}
For any local inclusion $A\subseteq B$

\begin{enumerate}
\item 
$\mathcal{C}_{+}=\text{DHR}_{+}(A|B)$.
\item 
$\mathcal{C}_{-}=\text{DHR}_{-}(A|B)$.
\item 
$\mathcal{C}_{0}=\text{DHR}(A)$.
\end{enumerate}

\end{thm}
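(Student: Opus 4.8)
The plan is to prove (1) and (2) by establishing both inclusions, and to deduce (3) by intersecting the two locality conditions. Since $\beta_{-}$ is obtained from $\beta_{+}$ by replacing $\sigma_{A,X}$ with $\sigma^{-1}_{X,A}$, the arguments for $\mathcal{C}_{+}$ and $\mathcal{C}_{-}$ are mirror images under reflecting the order on $\mathbbm{Z}$, so I would carry out (1) in detail and obtain (2) by symmetry.

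For the inclusion $\mathcal{C}_{+}\subseteq \text{DHR}_{+}(A|B)$, I would start from a right $A$-module $X\in \text{DHR}(B)_{A}$ and produce an $A$-valued projective basis with the required locality. Since $X$ is a DHR bimodule over $B$, it has a $B$-valued projective basis localized in any sufficiently large interval $I$, and the local conditional expectation supplies a $B$-basis $\{u_{k}\}\subseteq A_{I}$ for $A$ over $B$. Combining these (passing through a free module $X'\otimes A$, with $X'\in\text{DHR}(B)$, of which $X$ is a summand) yields a finite $A$-valued projective basis $\{c_{j}\}$ localized in $I$. Condition (a) is then immediate from DHR locality over $B$. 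For condition (b), recall that in $\beta_{+}(X)$ the left action is $m_{X}\circ\sigma_{A,X}$; when $a\in A_{J}$ with $J<I$, the defining formula for the DHR braiding (the flip on bases with $J$ to the left of $I$) gives $a\triangleright c_{j}=c_{j}\triangleleft a$, which is exactly $ac_{j}=c_{j}a$. This is the computation that pins the direction of the braiding to the side ($J<I$) appearing in the definition of $\text{DHR}_{+}$.

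The reverse inclusion $\text{DHR}_{+}(A|B)\subseteq \mathcal{C}_{+}$ is where I expect the main difficulty. Given an $A$-$A$ correspondence $Z$ with bases $\{c_{j}\}$ satisfying (a) and (b), I would forget the left action and regard $Z$ as a right $A$-module. Condition (a) guarantees that, upon restricting the structure to $B$, the bases commute with $B_{I^{c}}$, so $Z$ lies in $\text{DHR}(B)$ and hence defines an object of $\text{DHR}(B)_{A}$. The crux is to show that the original left $A$-action agrees with the braided action $m_{Z}\circ\sigma_{A,Z}$ used by $\beta_{+}$, so that $Z\cong\beta_{+}(Z)$ as $A$-$A$ bimodules. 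Condition (b) gives this equality on the $c_{j}$ for $a$ localized strictly to the left of $I$; I would then propagate it to all of $A$ using naturality of the braiding, the right $A$-linearity of both actions, and density of the localized elements, taking $I$ arbitrarily large. Verifying that the two module actions coincide as genuine morphisms of correspondences (not merely on basis vectors) is the technical heart of the argument.

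For part (3), I would combine (1) and (2). An object of $\mathcal{C}_{0}=\mathcal{C}_{+}\cap\mathcal{C}_{-}$ satisfies both the $J<I$ and the $J>I$ versions of (b), together with (a); since every interval in $I^{c}$ lies either to the left or the right of $I$, the conjunction says precisely that for large $I$ there is a projective basis $\{c_{j}\}$ with $ac_{j}=c_{j}a$ for all $a\in A_{I^{c}}$. This is exactly the defining condition for $Z\in\text{DHR}(A)$, giving $\mathcal{C}_{0}\subseteq\text{DHR}(A)$; the converse is immediate, since a $\text{DHR}(A)$ basis commutes with all of $A_{I^{c}}$ and so satisfies (a) and both halves of (b). Equivalently, and more conceptually, the identification $\mathcal{C}_{0}=\beta_{\pm}(\text{DHR}(B)^{loc}_{A})$ already realizes $\mathcal{C}_{0}$ as the image of the local $A$-modules, for which the two braided liftings $\beta_{+}$ and $\beta_{-}$ coincide; these are exactly the $A$-$A$ correspondences whose localized bases are insensitive to the side of $A_{I^{c}}$, i.e. the DHR bimodules of $A$.
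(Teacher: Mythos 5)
Your treatment of (1) and (2) matches the paper's proof in structure and substance: for $\mathcal{C}_{+}\subseteq\text{DHR}_{+}(A|B)$ the paper also passes to the free module $X\boxtimes_{B}A$ (using $\tfrac{1}{\sqrt{\nu}}m_{X}^{\dagger}$ to realize $X$ as a summand), takes the basis $\{e_{j}\boxtimes 1\}$ with $\{e_{j}\}$ a $B$-localized basis of $X$, and verifies condition (b) by the same braiding-flip computation $\sigma_{A,X}(a\boxtimes e_{j})=e_{j}\boxtimes a$ for $a\in A_{J}$, $J<I$; for the reverse inclusion it likewise invokes Q-system completeness to place $X$ in $_{A}\text{DHR}(B)_{A}$ and reduces the identity $ax=m_{X}\circ\sigma_{A,X}(a\otimes x)$ to basis elements by right $A$-modularity, exactly as you propose. (Your extra basis $\{u_{k}\}\subseteq A_{I}$ for $A$ over $B$ is unnecessary, since $1$ is already a projective basis of $A$ as a right $A$-module, but this is cosmetic.)

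Part (3), however, contains a genuine gap. Membership in $\text{DHR}_{+}(A|B)$ and in $\text{DHR}_{-}(A|B)$ each asserts the existence of \emph{some} projective basis with a one-sided commutation property; the conjunction therefore yields two possibly different bases $\{c_{j}\}$ (commuting with $A_{<I}$) and $\{c^{\prime}_{i}\}$ (commuting with $A_{>I}$), not a single basis commuting with all of $A_{I^{c}}$. Your sentence ``the conjunction says precisely that for large $I$ there is a projective basis $\{c_{j}\}$ with $ac_{j}=c_{j}a$ for all $a\in A_{I^{c}}$'' is exactly the step that fails, and it is the step where the paper does real work: it expands $c_{j}=\sum_{i}c^{\prime}_{i}\langle c^{\prime}_{i}\,|\,c_{j}\rangle_{A}$, observes that each coefficient $\langle c^{\prime}_{i}\,|\,c_{j}\rangle_{A}$ is $B_{I^{c}}$-central and hence, by \emph{weak relative Haag duality}, lies in $A_{I^{+R}}$, and concludes that $c_{j}$ also commutes with operators sufficiently far to the right --- so the $+$-basis becomes a DHR basis for $A$ only after enlarging $I$ to $I^{+R}$. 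That your version produces a basis commuting with all of $A_{I^{c}}$ with no enlargement is itself a warning sign, since only weak duality is assumed. Your fallback ``more conceptual'' argument via $\mathcal{C}_{0}=\beta_{\pm}(\text{DHR}(B)^{loc}_{A})$ could in principle be completed (for a local module the embedding $m^{\dagger}$ intertwines both the $\beta_{+}$ and $\beta_{-}$ free-module structures, so a single projected basis inherits both one-sided properties), but as written you simply assert that local modules ``are exactly the $A$-$A$ correspondences whose localized bases are insensitive to the side of $A_{I^{c}}$,'' which is the content to be proved, not a premise; it is also a different route from the paper's, which never uses the local-module description in this proof.
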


\begin{proof}
First, we will show $C_{+}\subseteq \text{DHR}_{+}(A|B)$. If $(X,m_{X})\in \mathcal{C}_{+}=\text{DHR}(B)_{A}$, then $\frac{1}{\sqrt{\nu}} m^{\dagger}_{X}:X\rightarrow X\boxtimes_{B} A$ is an embedding of right $A$ modules into the free $A$-module object $X\boxtimes_{B} A$. This module embedding is automatically an embedding of correspondences. Since the desired properties of the basis $\{c_j\}$ passes to summands, it suffices to show the existence of the appropriate $\{c_{j}\}$ for (categorical) free right $A$-modules and then project these basis elements onto the image of $X$.

For the right $A$ module $X\boxtimes_{B} A$ (viewed as a B-A bimodule), for sufficiently large intervals $I$, we can find $B$-$B$-bimodule bases $\{e_{j}\}\subseteq X$ localized in $I$. Since $1$ is a projective basis of $A$ as a right $A$ module, the elements $\{e_{i}\boxtimes 1\}$ form a projective basis for $X\boxtimes_{B}A$ as a $B$-$A$ bimodule. This basis will satisfy the first condition by construction.

For the second condition, we note that for any $a\in A$, the left action of $a$ on $e_{j}\boxtimes 1$ is computed

$$a(e_{j}\boxtimes 1)=\sigma_{A,X}(a\boxtimes e_{j})\in X\boxtimes_{B} A$$

\noindent But if $a\in A_{J}$ with $J<I$, we have $\sigma_{A,X}(a\boxtimes e_{j})=e_{j}\boxtimes a$, thus

\begin{align*}
a(e_{j}\boxtimes 1)&=e_{j}\boxtimes a\\
&=(e_{j}\boxtimes 1)a\\
\end{align*}

\noindent Thus the set $\{c_{j}:=e_{j}\boxtimes 1\}$ satisfy the desired criteria.

Now, for any $X\in \text{DHR}_{+}(A|B)$, then $X$ viewed as a $B$-$B$ correspondence (using the conditional expectation for the right inner product) is an element of $\text{DHR}(B)$. Thus by \cite{MR4419534} $X\in _{A}\text{DHR}(B)_{A}$, the category of $A$-$A$ bimodules internal to $\text{DHR}(B)$. It remains to show that

$$ax=m_{X}\circ \sigma_{A,X}(a\otimes x).$$

\noindent Since both sides are right $A$-modular, it suffices to further assume $x$ is an element of some projective basis set and $a\in A_{I}$ for some finite interval $I$. Pick a basis $\{c_{i}\}$ satisfying the defining property of $\text{DHR}_{+}(A|B)$ for some interval $J>I$. Then 

$$m_{X}\circ \sigma_{A,X}(a\otimes c_{j})=c_{j}a=ac_{j}.$$

\noindent Thus $\text{DHR}_{+}(A|B)\subseteq \mathcal{C}_{+}$.

The proof that $\text{DHR}_{-}(A|B)= \mathcal{C}_{-}$ is directly analogous. It remains to show $\text{DHR}(A)=\mathcal{C}_{0}$. Clearly $\text{DHR}(A)\subseteq \text{DHR}_{+}(A|B)\cap \text{DHR}_{-}(A|B)=\mathcal{C}_{0}$. Now, suppose $X\in \mathcal{C}_{0}$. For sufficiently large intervals $I$, we can find projective bases $\{c_{j}\}$ and $\{c^{\prime}_{j}\}$ both of which centralize $B_{I^{c}}$, with $c_{j}$ centralizing $A_{<I}$ and $c^{\prime}_{i}$ centralizing $A_{>I}$. We claim that in fact $c_{j}$ also commutes with $A_{>I}$. Indeed, we have

$$c_{j}=\sum_{i}c^{\prime}_{i}\langle c^{\prime}_i\ |\ c_{j}\rangle_{A}.$$

\noindent But since both $c_{j}$ and the $c^{\prime}_{i}$ are $B_{I^{c}}$-central, so is $\langle c^{\prime}_i\ |\ c_{j}\rangle_{A}$. But by weak relative Haag duality, this implies $\langle c^{\prime}_i\ |\ c_{j}\rangle_{A}\in A_{I^{+R}}$. In particular, each term $\langle c^{\prime}_i\ |\ c_{j}\rangle_{A}$ is $A_{<I^{+R}}$ central, yielding the desired result. In particular $X\in \text{DHR}(A)$.

\end{proof}

\begin{remark}\label{rem:alphainduction}

As alluded to in the introduction, taking $B$-$B$ bimodules to $A$-$A$ bimodules is related to the corresponding notions in subfactor theory and conformal nets, namely $\alpha$-induction of sectors for inclusions $N \subset M$. Here the type III$_1$ factor $N$ carries a braided system $\chi$ of $N$-$N$ sectors (bimodules or endomorphisms) and $M$ as 
$N$-$N$ sector is in $\chi$ and described by a $Q$-system for $\chi$. In applications to conformal nets, $N$ is a factor obtained from a local factor $N = N(I_0)$ of a  conformally covariant quantum field theoretic net of factors $\{N(I)\}$ indexed by proper intervals $I \subset \mathbb R$ of the real line. The $N$-$N$ system $\chi$ is obtained as restrictions of Doplicher-Haag-Roberts to $N$. The braiding arises as the monodromy of moving the interval to disjoint one where there is relative commutativity and back again. The $Q$ system is commutative when the extended net $\{M(I)\}$ is local. 
In particular $\beta_\pm$ are essentially $\alpha$-induction, corresponding to taking a sector $\lambda$ as a $N$-$N$ bimodule to the $M$-$M$ bimodule
$\lambda \otimes_N M$, where the braiding is employed in two different $\pm$ ways for $M$ to act on the left. Then $\mathcal{C}_{\pm}$ correspond to the induced sectors of $M$ and $\mathcal{C}_{0}$ their intersection are the neutral, ambichiral sectors representing the braided sectors and representation theory of the DHR sectors of the extended system \cite{MR1332979, MR1617550, MR1729094, MR1777347,
MR1785458, MR4642115}.

\end{remark}

\subsection{Physical boundary subalgebras}

In general, if $\mathcal{B}$ is a unitary braided tensor category and $A$ is a commutative (dual) Q-system, there is a tensor functor 

$$\alpha:\mathcal{B}\rightarrow \beta_{-}(\mathcal{B}_{A}),$$
$$\alpha(X):=(X\otimes A, 1_{X}\otimes m).$$

This functor has a canonical \textit{central structure}, meaning there is a \textit{braided tensor functor} to the Drinfeld center 
$$\widetilde{\alpha}: \mathcal{B}\rightarrow \mathcal{Z}(\beta_{-}(\mathcal{B}_{A}))$$

\noindent such that $F\circ \widetilde{\alpha}\cong \alpha$, where $F:\mathcal{Z}(\beta_{-}(\mathcal{B}_{A}))\rightarrow \beta_{-}(\mathcal{B}_{A})$ is the forgetful functor. In the case of a local inclusion $B\subseteq A$, then with  $\mathcal{B}=\text{DHR}(B)$, we obtain the braided monoidal functor

$$\widetilde{\alpha}: \text{DHR}(B)\rightarrow \mathcal{Z}(\text{DHR}_{-}(A|B)).$$

This leads us to our definition of physical boundary subalgebra of a quasi-local algebra $A$.

\begin{defn}
Let $A$ be a quasi-local algebra over $\mathbbm{Z}$. A local subalgebra $B\subseteq A$ is called a \textit{physical boundary algebra} if 

\begin{enumerate}
    \item 
$\text{DHR}(B)$ is fusion
\item The braided tensor functor $\widetilde{\alpha}:\text{DHR}(B)\rightarrow \mathcal{Z}(\text{DHR}_{-}(A|B))$ is an equivalence. 
\end{enumerate}

\noindent In this case, we define the symmetry category $\mathcal{C}:=\text{DHR}_{-}(A|B)=\beta_{-}(\text{DHR}(B)_{A})$.
\end{defn}

Physical boundary algebras allow us to complete the SymTFT picture. $B$ describes the physical boundary while the 2+1D bulk TQFT is defined by the unitary modular tensor category $\text{DHR}(B)=\mathcal{Z}(\mathcal{C})$, which can be realized (for example)  as a Turaev-Viro theory \cite{MR1292673}. The gapped boundary is the one associated to the fusion category $\mathcal{C}$, arising from the Lagrangian algebra $A$. The fusion category $\mathcal{C}$ realizes the category of topological point defects on the gapped boundary, and implements the categorical symmetry.

\begin{ex}{\textbf{Spatial realizations, or: MPO symmetries without the symmtries}.}\label{ex:MPO} Let $\mathcal{C}$ be a unitary fusion category and let $\mathcal{M}$ be an indecomposable (right) semi-simple $\mathcal{C}$-module category. Then consider the dual fusion category $\mathcal{D}:=\mathcal{C}^{*}_{\mathcal{M}}$ consisting of $\mathcal{C}$-module endofunctors. Then by construction $\mathcal{M}$ is a left $\mathcal{D}=\mathcal{C}^{*}_{\mathcal{M}}$-module category, and thus acts on $\mathcal{M}$. This gives us a dominant unitary tensor functor $F: \mathcal{D}\rightarrow \text{End}(\mathcal{M})$, where $\text{End}(\mathcal{M})$ is the indecomposable, Morita trivial unitary multi-fusion category of all $\dagger$-endofunctor on $\mathcal{M}$. If $n=\text{rank}(\mathcal{M})$, then as multi-fusion categories $\text{End}(\mathcal{M})\cong \text{Mat}_{n}(\text{Hilb}_{f.d.})$.

Picking a strong tensor generator $X\in \mathcal{D}=\mathcal{C}^{*}_{\mathcal{M}}$, we obtain a categorical inclusion $A(\mathcal{D}, X)\hookrightarrow A(\text{Mat}_{n}(\text{Hilb}_{fd}), H)$, where $H=F(X)$. We call these categorical inclusions \textit{spatial realizations} \cite{jones2025quantumcellularautomatacategorical}, since the local algebras in $A(\text{Mat}_{n}(\text{Hilb}_{fd}), H)$ have an interpretation as all the operators on constrained subspaces of a tensor product Hilbert space. Indeed, consider the case $n=1$, which corresponds to the case when $\mathcal{M}\cong \text{Hilb}_{f.d.}$, i.e. the module category arises from a fiber functor on $\mathcal{C}$ (and hence also on $\mathcal{D}$). In this case $H\in \text{Hilb}_{f.d.}$, and the target net is simply the standard quasi-local structure on $A=\otimes_{\mathbbm{Z}} M_{d}(\mathbbm{C})$, where $d=\text{dim}(H)$. In general $H\in \text{Mat}_{n}(\text{Hilb}_{f.d.})$ can be interpreted as a \textit{bigraded} Hilbert space, graded by the simple objects in $\mathcal{M}$ (see \cite{jones2025quantumcellularautomatacategorical}).

As described in Example \ref{ex:fusionspinsubalg}, these inclusions correspond to a Lagrangian algebra $A\in \mathcal{Z}(\mathcal{D})$. Thus 
$$A(\mathcal{C}, X)\subseteq A(\text{Mat}_{n}(\text{Hilb}_{f.d.},H))$$

\noindent is a physical boundary subalgebra.

If we start from the right $\mathcal{C}$-module category $\mathcal{M}$ and object $X\in \mathcal{C}^{*}_{\mathcal{M}}$, we can apply the usual framework to construct an MPO symmetry \cite{PRXQuantum.4.020357,MR3614057}, where the individual MPO operators indexed by simple objects in $\mathcal{C}$. If we consider the constrained subspace defined by the unit MPO, then the quasi-local algebra in this sector is given by the multi-fusion categorical spin chain $A(\text{Mat}_{n}(\text{Hilb}_{f.d.}), H)$ as described above. The quasi-local algebra of symmetric operators can be identified with the fusion spin chain $A(\mathcal{D},X)$ \cite{MR4109480}.

\end{ex}

We have the following useful corollary that allows us to construct many examples of physical boundary subalgebras.

\begin{remark}\label{rem:transportsymmetry}
Suppose $B\subseteq A$ is a physical boundary subalgebra, and $\alpha: A\rightarrow A^{\prime}$ is a bounded spread isomorphism. Then by Theorem \ref{thm:transportsubalg}, $B^{\prime}:=\alpha(B)\subseteq A^{\prime}$ is a physical boundary subalgebra of $A^{\prime}$, and $\alpha$ yields a bounded spread isomorphism between the inclusions $B\subseteq A$ and $B^{\prime}\subseteq A^{\prime}$. Furthermore, $\text{DHR}(\alpha): \text{DHR}(B)\cong \text{DHR}(B^{\prime})$ is a braided equivalence, and clearly $\text{DHR}(\alpha)(A)\cong A^{\prime}$ as algebra objects in $\text{DHR}(B^{\prime})$, in particular this gives us an identification between the symmetry categories in both cases.
\end{remark}

\begin{remark}{\textbf{Left and right movers?}} We actually have \textit{two} different copies of the symmetry category $\mathcal{C}$ (or rather $\mathcal{C}$ and $\mathcal{C}^{mp})$ that naturally arise in our setting. Indeed, we have

$$_{A} \text{DHR}(B)_{A}\cong \text{DHR}_{-}(A|B)\boxtimes \text{DHR}_{+}(A|B)\cong \mathcal{C}\boxtimes \mathcal{C}^{mp}\subseteq \text{Corr}(A).$$

\noindent The reason for this can be seen in our physical picture for this bimodules as in Section \ref{subsec:physpic}. In this case, we incorporate our topological boundary at the top of the diagram, and the objects correspond to topological defects localized on the topological boundary. To turn this into a bimodule over the algebra $A$, we have to `move the defect out of the way, as in the definition of the left action for a DHR bimodule. However, we have \textit{two inequivalent ways} of moving the defect out of the way, namely to the left and to the right. Thus, the single defect $X$ corresponds to two different bimodules over $A$, depending on our choice of moving the defect. See also \cite{MR1332979, MR1617550, MR1729094, MR1777347,
MR1785458, MR4642115}.

\end{remark}

\subsection{Symmetries as channels}\label{sebsec:SymasChan}

In the previous section, we showed how to reconstruct fusion category defect symmetry from a local subalgebra $B\subseteq A$. However, to make contact with the standard literature on categorical symmetries of spin chains, we need to find something closer to realizing symmetries as operators. We will realize this by implementing the fusion ring as \textit{quantum channels}, manifested in our infinite volume limit via unital completely positive (u.c.p) maps on the quasi-local algebra $A$.

Suppose we have a physical boundary subalgebra $B\subseteq A$. Consider the set 

$$\text{Ch}(A\ |\ B):=\{\Psi: A\rightarrow A\ :\ \Psi\ \text{is ucp and}\ \Psi(b_{1}ab_{2})=b_{1}\Psi(a)b_{2} \}. $$

\noindent $\text{Ch}(A\ |\ B)$ is a monoid under composition. It is also a convex space in the obvious way, and composition is ``bilinear" or ``affine" with respect to composition.

The idea is that this convex monoid captures the symmetries of the extension $B\subseteq A$ in a manner directly analogous to the Galois group of a field extension. This is important in the context of categorical symmetries, since unlike bimodules, quantum channels actually act on operators and states, allowing for a natural notion of symmetric states.

\begin{remark}If $\Psi\in \text{Ch}(A\ |\ B) $, then $\Psi$ has \textit{bounded spread}, i.e. $\Psi(A_{I})\subseteq A_{I^{+R}}$ for all intervals $I$. Indeed, since $B\subseteq A$ is local, $B^{\prime}_{I^{c}}\cap A\subseteq A_{I^{+R}}$. Thus if $a\in A_{I}$ and $b\in B_{I^{c}}$,

$$b\Psi(a)=\Psi(ba)=\Psi(ab)=\Psi(a)b,$$

\noindent hence $\Psi(a)\in A_{I^{+R}}$. If $B\subseteq A$ satisfies strong relative Haag duality so that $R=0$, we see that $\Psi(a)\in A_{I}$, in which case $\Psi$ has \textit{spread 0}.

\end{remark}

A class of examples of abstract convex monoids arise from the fusion rings of fusion categories. Let $\mathcal{C}$ be a fusion category, and consider the simplex $S(\mathcal{C})$ spanned by symbols $\{\lambda_{X}\}_{X\in \text{Irr}(\mathcal{C})}$. Define the multiplication on simplex vertices 

$$\lambda_{X}\lambda_{Y}:=\sum_{Z\in \text{Irr}(\mathcal{C})} \frac{d_{Z}}{d_{X}d_{Y}}N^{Z}_{XY}\lambda_{Z},$$

\noindent where $N^{Z}_{XY}$ are the fusion rules and $d_{X}$ denotes the quantum dimension of the object $X$. The affine extension to $S(\mathcal{C})$ is a convex monoid. We have the following theorem

\begin{thm}
Let $B\subseteq A$ be a physical boundary subalgebra with categorical symmetry $\mathcal{C}$. Then  $\text{Ch}(A\ |\ B)\cong S(\mathcal{C})$ as convex monoids. 
\end{thm}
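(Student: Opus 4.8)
The plan is to produce an explicit affine bijection $\Phi\colon S(\mathcal{C})\to\text{Ch}(A\mid B)$ sending each vertex $\lambda_X$ to a canonical channel $\Psi_X$ built from the simple object $X\in\mathcal{C}=\text{DHR}_+(A\mid B)$, and then to verify that $\Phi$ respects both the convex and the monoid structures. First I would use the full embedding $\mathcal{C}\hookrightarrow\text{Corr}(A)$ already established: a simple $X\in\text{Irr}(\mathcal{C})$ is a dualizable $A$-$A$ correspondence with conjugate $\bar X$ and standard solutions $R_X,\bar R_X$ to the conjugate equations (normalized so $R_X^*R_X=\bar R_X^*\bar R_X=d_X$). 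Writing $\pi$ for the left $A$-action on $X\boxtimes_A\bar X$, define the unnormalized transfer map $\hat\Psi_X(a):=\bar R_X^*\,\pi(a)\,\bar R_X$, which is completely positive with $\hat\Psi_X(\mathbbm{1})=\bar R_X^*\bar R_X=d_X\mathbbm{1}$, and set $\Psi_X:=d_X^{-1}\hat\Psi_X$. Since each $X$ carries localized projective bases of $\text{DHR}_+$-type (commuting with $B_{I^c}$), the resulting $\Psi_X$ is $B$-bimodular, hence lies in $\text{Ch}(A\mid B)$, with $\Psi_{\mathbbm{1}}=\text{id}_A$.

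Next I would establish the two compatibilities that make $\Phi$ a homomorphism of convex monoids. Additivity of standard solutions under direct sums gives $\hat\Psi_{U\oplus V}=\hat\Psi_U+\hat\Psi_V$, and the compatibility of standard solutions with the tensor product gives multiplicativity $\hat\Psi_X\circ\hat\Psi_Y=\hat\Psi_{X\boxtimes_A Y}=\sum_Z N^Z_{XY}\hat\Psi_Z$. Renormalizing, this becomes
$$\Psi_X\circ\Psi_Y=\frac{1}{d_Xd_Y}\sum_Z N^Z_{XY}\,\hat\Psi_Z=\sum_{Z\in\text{Irr}(\mathcal{C})}\frac{d_Z}{d_Xd_Y}N^Z_{XY}\,\Psi_Z,$$
which is exactly the defining multiplication of $S(\mathcal{C})$. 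Extending $\lambda_X\mapsto\Psi_X$ affinely therefore yields a well-defined convex-monoid homomorphism $\Phi\colon S(\mathcal{C})\to\text{Ch}(A\mid B)$.

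For injectivity I would show the $\{\hat\Psi_X\}_{X\in\text{Irr}(\mathcal{C})}$ are linearly independent, which I expect to follow from the finite-index structure: the space $V$ of $B$-bimodular bounded maps $A\to A$ is naturally identified with $\text{End}_{\text{DHR}(B)}(A)$, the endomorphisms of $A$ as the Lagrangian algebra object in $\text{DHR}(B)=\mathcal{Z}(\mathcal{C})$. Writing $A\cong I(\mathbbm{1})$ for the induction of the unit and using Frobenius reciprocity, $\text{End}_{\mathcal{Z}(\mathcal{C})}(I(\mathbbm{1}))\cong\text{Hom}_{\mathcal{C}}(\bigoplus_X X\otimes\bar X,\mathbbm{1})\cong\bigoplus_{X}\mathbb{C}$, so $\dim V=|\text{Irr}(\mathcal{C})|$, and the $\hat\Psi_X$ are a basis. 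For surjectivity I would run KSGNS on an arbitrary $\Psi\in\text{Ch}(A\mid B)$ to produce an $A$-$A$ correspondence $X_\Psi$ with a cyclic vector implementing $\Psi$; the already-proven bounded spread of $\Psi$ together with $B$-bimodularity and the local basis of the inclusion $(B\subseteq A,E)$ should force $X_\Psi$ to admit localized projective bases, i.e. $X_\Psi\in\text{DHR}_+(A\mid B)=\mathcal{C}$. Decomposing $X_\Psi$ into simples and splitting the cyclic vector accordingly then writes $\Psi=\sum_X c_X\Psi_X$ with $c_X\ge 0$ and, by unitality, $\sum_X c_X=1$, exhibiting $\Psi$ as a point of the image simplex.

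The hard part will be this last step: rigorously transferring the analytic data (bounded spread and $B$-bimodularity of $\Psi$) into the categorical data (existence of $\text{DHR}_+$-localized projective bases for the KSGNS correspondence $X_\Psi$), and checking the decomposition coefficients are genuinely nonnegative and normalized. Everything else is bookkeeping with standard solutions and quantum-dimension normalizations; by contrast, the identification $X_\Psi\in\mathcal{C}$ is precisely where the locality axioms of a physical boundary subalgebra must be used, and it is the crux that pins $\text{Ch}(A\mid B)$ to the finite-dimensional simplex spanned by the $\Psi_X$ rather than some larger convex body. Combining injectivity, surjectivity, and the monoid computation then yields $\text{Ch}(A\mid B)\cong S(\mathcal{C})$ as convex monoids.
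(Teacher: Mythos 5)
There is a genuine gap, and it sits at the very foundation of your construction. You take $R_X,\bar R_X$ to be standard solutions of the conjugate equations \emph{in} $\mathrm{Corr}(A)$, i.e.\ morphisms of $A$-$A$ correspondences. But then $\bar R_X\colon A\rightarrow X\boxtimes_A \bar X$ is an $A$-$A$ bimodule map, so the vector $\bar R_X(1)$ is central for the full algebra $A$: $a\triangleright\bar R_X(1)=\bar R_X(a)=\bar R_X(1)\triangleleft a$ for all $a\in A$. Consequently
$$\hat\Psi_X(a)=\bar R_X^{*}\bigl(a\triangleright\bar R_X(1)\bigr)=\bar R_X^{*}\bigl(\bar R_X(1)\triangleleft a\bigr)=(\bar R_X^{*}\bar R_X)(1)\,a=d_X\,a,$$
so every $\Psi_X=d_X^{-1}\hat\Psi_X$ is the identity channel. (Sanity check: for $B=A^{G}\subseteq A$ and $X$ the invertible bimodule attached to $g\in G$, your formula returns $\mathrm{id}_A$, whereas the symmetry channel must be the automorphism $\alpha_g$.) Your fusion-rule verification is then vacuous (the coefficients $\sum_Z \frac{d_Z}{d_Xd_Y}N^Z_{XY}$ sum to $1$), injectivity fails outright, and the whole map $\Phi$ collapses to the constant $\mathrm{id}_A$. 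The error is conceptual: a $B$-bimodular-but-not-$A$-bimodular channel cannot be extracted from data that is natural for the $A$-$A$ structure; it must come from vectors or bases that are central only for $B$, i.e.\ from the structure of $A$ as a Q-system internal to $\mathrm{DHR}(B)$.

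This is exactly how the paper proceeds, and it also absorbs the step you flag as the unresolved ``hard part'' (surjectivity via KSGNS and localized bases). The paper works with the commutative finite-dimensional C*-algebra $(\mathrm{End}_{\mathrm{DHR}(B)}(A),\ast)$, with convolution $f\ast g:=m\circ(f\otimes g)\circ m^{\dagger}$ built from the Q-system multiplication, and quotes the cp-multiplier theorem from prior work: the positive elements of this algebra are \emph{precisely} the $B$-$B$ bimodular cp maps on $A$. Commutativity gives a basis of minimal orthogonal projections $e_X$, indexed by $\mathrm{Irr}(\mathcal{C})$ because $A$ is Lagrangian with $\mathrm{DHR}(B)_A\simeq\mathcal{C}$; setting $\lambda_X:=d_X^{-2}e_X$ gives unital maps satisfying the normalized fusion rules, and unitality of a general element of $\mathrm{Ch}(A\,|\,B)$ forces it to be a convex combination of the $\lambda_X$ --- surjectivity in one stroke, with positivity and normalization of the coefficients automatic. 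Note that your injectivity computation $\mathrm{End}_{\mathcal{Z}(\mathcal{C})}(I(\mathbbm{1}))\cong\bigoplus_{X\in\mathrm{Irr}(\mathcal{C})}\mathbb{C}$ is correct and is in fact the same commutative algebra the paper uses; the repair is therefore not to patch your KSGNS argument but to \emph{define} the channels from the minimal projections of that convolution algebra (equivalently, from $B$-localized projective bases) rather than from conjugate equations in $\mathrm{Corr}(A)$.
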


\begin{proof}
Given a commutative, connected $Q$-system $A$ in a braided fusion category $\mathcal{B}$, there is a commutative C*-algebra $(\text{End}(A), \ast)$ \cite{MR4357481, schatz2024boundarysymmetries21dtopological, huang2024phasegroupcategorybimodule}. The elements are simply endomorphisms of $A$ in $\mathcal{B}$, and 

$$f*g:= m\circ (f\otimes g)\circ m^{\dagger}.$$

From a subfactor perspective, this is simply multiplication of the \textit{Fourier transform} of the 2-box space in the associated subfactor planar algebra \cite{MR4374438}. If $B\subseteq A$ is a physical boundary subalgebra, then viewing $A\in \text{DHR}(B)$ as a Q-system, the positive elements of $(\text{End}(A), \ast)$ are precisely the $B$-$B$ bimodular cp-multipliers on the C*-algebra $A$ (see \cite{huang2024phasegroupcategorybimodule,HP23,MR3687214}

Since the algebra is commutative, there exists a basis of minimal orthogonal projections, and the positive elements are precisely the convex span of these projections. Since $A$ is a Lagrangian algebra with $\beta_{-}(\text{DHR}(B)_{A})\cong \mathcal{C}$, then there is an identification of these minimal projections with the simple objects of $\mathcal{C}$, which we call $e_{x}$ (see \cite{MR4357481}). If we set $\lambda_{X}:=\frac{1}{d^{2}_{X}} e_{x}$, these will satisfy $\lambda_{X}\circ \lambda_{Y}=\sum_{Z} \frac{d_{Z}}{d_{X}d_{Y}} N^{Z}_{XY} \lambda_{Z}$. By construction, these are unital, hence $\lambda_{X}\in \text{Ch}(A\ |\ B)$ for all $X\in \text{Irr}(\mathcal{C})$. Since arbitrary cp-multipliers are positive, linear sums of the $\lambda_{X}$ and the $\lambda_{X}$ are unital, then the elements of $\text{Ch}(A\ |\ B)$ are precisely the \textit{convex sums} of the $\lambda_{X}$. Thus $\text{Ch}(A\ |\ B)\cong S(\mathcal{C})$ as convex monoids. 
\end{proof}

We note that in particular, the conditional expectation $E:A\rightarrow B\hookrightarrow A$ is a $B$-$B$ bimodular quantum channel, which can be written $\iota\circ \iota^{\dagger}$. Thus $E$ can be written as a convex combination of the $\lambda_{X}$, and solving for these coefficients in terms of the formulas of \cite{MR4357481},

$$E=\sum_{X\in \text{Irr}(\mathcal{C})} \frac{d^{2}_{X}}{\text{Dim}(\mathcal{C})} \lambda_{X}.$$

\noindent where $\text{Dim}(\mathcal{C})=\sum_{X\in \text{Irr}(\mathcal{C})} d^{2}_{X}$. Indeed, it is easy to see that the right-hand side is the unique convex combination of the $\lambda_{X}$ that leaves composition by any individual $\lambda_{X}$ invariant, which is clearly satisfied by $E$.

\begin{cor} Let $B\subseteq A$ be a physical boundary subalgebra with categorical symmetry $\mathcal{C}$. Let $$A^{\mathcal{C}}:=\{a\in A\ : \Phi(a)=a\ \text{for all}\ \Phi\in \text{Ch}(A\ |\ B)\}$$ be the set of fixed point operators. Then $B=A^{\mathcal{C}}$.
    
\end{cor}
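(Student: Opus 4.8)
The plan is to prove the two inclusions $B\subseteq A^{\mathcal{C}}$ and $A^{\mathcal{C}}\subseteq B$ separately, leveraging the identification $\text{Ch}(A\,|\,B)\cong S(\mathcal{C})$ from the preceding theorem together with the explicit expression of the conditional expectation $E$ as a symmetry channel.

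For the inclusion $B\subseteq A^{\mathcal{C}}$, I would use only the defining properties of an element $\Phi\in\text{Ch}(A\,|\,B)$, namely that $\Phi$ is unital and $B$-$B$ bimodular. For $b\in B$, applying bimodularity with $a=1_A$ gives $\Phi(b)=\Phi(b\cdot 1_A)=b\,\Phi(1_A)=b$, where the last equality is unitality. Thus every symmetry channel fixes $B$ pointwise, so $B\subseteq A^{\mathcal{C}}$.

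For the reverse inclusion, the key observation is that the conditional expectation $E$ is itself an element of $\text{Ch}(A\,|\,B)$. Indeed, the displayed formula exhibits $E=\sum_{X\in\text{Irr}(\mathcal{C})}\frac{d^{2}_{X}}{\text{Dim}(\mathcal{C})}\lambda_{X}$ as a convex combination of the extreme points $\lambda_{X}$: the coefficients are positive and sum to one since $\text{Dim}(\mathcal{C})=\sum_{X}d^{2}_{X}$, and by the preceding theorem such convex combinations are exactly the elements of $\text{Ch}(A\,|\,B)$. Hence if $a\in A^{\mathcal{C}}$, then in particular $E(a)=a$; but $E(A)\subseteq B$ by the definition of a conditional expectation, so $a=E(a)\in B$.

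The only nontrivial step is recognizing that $E$ is a symmetry channel rather than merely a map built from them; once this is in hand the reverse inclusion is immediate, and the forward inclusion follows purely formally from bimodularity and unitality. I therefore anticipate no genuine obstacle, the substance of the argument having already been carried out in the identification $\text{Ch}(A\,|\,B)\cong S(\mathcal{C})$ and the computation of the coefficients of $E$.
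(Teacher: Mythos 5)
Your proof is correct and follows essentially the same route as the paper: the forward inclusion is the paper's ``clearly $\lambda_{X}(a)=a$'' made explicit via unitality and bimodularity, and the reverse inclusion is exactly the paper's observation that $E\in\text{Ch}(A\,|\,B)$ (noted there just before the corollary, both directly as a $B$-$B$ bimodular ucp map and via the convex combination $E=\sum_{X}\frac{d_{X}^{2}}{\text{Dim}(\mathcal{C})}\lambda_{X}$), so $a=E(a)\in B$ for any fixed point $a$.
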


\begin{proof}
If $a\in B$, then clearly $\lambda_{X}(a)=a$ for each $X\in \text{Irr}(\mathcal{C})$. Conversely, if $a\in A^{\mathcal{C}}$, $E(a)=a$, which implies $a\in B$.
\end{proof}

\begin{remark}{\textbf{Recovering MPOs from the symmetric subalgebra}.}
Given a physical boundary subalgebra $B\subseteq A$ with categorical symmetry $\mathcal{C}$, we obtain for each simple object $X\in \mathcal{C}$ a $B$-$B$ bimodular u.c.p. map $\lambda_{X}:A\rightarrow A$, satisfying the normalized fusion rules. In the case of MPO symmetries constructed from the setup of Example \ref{ex:MPO}, these quantum channels correspond to the channels built from MPOs as in \cite{jones2025quantumcellularautomatacategorical}. Thus we recover the MPO action (although at the level of channels in the thermodynamic limit rather than directly as operators on the periodic spin chain.)

\end{remark}

\begin{defn} Let $B\subseteq A$ be a physical boundary algebra with symmetry category $\mathcal{C}$. We say $\mathcal{C}$ is \textit{on-site} if, for each $X\in \text{Irr}(\mathcal{C})$, the symmetry channel $\lambda_{X}$ satisfies $\lambda_{X}(A_{I})\subseteq A_{I}$.
\end{defn}

\begin{thm}
If $B\subseteq A$ is a physical boundary subalgebra and $A$ satisfies strong Haag duality, then the symmetry is on-site if and only if $B\subseteq A$ satisfies strong relative Haag duality.
\end{thm}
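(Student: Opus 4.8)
The plan is to prove the two implications separately, and to observe at the outset that strong Haag duality of $A$ is needed only for the forward direction ``on-site $\Rightarrow$ strong relative Haag duality''; the converse is essentially the content of the Remark preceding the statement. For that easy direction, suppose $B\subseteq A$ satisfies strong relative Haag duality, so $Z_{A}(B_{I^{c}})\subseteq A_{I}$; since $A_{I}\subseteq Z_{A}(B_{I^{c}})$ always holds by locality, in fact $Z_{A}(B_{I^{c}})=A_{I}$. For $a\in A_{I}$ and $b\in B_{I^{c}}$ we have $[a,b]=0$, so bimodularity of the channel gives $b\,\lambda_{X}(a)=\lambda_{X}(ba)=\lambda_{X}(ab)=\lambda_{X}(a)\,b$, whence $\lambda_{X}(a)\in Z_{A}(B_{I^{c}})=A_{I}$. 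Thus $\lambda_{X}(A_{I})\subseteq A_{I}$, reproducing the Remark, and using nothing about $A$.

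For the main direction, assume the symmetry is on-site and take $a\in Z_{A}(B_{I^{c}})$; I want $a\in A_{I}$. Since $A$ satisfies strong Haag duality, $Z_{A}(A_{I^{c}})=A_{I}$, so it suffices to upgrade the hypothesis ``$a$ commutes with $B_{I^{c}}$'' to ``$a$ commutes with $A_{I^{c}}$'', i.e.\ to show $Z_{A}(B_{I^{c}})\subseteq Z_{A}(A_{I^{c}})$. The crucial leverage is that $a$ is already finitely localized: by weak relative Haag duality with constant $R$ we have $a\in A_{I^{+R}}=:A_{L}$ with $L$ a finite interval. The main obstacle is the ``algebraic entanglement'' of abstract spin chains emphasized after Example~\ref{ex:SymSubalg}: local algebras do not factor across a cut, so one cannot naively split $a$ off from operators supported in the collar $I^{+R}\setminus I$. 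The resolution is to exploit the freedom, guaranteed by locality of $E$, to place the projective basis of the conditional expectation \emph{far away} from $a$, on the appropriate side of the cut.

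Concretely, it suffices to check $[a,c]=0$ on a dense set of local $c\in A_{J}$ with $J\subseteq I^{c}$, and any such interval $J$ lies entirely in one of the two rays of $I^{c}$. Fixing the ray containing $J$, I would choose a large interval $J_{0}$ in the same ray, farther from $I$ than both $J$ and $L$, so that the interval hull $K$ of $J\cup J_{0}$ still lies inside $I^{c}$ while $J_{0}\cap L=\varnothing$. By locality of $E$ there is a finite projective basis $\{b_{i}\}\subseteq A_{J_{0}}$ with $x=\sum_{i}b_{i}E(b_{i}^{*}x)$ for all $x\in A$. Two facts then combine: first, $[a,b_{i}]=0$ since $a\in A_{L}$ and $b_{i}\in A_{J_{0}}$ with $L\cap J_{0}=\varnothing$; second, $b_{i}^{*}c\in A_{K}$ with $K\subseteq I^{c}$, and on-site-ness gives $E(A_{K})\subseteq A_{K}\cap B=B_{K}\subseteq B_{I^{c}}$ (because each $\lambda_{X}$, hence $E=\sum_{X}\frac{d_{X}^{2}}{\text{Dim}(\mathcal{C})}\lambda_{X}$, preserves $A_{K}$ and lands in $B$), so $E(b_{i}^{*}c)\in B_{I^{c}}$ commutes with $a$.

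With these in hand the computation is one line: expanding $c=\sum_{i}b_{i}E(b_{i}^{*}c)$ and using $a b_{i}=b_{i}a$ together with $a\,E(b_{i}^{*}c)=E(b_{i}^{*}c)\,a$ gives $a\,b_{i}E(b_{i}^{*}c)=b_{i}E(b_{i}^{*}c)\,a$ for each $i$, hence $[a,c]=0$. By norm-continuity of $x\mapsto[a,x]$ and density of the local elements this yields $a\in Z_{A}(A_{I^{c}})$, so $a\in A_{I}$ by strong Haag duality of $A$, and therefore $Z_{A}(B_{I^{c}})\subseteq A_{I}$. I expect the genuinely delicate point to be exactly the bookkeeping in the previous paragraph, namely ensuring simultaneously that the far-away basis is disjoint from the support of $a$ and that the hull $K$ of $J\cup J_{0}$ avoids $I$ so that the expansion coefficients remain in $B_{I^{c}}$; everything else is formal.
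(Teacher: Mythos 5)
Your proposal is correct and takes essentially the same route as the paper's proof: the easy direction is the identical bimodularity computation, and for the converse the paper likewise localizes $a\in A_{I^{+R}}$ via weak relative Haag duality, expands a test operator in $A_{J}$ ($J\subseteq I^{c}$) against a projective basis chosen in an interval on the same side of $I$ at distance greater than $R$, and uses on-siteness of $E=\sum_{X\in \text{Irr}(\mathcal{C})}\frac{d_{X}^{2}}{\text{Dim}(\mathcal{C})}\lambda_{X}$ to place the expansion coefficients in $B_{I^{c}}$, where they commute with $a$. The only cosmetic differences are that you spell out the density/norm-continuity step and note explicitly that strong Haag duality of $A$ is needed only in the forward direction, points the paper leaves implicit.
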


\begin{proof}
Clearly if $B\subseteq A$ satisfies strong relative Haag duality, then since each $\lambda_{X}$ is $B$-$B$ bimodular,  for any $b\in B_{I^{c}}$ and $a\in A_{I}$, $b\lambda_{X}(a)=\lambda_{X}(ba)=\lambda_{X}(ab)=\lambda_{X}(a)b$ so $\lambda_{X}(a)\in A_{I}$.

Conversely, suppose $A$ has strong Haag duality and the symmetry channels are on-site. This implies the unique local conditional expectation $E=\sum_{X\in \text{Irr}(\mathcal{C})} \frac{d^{2}_{X}}{\text{Dim}(\mathcal{C})} \lambda_{X}$ is on-site (i.e. $E(A_{I})=B_{I}$). Now, let $I$ be any interval and $J\subseteq I^{c}$. We claim that if $a\in Z_{A}(B_{I^{c}})$, then $a\in Z_{A}(A_{J})$. Since this is true for any interval, this will imply $a\in Z_{A}(A_{I^{c}})$, and by strong Haag duality, we can conclude $a\in A_{I}$, thus obtaining strong relative Haag duality.

To see our claim, note that by weak relative Haag duality, $a\in A_{I^{+R}}$ for some $R$. Choose a projective basis $\{e_{i}\}\subseteq A_{J^{\prime}}$ for some interval $J^{\prime}\subseteq I^{c}$ which is on the same side of $I$ as $J$, but whose distance from $I$ is greater than $R$. Then for any $a^{\prime}\in A_{J}$, we have $a^{\prime}=\sum_{i} e_{i}E(e^{*}_{i}a^{\prime})$. By the on-site condition, we see that $E(e^{*}_{i}a)\in B_{J\vee J^{\prime}}\subseteq B_{I^{c}}$, and $e_{i}$ commutes with $a\in A_{I^{+R}}$. 
Thus $[a,a^{\prime}]=0$, so by strong Haag duality, $a\in A_{I}$, hence $Z_{A}(B_{I^{c}})\subseteq A_{I}$, and the other inclusion is trivial.
\end{proof}

This result implies that for categorical symmetries of tensor product spin chains, on-site is equivalent to strong relative Haag duality of the symmetric operators.

\section{Categorical symmetries of tensor product quasi-local algebras}\label{sec:tensorprod}

Fusion categories will generically admit actions on \textit{some} quasi-local algebra, e.g. anyon chains, which are captured in the typical MPO picture. In this section, we investigate a question that has naturally arisen from the MPO perspective: which fusion categories can act by symmetries on tensor product quasi-local algebras, i.e. on UHF algebras $A:=\otimes_{\mathbbm{Z}} M_{d}(\mathbbm{C})$ with the standard quasi-local structure? 

The first observation is that there is a $K$-theoretic obstruction which drastically restricts which fusion categories can act on concrete spin chains in principle.

\begin{prop}\label{prop:obstruction}
If $\mathcal{C}$ is realized by symmetries of a tensor product spin chain, then the dimension of every object of $\mathcal{C}$ is an integer.
\end{prop}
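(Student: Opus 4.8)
The plan is to turn the embedding $\mathcal{C}\hookrightarrow \text{Bim}(A)$ from the Symmetry-from-SymTFT theorem into a $K$-theoretic constraint on quantum dimensions, using that $A=\otimes_{\mathbbm{Z}} M_d(\mathbbm{C})$ is a UHF algebra whose ordered $K_0$ is a subring of $\mathbbm{Q}$. Concretely, fix a simple object $X\in\text{Irr}(\mathcal{C})$ and view it, via the canonical embedding, as an $A$-$A$ correspondence. Since objects of $\mathcal{C}=\text{DHR}_+(A|B)$ carry finite localized projective bases, the underlying right $A$-module $X_A$ is finitely generated projective, so it determines a class $[X_A]\in K_0(A)$. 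The algebra $A$ has a unique tracial state $\tau$, and the induced map $\tau_\ast\colon K_0(A)\to\mathbbm{R}$ is injective with image the localization $\mathbbm{Z}[1/d]\subseteq\mathbbm{Q}$. The whole point will be to identify $\tau_\ast[X_A]$ with the quantum dimension $d_X$.

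The key step is to show that $X\mapsto\tau_\ast[X_A]$ is a character of the fusion ring of $\mathcal{C}$ taking positive real values. Additivity under direct sums is immediate, and for multiplicativity, $\tau_\ast[(X\boxtimes_A Y)_A]=\tau_\ast[X_A]\,\tau_\ast[Y_A]$, the cleanest route is to pass to the GNS closure: because $A$ is UHF with unique trace, $\pi_\tau(A)''$ is the hyperfinite II$_1$ factor $R$, each correspondence induces an $R$-$R$ bimodule by completion, the relative tensor product $\boxtimes_A$ corresponds to $\boxtimes_R$, and $\tau_\ast[X_A]$ equals the right von Neumann dimension $\dim_{-R}$ of the induced bimodule. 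Right dimensions of bimodules are multiplicative under $\boxtimes_R$, which yields the claim. Thus $X\mapsto\tau_\ast[X_A]$ is a positive character of the fusion ring, and by Frobenius--Perron uniqueness it must be the Frobenius--Perron character $\operatorname{FPdim}$; since $\mathcal{C}$ is a unitary fusion category it is pseudounitary, so $\operatorname{FPdim}(X)=d_X$. Hence $d_X=\tau_\ast[X_A]\in\mathbbm{Z}[1/d]\subseteq\mathbbm{Q}$.

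To finish, recall that $d_X$ is an algebraic integer: it is an eigenvalue of the integer fusion matrix $N_X=(N^Z_{XY})_{Y,Z}$, hence a root of a monic polynomial over $\mathbbm{Z}$. An algebraic integer that happens to be rational lies in $\mathbbm{Z}$, and we have just shown $d_X\in\mathbbm{Q}$; therefore $d_X\in\mathbbm{Z}$ for every simple $X$, which is exactly integrality of $\mathcal{C}$.

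I expect the main obstacle to be the middle paragraph: verifying rigorously that $\tau_\ast[X_A]$ is genuinely multiplicative under the tensor product transported from $\mathcal{C}$ (the $\beta_+$ structure) and correctly normalized so that $\tau_\ast[A_A]=1$. One must check that forming the right-module class, completing to the hyperfinite II$_1$ factor, and taking right dimensions are all compatible with $\boxtimes_A$; once this is in place, identifying the resulting positive character with $\operatorname{FPdim}$ and running the algebraic-integer argument are routine.
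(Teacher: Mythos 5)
Your proof is correct and takes essentially the same approach as the paper: the paper likewise uses the tensor functor $\mathcal{C}\rightarrow \text{Bim}(A)$ to make $K_{0}(A)\cong \mathbbm{Z}[1/d]$ (with its unique state) an ordered module over the fusion ring, invokes \cite[Proposition 5.3]{MR4419534} to see that each object acts by multiplication by a positive real scalar, i.e.\ a normalized positive dimension function, and concludes because the only algebraic integers that multiplicatively preserve the $d$-adic rationals are integers. The only difference is that where the paper cites that proposition for the key multiplicativity step, you verify it by hand via the trace-GNS completion to the hyperfinite II$_{1}$ factor and multiplicativity of right von Neumann dimension, then identify the resulting positive character with $d_{X}=\operatorname{FPdim}(X)$ by Frobenius--Perron uniqueness; this is a sound, self-contained substitute.
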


\begin{proof}
Since we have a tensor functor $\mathcal{C}\rightarrow \text{Bim}(A)$, the operator $K$-group $K_{0}(A)$ becomes an ordered module for the fusion ring $\mathcal{C}$. But $K_{0}(A)$, as an ordered abelian group, is canonically isomorphic to  the d-adic rationals \cite{MR1783408}. In fact, there exists a unique state $\phi$ on $K_{0}(A)$ which is precisely this identification. By \cite[Proposition 5.3]{MR4419534} for any object $X\in \mathcal{C}$ its action on $K_{0}$ must be multiplication by some positive real number yielding a normalized, positive dimension function. But the only algebraic integers that leave all d-adic rational invariant under multiplication are integers.
    \end{proof}

There are many integral fusion categories. The most basic (and ubiquitous) examples are the \textit{group theoretical} fusion categories, which defined to be unitary fusion categories Morita equivalent to $\text{Vec}(G,\omega)$ for some finite group $G$ and $\omega\in Z^{3}(G,\text{U}(1))$. Integral fusion can also arise as representation categories of finite dimensional C*-Hopf algebras, and this is precisely the class that admit fiber functors \cite{MR3204665}, which we discuss below. A natural question raised by the above result is, which integral fusion categories admit actions on tensor product spin chains? The next result answers this question.

\begin{thm}\label{thm:intrealiz}
    Let $\mathcal{C}$ be an integral fusion category, and let $d=\sum_{X\in \text{Irr}(\mathcal{C})} d^{2}_{X}$. Then $\mathcal{C}$ is realized as symmetries on the tensor product quasi-local algebra with local Hilbert space dimension $d$.
\end{thm}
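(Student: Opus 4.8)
The plan is to realize $\mathcal{C}$ inside an anyon chain via the spatial realizations of Example~\ref{ex:MPO}, to show that the ambient anyon chain is abstractly a UHF algebra carrying a net that is bounded spread isomorphic to $\otimes_{\mathbbm{Z}} M_{d}(\mathbbm{C})$, and then to push the resulting inclusion across this isomorphism using Remark~\ref{rem:transportsymmetry}. Concretely, I would take $\mathcal{M}=\mathcal{C}$ as the regular (right) module category, so that $\mathcal{D}:=\mathcal{C}^{*}_{\mathcal{M}}\cong \mathcal{C}^{mp}$ is again integral, and set $n=\text{rank}(\mathcal{C})$. Because $\mathcal{D}$ is integral, the regular object $R:=\bigoplus_{c\in \text{Irr}(\mathcal{D})} d_{c}\,c$ is a genuine object of $\mathcal{D}$; it is self-dual and a strong tensor generator. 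Taking $X=R$ and $H=F(X)\in \text{End}(\mathcal{M})\cong \text{Mat}_{n}(\text{Hilb}_{f.d.})$, Example~\ref{ex:MPO} produces a physical boundary subalgebra $A(\mathcal{D},X)\subseteq A_{0}$, with $A_{0}:=A(\text{Mat}_{n}(\text{Hilb}_{f.d.}),H)$, whose symmetry category is exactly $\mathcal{C}$.

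The heart of the matter is to identify $A_{0}$ as a UHF algebra of the correct type, and this is where integrality enters decisively. Write $D=(\dim H_{ij})$ for the bigrading (adjacency) matrix of $H$, so that $A_{0}=\varinjlim_{k}\text{End}(H^{\otimes k})=\varinjlim_{k}\bigoplus_{a,b} M_{(D^{k})_{ab}}(\mathbbm{C})$. Since $X$ is the \emph{regular} object, the fusion identity $c\otimes R\cong d_{c}R$ descends to $N^{\mathcal{M}}_{c}D=d_{c}D$ for every simple $c$, so every column of $D$ lies in the joint eigenspace $\{w: N^{\mathcal{M}}_{c}w=d_{c}w\ \forall c\}$, which is one-dimensional by Perron--Frobenius for the indecomposable module category $\mathcal{M}$. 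Hence $D$ has rank one with unique nonzero eigenvalue determined by $Dv=\big(\sum_{c}d_{c}^{2}\big)v=\text{Dim}(\mathcal{C})\,v$, where $v$ is the vector of Frobenius--Perron dimensions of $\mathcal{M}$. Integrality guarantees both that $R$ existed as an object and that this eigenvalue is the \emph{integer} $d=\text{Dim}(\mathcal{C})=\sum_{X}d_{X}^{2}$. As $D$ is a strictly positive rank-one integer matrix with $D^{k}=d^{\,k-1}D$, the Bratteli connecting maps $D^{T}\otimes D$ satisfy $(D^{T}\otimes D)^{k}=d^{\,2(k-1)}(D^{T}\otimes D)$, so $A_{0}$ is a simple, unital AF algebra with ordered $K_{0}(A_{0})\cong \mathbbm{Z}[1/d]$; that is, $A_{0}$ is the UHF algebra of type $d^{\infty}$. (This is exactly where non-integral categories fail: the regular object does not exist and the analogous eigenvalue is irrational, producing a non-UHF dimension group, consistent with Proposition~\ref{prop:obstruction}.)

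By Glimm's classification of UHF algebras, $A_{0}$ is $\ast$-isomorphic to $\otimes_{\mathbbm{Z}} M_{d}(\mathbbm{C})$; the real work is to upgrade this abstract $\ast$-isomorphism to a \emph{bounded spread} isomorphism of quasi-local algebras. I would argue this by a coarse-graining (blocking) argument: after blocking a fixed number of sites, the block dimensions $(D^{k})_{ab}=d^{\,k-1}v_{a}u_{b}$ of the local algebra of $A_{0}$ agree with those of a single matrix algebra $M_{d^{k}}(\mathbbm{C})$ up to the \emph{boundary data} $(a,b)$, which is a bounded amount of information supported in a fixed-width collar at the two ends of the interval. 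Absorbing this boundary data into a bounded neighborhood gives a locality-preserving identification of the two nets; making this precise invokes the quantum cellular automaton / locality-preserving technology of \cite{jones2025quantumcellularautomatacategorical}, yielding a bounded spread isomorphism $\alpha\colon A_{0}\to \otimes_{\mathbbm{Z}} M_{d}(\mathbbm{C})$ (after, if necessary, a further bounded spread regrouping between tensor-product chains of the same UHF type).

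Finally, by Remark~\ref{rem:transportsymmetry} the image $B:=\alpha\big(A(\mathcal{D},X)\big)\subseteq \otimes_{\mathbbm{Z}} M_{d}(\mathbbm{C})$ is again a physical boundary subalgebra and $\alpha$ identifies the symmetry categories, so the tensor-product quasi-local algebra of local dimension $d$ carries $\mathcal{C}$ as its symmetry category. I expect the bounded spread refinement of the third paragraph to be where essentially all the difficulty lies: the categorical and $K$-theoretic input reduces the theorem to a concrete statement about nets of matrix algebras, but promoting the UHF $\ast$-isomorphism to a genuinely locality-preserving one is the crux.
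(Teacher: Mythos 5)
Your first and last paragraphs coincide with the paper's own proof: it too takes the regular module category, the regular object $R$ (whose existence is exactly where integrality enters), the spatial realization of Example \ref{ex:MPO}, and then transports the inclusion along a bounded spread isomorphism via Remark \ref{rem:transportsymmetry}. The gap is the middle step, which you correctly flag as the crux but do not actually carry out. Glimm's classification only produces an abstract $*$-isomorphism $A_{0}\cong \otimes_{\mathbbm{Z}}M_{d}(\mathbbm{C})$, and there is no ``upgrade'' mechanism: an abstract isomorphism of UHF algebras cannot in general be promoted to a bounded spread one, and your parenthetical fallback --- that tensor-product chains of the same UHF type admit a bounded spread regrouping --- is false. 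For instance $\otimes_{\mathbbm{Z}}M_{2}(\mathbbm{C})$ and $\otimes_{\mathbbm{Z}}M_{4}(\mathbbm{C})$ are both of type $2^{\infty}$, but a spread-$R$ isomorphism would require unital embeddings $M_{4^{|I|}}\hookrightarrow M_{2^{|I|+2R}}$ for every interval $I$, forcing $|I|\le 2R$; bounded spread equivalence constrains the local dimension growth rate, which $K_{0}$ does not see. Nor does \cite{jones2025quantumcellularautomatacategorical} contain a result converting abstract isomorphisms into locality-preserving ones. (There is also a metric wrinkle in your blocking step: merging $k$ sites into one rescales distances by $k$, so blocking is not itself a bounded spread isomorphism in the sense of the paper's definition over a fixed $\mathbbm{Z}$.)

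What closes the gap --- and what the paper's proof consists of --- is to use your rank-one observation \emph{locally} rather than $K$-theoretically. Since ${}_{X}H_{Y}\cong \mathcal{C}(X, R\otimes Y)$ and $[R]\cdot[Y]=d_{Y}[R]$ give $\dim({}_{X}H_{Y})=d_{X}d_{Y}$, one chooses once and for all factorizations ${}_{X}H_{Y}\cong V_{X}\otimes V_{Y}$ with $V_{X}=\mathbbm{C}^{d_{X}}$; then ${}_{X}(H^{\otimes k})_{Y}\cong V_{X}\otimes U^{\otimes (k-1)}\otimes V_{Y}$ with $U:=\bigoplus_{Z}V_{Z}\otimes V_{Z}$ of dimension exactly $d$, and the map $f\mapsto \bigoplus_{X,Y}1_{V_{X}}\otimes({}_{X}f_{Y})\otimes 1_{V_{Y}}$ is an injective $*$-homomorphism $A_{[a,b]}\rightarrow A^{\prime}_{[a,b+1]}$ that is compatible with the inclusion maps of the two nets and has an explicit spread-$1$ inverse. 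Compatibility with inclusions is the real content here: equality of dimensions of local algebras, which is all your collar argument supplies, does not by itself produce a net isomorphism. You had every ingredient --- the integer rank-one factorization $D_{XY}=d_{X}d_{Y}$ --- but used it only to compute an eigenvalue and a dimension group; the missing move was to turn it into a bond-splitting of the local Hilbert spaces, after which the Glimm/$K$-theory detour becomes unnecessary.
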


\begin{proof}
Our construction first involves giving a standard MPO-type construction with a spatial realization, and then passing this through a bounded spread isomorphism to the standard spin system algebra $A$.

We will use the notation and definitions from Example \ref{ex:MPO}. Let $n=\text{rank}(\mathcal{C})$. First, consider $\mathcal{C}$ as a right $\mathcal{C}$-module category, so that $\mathcal{D}=\mathcal{C}^{*}_{\mathcal{C}}=\mathcal{C}$ (we will call it $\mathcal{D}$ to emphasize that it is playing the role of the ``charge category"). Then choose the regular object $R=\bigoplus_{X\in \text{Irr}(\mathcal{C})} X^{\oplus d_{X}}$. $R$ is clearly a strong $\otimes$-generator for $\mathcal{D}$. If we set $H:=\{_{X} H_{Y}\}_{X,Y\in \text{Irr}(\mathcal{C})}$ to be the bigraded Hilbert space in $\text{Mat}_{n}(\text{Hilb}_{f.d.})$ corresponding to $R$ under the functor $\mathcal{D}\rightarrow \text{Mat}_{n}(\text{Hilb}_{f.d.})$, then we have an ``anyon chain" quasi-local algebra $A:=A(\text{Mat}_{n}(\text{Hilb}_{f.d.}),H)$ and a physical boundary subalgebra $B:=A(\mathcal{D}, R)\subseteq A$ realizing the categorical symmetry $\mathcal{C}$.

Now our goal is to construct a bounded spread isomorphism $\alpha:A\rightarrow A^{\prime}:=\otimes_{\mathbbm{Z}} M_{d}(\mathbbm{C})$. By Remark \ref{rem:transportsymmetry}, this gives a physical boundary subalgebra $B^{\prime}\subseteq A^{\prime}$ bounded spread isomorphic to the local inclusion $B\subseteq A$. In particular, $\mathcal{C}$ is realized by an action on $A^{\prime}$.

To do this, consider $X,Y\in \text{Irr}(\mathcal{C})$, and consider the component $_{X}H_{Y}$ of the bi-graded Hilbert space $H$. Then we claim $\text{dim}( _{X}H_{Y})=d_{X}d_{Y}$. To see this, first note that $_{X}H_{Y}\cong \mathcal{C}(X, R\otimes Y)$. But $[R]$ is the regular element in the fusion ring \cite[Definition 3.3.8]{MR3242743} so $[R]\cdot [Y]=d_{Y} [R]$. Thus 

$$R\otimes Y\cong \bigoplus_{X\in \text{Irr}(\mathcal{C})} X^{\oplus d_{Y}d_{X}},$$

\noindent hence $\text{dim}(_{X} H_{Y})=\mathcal{C}(X, R\otimes Y)=d_{X}d_{Y}$ as desired. For each $X\in \text{Irr}(\mathcal{C})$, set $V_{X}:=\mathbbm{C}^{d_{X}}$. Then for each pair $X,Y\in \text{Irr}(\mathcal{C})$, we can arbitrarily choose a decomposition of Hilbert spaces $_{X} H_{Y}\cong V_{X}\otimes V_{Y}$.

If $I$ is an interval with length $k$, then

$$A_{I}:=\text{End}_{\text{Mat}_{n}}(H^{\otimes k}).$$

But

$$_{X}(H^{\otimes k})_{Y}\cong \bigoplus_{Z_{1},\dots Z_{k-1}\in \text{Irr}(\mathcal{C})}\ (_{X} H_{Z_{1}})\ \otimes\ (_{Z_{1}} H_{Z_{2}})\ \otimes\ \dots\ \otimes (_{Z_{k-1}} H_{Y}).$$
$$\cong \bigoplus_{Z_{1}, \dots, Z_{k-1}} V_{X}\otimes V_{Z_{1}}\otimes V_{Z_{1}}\otimes V_{Z_{2}}\otimes V_{Z_{2}}\otimes \dots \otimes V_{Z_{k-1}}\otimes V_{Z_{k-1}}\otimes V_{Y}$$

If we set $U:=\bigoplus_{X\in \text{Irr}(\mathcal{C})} V_{X}\otimes V_{X}$, we have

$$_{X}(H^{\otimes k})_{Y}\cong V_{X}\otimes U^{\otimes k-1}\otimes V_{Y}$$

Define the tensor product quasi-local algebra $A^{\prime}:=A(\text{Hilb}_{f.d.}, U)$. We will define a natural map $\alpha: A_{[a,b]}\rightarrow A^{\prime}_{[a,b+1]}$. For any $f\in A_{[a,b]}$ with $|b-a|+1=k$, denote the components $f:=\{_{X} f_{Y}\ :\ _{X} f_{Y}: _{X}\left(H^{\otimes k}\right)_{Y}\rightarrow _{X}\left(H^{\otimes k}\right)_{Y}\}$. Then define

$$\alpha(f):=\bigoplus _{X,Y\in \text{Irr}(\mathcal{C})} 1_{V_{X}}\otimes (_{X}f_{Y})\otimes 1_{V_{Y}}\in \text{End}_{\text{Hilb}_{f.d.}}(U^{\otimes k+1})=A^{\prime}_{[a,b+1]}.$$

This is an injective $*$-homomorphism, that is compatible with inclusions, and thus extends to a unital, bounded spread $*$-homomorphim $\alpha: A\rightarrow A^{\prime}$. To see that it is an isomorphism, we will construct its inverse.

For $g\in A^{\prime}_{[a,b]}=\text{End}_{\text{Hilb}_{f.d.}}(U^{k})$, we define components

$$_{Z}(\alpha^{-1}(g))_{W}:=1_{Z}\otimes g\otimes 1_{W}: _{Z} H^{\otimes k+1}_{W}\rightarrow _{Z} H^{\otimes k+1}_{W},$$ 

\noindent which assemble into a morphism

$$\alpha^{-1}(g)\in \text{End}_{\text{Mat}_{n}(\text{Hilb}_{f.d.})}(H^{\otimes k+1})=A_{[a-1,b]}.$$

Again, this extends to a $*$-homomorphism with spread $1$ from $A^{\prime}$ to $A$. We check

\begin{align*}
\alpha^{-1}\circ \alpha(f)&= \bigoplus_{X,Y,Z,W} 1_{Z}\otimes 1_{X}\otimes ( _{X}f_{Y}) \otimes 1_{Y}\otimes 1_{W}\\
&=1_{H}\otimes f\otimes 1_{H}\\
&=f\in A
\end{align*}

and 

\begin{align*}
\alpha\circ \alpha^{-1}(g)&=\bigoplus_{Z,W} 1_{Z}\otimes 1_{Z}\otimes g\otimes 1_{W} \otimes 1_{W}\\
&=1_{U}\otimes g\otimes 1_{U}\\
&=g\in A^{\prime}.
\end{align*} \end{proof}

We will now also address the relationship between tensor product realization and \textit{anomalies} of fusion categories \cite{TW24,PhysRevB.110.035155}. A unitary fusion category $\mathcal{C}$ is said to be \textit{anomaly free} if there exists a unitary fiber functor $\mathcal{C}\rightarrow \text{Hilb}_{f.d}$. An anomaly is commonly viewed an obstruction to the existence of a symmetric gapped ground state. Every anomaly free fusion category has integral dimensions, but the converse is not true. $\text{Hilb}_{f.d.}(G,\omega)$ are the standard counter-examples when $\omega$ is non-trivial. While we have just shown any integral fusion category admits an action on some concrete spin chain, we will now show that the anomaly is an obstruction to the existence of an on-site action. In other words, when we have an on-site action, we will show we can build a canonical tensor functor from $\mathcal{C}$ to a half-interval DHR category, which for tensor product spin chains will be just $\text{Hilb}_{f.d.}$, yielding a fiber functor.

Let $A$ be an abstract quasi-local algebra over $\mathbbm{Z}$. For convenience, we assume our theory is translation covariant, i.e. there is an automorphism $\alpha:A\rightarrow A$ such that $\alpha(A_{I})=A_{I+1}$. Then pick any point (for example, $0$). Then set $A_{-}$ to be the C*-algebra generated by $A_{I}$ with $I\le 0$. 

Now define the category $\text{DHR}_{L}(A_{-})$ to be the full C*-tensor category of correspondence such that for sufficiently large intervals $I$ of the form $[a,0]$, there exists a projective basis localized in $I$. Unlike the usual DHR category of full quasi-local algebras, this one is not braided, though it is monoidal.

\begin{thm}
   Suppose $A$ satisfies strong Haag duality, and $B\subseteq A$ is a physical boundary subalgebra with on-site categorical symmetry $\mathcal{C}$. Then there exists a unitary tensor functor $\mathcal{C}\rightarrow \text{DHR}_{L}(A_{-})$.
\end{thm}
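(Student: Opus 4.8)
The plan is to construct the desired functor as a \emph{restriction-to-the-half-line} functor $\mathrm{Res}\colon\mathcal{C}\to \text{DHR}_{L}(A_{-})$, directly generalizing the way an on-site automorphism $\alpha_{g}$ restricts to an automorphism of the half-line algebra $A_{-}$. The first step is to record that, under our hypotheses, the preceding theorem applies: since $A$ satisfies strong Haag duality and $\mathcal{C}$ is on-site, the inclusion $B\subseteq A$ satisfies \emph{strong} relative Haag duality, that is $Z_{A}(B_{I^{c}})\subseteq A_{I}$ for every interval $I$. This spread-$0$ statement is exactly what forces the restricted data to stay localized on the half-line, and it is why the hypotheses are calibrated as they are.

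On objects, given $X\in\mathcal{C}=\text{DHR}_{+}(A|B)$, I would choose for each sufficiently large interval $I=[a,0]$ a projective basis $\{c_{j}\}\subseteq X$ satisfying conditions (a)--(b) of the definition of $\text{DHR}_{+}(A|B)$, and set $\mathrm{Res}(X)$ to be the closed right $A_{-}$-submodule of $X$ generated by $\{c_{j}\}$. The central computation is that inner products of such bases land in $A_{-}$: using transparency to $B_{I^{c}}$ (condition (a)) together with adjointability of the left action, one finds $\langle c_{i}\,|\,c_{j}\rangle\in Z_{A}(B_{I^{c}})$, whence strong relative Haag duality gives $\langle c_{i}\,|\,c_{j}\rangle\in A_{I}\subseteq A_{-}$. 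From this, three things follow with routine bookkeeping. First, $\mathrm{Res}(X)$ is independent of the chosen basis and interval, because the change-of-basis coefficients between bases localized in $[a,0]\subseteq[a',0]$ again lie in $A_{-}$ by the same argument, so different choices generate the same $A_{-}$-module. Second, $A_{-}$ preserves $\mathrm{Res}(X)$ on the left: for $a\in A_{I'}\subseteq A_{-}$ one checks $\langle c_{k}\,|\,a\,c_{j}\rangle\in Z_{A}(B_{(I')^{c}})\subseteq A_{I'}$, so the left action expands in the basis with $A_{-}$-coefficients. Third, $\{c_{j}\}$ is a projective basis of $\mathrm{Res}(X)$ over $A_{-}$ which, by condition (b), is transparent to $A_{(-\infty,a-1]}$; this is precisely the localization required for membership in $\text{DHR}_{L}(A_{-})$.

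Morphisms restrict cleanly: an $A$-$A$ bimodule intertwiner $f\colon X\to Y$ commutes with the $B_{I^{c}}$-action, so $f(c_{j})$ is again transparent to $B_{I^{c}}$ and hence $\langle d_{k}\,|\,f(c_{j})\rangle\in A_{-}$ for a localized basis $\{d_{k}\}$ of $Y$; thus $f(\mathrm{Res}(X))\subseteq\mathrm{Res}(Y)$ and $\mathrm{Res}(f)$ is a well-defined adjointable $A_{-}$-bimodule map, preserving unitarity. (Translation covariance is used only to fix the cut point and set up $A_{-}$; the construction works for any fixed cut.)

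The main obstacle is the monoidal structure. I would prove $\mathrm{Res}(X\boxtimes_{A}Y)\cong \mathrm{Res}(X)\boxtimes_{A_{-}}\mathrm{Res}(Y)$ naturally by comparing the two relative tensor products on localized generators: if $\{c_{j}\}$ and $\{c_{k}'\}$ are localized bases for $X$ and $Y$ in intervals ending at $0$, then $\{c_{j}\boxtimes c_{k}'\}$ is simultaneously a projective basis for $X\boxtimes_{A}Y$ over $A$ and for $\mathrm{Res}(X)\boxtimes_{A_{-}}\mathrm{Res}(Y)$ over $A_{-}$. The delicate point is that the $A$-balancing and the $A_{-}$-balancing agree on these generators: because every inner product entering the defining idempotents already lies in $A_{-}$, no information is lost upon restriction and the two balanced products coincide. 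Establishing this identification compatibly with the associators and unitors, with $\mathrm{Res}(A)=A_{-}$ serving as the tensor unit, is the crux of the argument; once it is in place, unitarity of the tensor structure is immediate, and $\mathrm{Res}$ is the desired unitary tensor functor $\mathcal{C}\to\text{DHR}_{L}(A_{-})$.
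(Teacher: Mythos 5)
Your proposal is correct and takes essentially the same route as the paper's proof: both construct the functor by sending $X$ to the right $A_{-}$-module generated by a projective basis localized as in the definition of $\text{DHR}_{+}(A|B)$, use strong relative Haag duality (via the preceding on-site theorem) to place the inner products $\langle b_{i}\,|\,b_{j}\rangle$ in $A_{[a,0]}\subseteq A_{-}$, and obtain the unitary tensor structure by noting that $\{b_{i}\boxtimes c_{j}\}$ is simultaneously a projective basis for $X\boxtimes_{A}Y$ and for $\widetilde{X}\boxtimes_{A_{-}}\widetilde{Y}$, so the canonical identification is unitary. If anything, you make explicit two bookkeeping points the paper leaves implicit (closure of the restricted module under the left $A_{-}$-action, and agreement of the $A$- and $A_{-}$-balancings), which is a fair elaboration rather than a different argument.
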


\begin{proof}
Choose some interval $[a,0]$ large enough such that for all $X\in \mathcal{C}$, we can pick a projective basis $\{b_{i}\}$ for $X$ that has $ab_{i}=b_{i}a$ for all $a\in A_{(\infty,a)}$ and $ab_{i}=b_{i}a$ for all $a\in B_{(0,\infty)}$. By strong algebraic Haag duality, $\langle b_{i}\ |\ b_{j}\rangle\in A_{[a,0]}\subseteq A_{-}$. 

Thus the space 
$$\widetilde{X}:=\text{span}\{b_{i}A_{-}\}$$

\noindent is an $A_{-}$ bimodule with $A_{-}$-valued inner product. This bimodule does not depend on choice of interval $I=[a,0]$ or projective basis localized in $I$ (again by strong Haag duality).

If $f:X\rightarrow Y$ is an intertwiner of correspondences, then if $\{c_{j}\}$ denotes the choice of localized basis for $Y$, $\langle c_{j}\ |\ f(b_{i})\rangle$ is also localized in $A_{[a,0]}$ hence for $\sum_{i} b_{i}a_{i}\in \widetilde{X}$

$$f(\sum_{i}b_{i}a_{i})=\sum_{i}f(b_{i})a_{i}=\sum_{j,i}c_{j}\langle c_{j}\ |\ f(b_i)\rangle a_{i}\in \widetilde{Y}.$$

Thus $\widetilde{f}:\widetilde{X}\rightarrow \widetilde{Y}$ given by restriction is well-defined, and thus the assignment $X\mapsto \widetilde{X}$ extends to a $\dagger$ functor $L:\text{DHR}(A)\rightarrow \text{DHR}_{L}(A_{-})$. We will now check that this is monoidal.

If $\{b_{i}\}\subseteq X$ and $\{c_{j}\}\subseteq Y$ are $[a,0]$-localized bases, then $\{b_{i}\boxtimes c_{j}\}$ is an $[a,0]$ localized basis for $X\boxtimes_{A} Y$. Thus $$\widetilde{X\boxtimes_{A}Y}=\text{span}\{b_{i}\boxtimes c_{j} A_{-}\}$$

On the other hand, $\{b_{i}\}$ and $\{c_{j}\}$ are projective bases for $\widetilde{X}$ and $\widetilde{Y}$ respectively, hence $\{b_{i}\boxtimes c_{j}\}$ is a projective basis for $\widetilde{X}\boxtimes_{A_{-}} \widetilde{Y}$. Thus

$$\widetilde{X}\boxtimes_{A_{-}}\widetilde{Y}=\text{span}\{b_{i}\boxtimes c_{j} A_{-}\}.$$

By the definition of the inner product on both sides, we see that the natural identification $\widetilde{X}\boxtimes_{A_{-}}{Y}\rightarrow \widetilde{X\boxtimes_{A} Y}$ is a unitary. It is straightforward to see this is natural and satisfies the usual coherence.
\end{proof}

\begin{thm}
    Let $A=\otimes_{\mathbbm{Z}} M_{d}(\mathbbm{C})$ be a tensor product quasi-local algebra. Then $\text{DHR}_{L}(A_{-})=\text{Hilb}_{f.d.}$.
\end{thm}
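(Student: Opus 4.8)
The plan is to show directly that every object of $\text{DHR}_L(A_-)$ is isomorphic to a finite direct sum of copies of the trivial correspondence ${}_{A_-}(A_-)_{A_-}$. Since $A_-=\otimes_{(-\infty,0]}M_d(\mathbb{C})$ is a UHF algebra, hence simple with trivial center, the bimodule endomorphisms of the trivial correspondence are exactly $Z(A_-)=\mathbb{C}$, so the replete full subcategory of such direct sums is precisely $\text{Hilb}_{f.d.}$: an object $H\otimes A_-$ records a finite-dimensional Hilbert space $H$, the assignment $X\mapsto H$ is a unitary monoidal equivalence (using $(H\otimes A_-)\boxtimes_{A_-}(H'\otimes A_-)\cong (H\otimes H')\otimes A_-$), and this yields the asserted identification. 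Thus it suffices to trivialize an arbitrary $X\in\text{DHR}_L(A_-)$.

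First I would fix a large interval $[a,0]$ carrying a projective basis $\{b_i\}_{i=1}^{k}$ localized in $[a,0]$, and exploit the splitting $A_-\cong A_{(-\infty,a)}\otimes A_{[a,0]}$, where $A_{(-\infty,a)}$ is UHF (simple, trivial center) and $A_{[a,0]}\cong M_N(\mathbb{C})$ is a full matrix algebra. The key computation is that all localization data lies in this finite-dimensional corner: using the localization relation $c\cdot b_i=b_i\cdot c$ for $c\in A_{(-\infty,a)}$ together with adjointability of the left action, one checks for $\alpha\in A_{[a,0]}$ and $c\in A_{(-\infty,a)}$ that
\[
\langle b_i\,|\,\alpha\cdot b_j\rangle\,c=\langle b_i\,|\,(c\alpha)\cdot b_j\rangle=\langle c^*\cdot b_i\,|\,\alpha\cdot b_j\rangle=c\,\langle b_i\,|\,\alpha\cdot b_j\rangle,
\]
so that $\langle b_i\,|\,\alpha\cdot b_j\rangle$, and in particular (taking $\alpha=1$) the Gram matrix $\langle b_i\,|\,b_j\rangle$, lies in $Z_{A_-}(A_{(-\infty,a)})=A_{[a,0]}\cong M_N(\mathbb{C})$. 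Consequently, as a right module $X\cong p(A_-)^{k}$ with $p=[\langle b_i\,|\,b_j\rangle]\in M_k(A_{[a,0]})$ a \emph{finite-dimensional} projection; the left action of $A_{(-\infty,a)}$ is the diagonal left multiplication inherited from the trivial correspondence, and the left action of $A_{[a,0]}$ is governed entirely by the finite matrix $[\langle b_i\,|\,\alpha\cdot b_j\rangle]\in M_k(A_{[a,0]})$.

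With everything reduced to the finite-dimensional corner I would conclude by untwisting. The commuting data of the left $M_N$-action and the right $A_{[a,0]}$-module structure on the finite-dimensional fiber defines an $M_N$--$M_N$ bimodule; since $M_N(\mathbb{C})$ is simple (Morita trivial), this bimodule is a direct sum of copies of the standard one, producing a finite-dimensional multiplicity space $H$. Because the left and right $A_{(-\infty,a)}$-actions already agree and the $b_i$ generate $X$ over $A_-$, this splitting propagates to a genuine bimodule isomorphism $X\cong H\otimes A_-$ over all of $A_-$. One then verifies that $H$ is independent of the chosen $[a,0]$ (it is the dimension of an intertwiner space, hence a Morita invariant) and that $X\mapsto H$ is a monoidal $*$-functor, which is routine once triviality is in hand.

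The main obstacle is exactly this last untwisting: converting the finite relative-commutant data into an $A_-$--$A_-$ bimodule isomorphism $X\cong H\otimes A_-$ compatible with the \emph{full} left and right actions, rather than merely a right-module or finite-dimensional identification. The decisive leverage is that the relative commutant $A_{[a,0]}$ is a full matrix algebra, so no nontrivial superselection structure can survive in the finite-dimensional corner; the tensor-product (split) structure of the concrete spin chain together with simplicity of both $A_{(-\infty,a)}$ and $A_{[a,0]}$ is precisely what forbids any nontrivial left half-line sector. I expect the bookkeeping of compatibility across different choices of $a$, and the careful passage from the finite fiber back to the global correspondence, to be the only genuinely delicate points.
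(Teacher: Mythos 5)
Your proposal is correct and follows essentially the same route as the paper's proof: both use a projective basis localized in $[a,0]$ to show (via Haag duality/commutation with $A_{(-\infty,a)}$) that all inner-product and left-action data lie in the finite matrix corner $A_{[a,0]}$, split $X\cong A_{(-\infty,a)}\otimes X_{I}$, and invoke Morita triviality of the full matrix algebra to conclude $X$ is a finite multiple of the trivial correspondence. Your $p(A_-)^{k}$ bookkeeping is just a concrete repackaging of the paper's unitary $f:A_{<I}\otimes X_{I}\rightarrow X$, $f(a\otimes x)=ax$, and the ``untwisting'' you flag as delicate goes through exactly as you outline.
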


\begin{proof}
Let $X\in \text{DHR}_{L}(A_{-})$, and let $\{b_{i}\}$ be a basis localized in some interval $I=[a,0]$. If we set $$X_{I}:=\text{span} \{b_{i}A_{I}\},$$
    
\noindent then by strong Haag duality for $A_{-}$ in the tensor product spin system, $X_{I}$ with its $A_{-}$-valued inner product is in fact an $A_{I}$ correspondence. Now, consider $A_{<I}$ as the trivial $A_{<I}$ correspondence. Then
$$A_{<I}\otimes X_{I}$$

\noindent has the natural structure of an $A_{<I}\otimes A_{I}= A_{-}$ correspondence in the obvious way. We claim that in fact $A_{<I}\otimes X_{I}\cong X$ as $A_{-}$-$A_{-}$ correspondences. Define $f:A_{<I}\otimes X_{I}\rightarrow X$ by
$$f(a\otimes x):=ax=xa\in X$$

This is clearly $A_{-}$ bimodular. We now verify that it is an isometry with respect to the $A_{-}$ valued inner product by computing

\begin{align*}
\sum_{i,j}\langle f(a_{i}\otimes x_{i})\ |\ f(b_{j}\otimes y_{j})\rangle_{A_{-}}&=\sum_{i,j}\langle x_{i}a_{i}\ |\ b_{j}y_{j}\rangle_{A_{-}}\\
&=\sum_{i,j} a^{*}_{i}b_{j}\langle x_{i}\ |\ x_{j}\rangle_{A_{-}}\\
&=\sum_{i,j}\langle a_{i}\otimes x_{i}\ |\ b_{j}\otimes y_{j}\rangle_{A_{-}}
\end{align*}

Furthermore, we have $X=X_{I}A_{-}=X_{I}A_{I}A_{<I}=X_{I}A$, and since $X_{I}$ contains a projective basis, we see that $f$ is surjective. In particular, $f$ is a unitary equivalence of correspondences. 

However, $A_{I}$ is just a finite-dimensional matrix algebra, hence $X_{I}\cong A^{\oplus n}_{I}$ as $A_{I}$ correspondences (note that $n$ is finite since $X_{I}$ is finite dimensional by the existence of the $I$-localized finite projective basis). In particular, as $A_{-}$ correspondences, we see
$$X\cong A_{<I}\otimes A^{\oplus n}_{I}\cong (A_{<I}\otimes A_{I})^{\oplus n}\cong A^{\oplus n}_{-}.$$

\noindent Thus $X$ is a direct sum of trivial $A_{-}$ bimodules. Hence $\text{DHR}_{L}(A_{-})\cong \text{Hilb}_{f.d}$ as unitary tensor categories.
\end{proof}

\begin{cor}\label{cor:on-siteanomaly}
    A fusion category is realized as an on-site symmetry on a tensor product quasi-local algebra $A=\otimes_{\mathbbm{Z}} M_{d}(\mathbbm{C})$ if and only if $\mathcal{C}$ admits a fiber functor.
\end{cor}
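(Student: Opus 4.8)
The plan is to prove both implications, and in each direction most of the work has already been done by the three theorems immediately preceding the corollary. The first thing I would record is that a tensor product quasi-local algebra $A=\otimes_{\mathbbm{Z}} M_{d}(\mathbbm{C})$ satisfies strong Haag duality on the nose (the relative commutant of $A_{I^{c}}$ is exactly $A_{I}$), so that both the theorem characterizing on-site symmetry via strong relative Haag duality and the theorem producing the functor to $\text{DHR}_{L}(A_{-})$ apply to it.

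For the forward implication, suppose $\mathcal{C}$ is realized as an on-site symmetry, i.e. there is a physical boundary subalgebra $B\subseteq A$ whose symmetry category is $\mathcal{C}$ and whose symmetry channels $\lambda_{X}$ are on-site. By the theorem producing a unitary tensor functor from an on-site symmetry category there is a unitary tensor functor $\mathcal{C}\rightarrow \text{DHR}_{L}(A_{-})$, and by the theorem computing the half-line DHR category of a tensor product chain we have $\text{DHR}_{L}(A_{-})\cong \text{Hilb}_{f.d.}$. Composing gives a unitary tensor functor $\mathcal{C}\rightarrow \text{Hilb}_{f.d.}$. The only remaining point is to observe that this is genuinely a fiber functor: since $\mathcal{C}$ is a unitary fusion category and the functor is monoidal and $\dagger$-preserving, it is isometric on hom-spaces and sends each simple to a nonzero Hilbert space of the same categorical dimension, hence is faithful, so it is a unitary fiber functor and $\mathcal{C}$ is anomaly free.

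For the reverse implication, suppose $\mathcal{C}$ admits a unitary fiber functor. This is the same data as a rank-one indecomposable module category $\mathcal{M}\cong \text{Hilb}_{f.d.}$ over $\mathcal{C}$. Feeding $\mathcal{M}$ into the spatial realization of Example \ref{ex:MPO} with $n=\text{rank}(\mathcal{M})=1$, we have $\text{End}(\mathcal{M})\cong \text{Hilb}_{f.d.}$, the target chain is the tensor product algebra $A=\otimes_{\mathbbm{Z}} M_{d}(\mathbbm{C})$, and the resulting physical boundary subalgebra $B=A(\mathcal{C}^{*}_{\mathcal{M}},X)\subseteq A$ has symmetry category $\mathcal{C}$ (as noted there, the symmetry operators are indexed by the simples of $\mathcal{C}$). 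It then remains to show this realization is on-site. Since $A$ satisfies strong Haag duality, by the theorem equating on-site with strong relative Haag duality it suffices to verify $Z_{A}(B_{I^{c}})\subseteq A_{I}$ with no spread, which holds for categorical fusion spin chain inclusions: the relevant relative commutant is computed exactly in the endomorphism-algebra model by the graphical calculus, exactly as in the on-site group case of Example \ref{ex:SymsubAlg2} where Haag duality holds on the nose, with the general statement established in \cite{jones2025quantumcellularautomatacategorical}.

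I expect the main obstacle to be precisely this last point in the reverse direction, namely pinning down that the categorical inclusion $A(\mathcal{C}^{*}_{\mathcal{M}},X)\subseteq A(\text{End}(\mathcal{M}),H)$ satisfies \emph{strong} (spread-zero) relative Haag duality rather than merely weak relative Haag duality; everything else is a direct assembly of the preceding results. The cleanest route is to compute $Z_{A}(B_{I^{c}})$ directly using localized projective bases and show that an element commuting with all of $B_{I^{c}}$ is already supported on $A_{I}$.
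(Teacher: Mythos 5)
Your proposal is correct and takes essentially the same route as the paper's proof: on-siteness implies a fiber functor by composing the two preceding theorems (the unitary tensor functor $\mathcal{C}\rightarrow \text{DHR}_{L}(A_{-})$ followed by the identification $\text{DHR}_{L}(A_{-})\cong \text{Hilb}_{f.d.}$), and a fiber functor yields an on-site realization via the spatial realization of Example \ref{ex:MPO} applied to the rank-one module category $\mathcal{M}\cong \text{Hilb}_{f.d.}$. The two details you add beyond the paper's terse proof --- faithfulness of the composed functor (automatic for a unitary tensor functor out of a fusion category) and the explicit verification that the fiber-functor realization is on-site via strong relative Haag duality of the categorical inclusion, which the paper leaves implicit and defers to \cite{jones2025quantumcellularautomatacategorical} --- are correct elaborations rather than a different approach.
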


\begin{proof}
If $\mathcal{C}$ admits a fiber functor, then there is a natural right $\mathcal{C}$-module category structure on $\text{Hilb}_{f.d.}$. Picking a generating object $H\in \mathcal{C}^{*}_{\text{Hilb}_{f.d.}}$, by \ref{ex:MPO} we have a realization of $\mathcal{C}$ on $A(\text{Hilb}_{f.d.}, H)\cong \otimes_{\mathbbm{Z}} M_{d}(\mathbbm{C})$, where $d=\text{dim}(H)$. 

Conversely, if $\mathcal{C}$ is realized on $A=\otimes_{\mathbbm{Z}} M_{d}(\mathbbm{C})$, then $\mathcal{C}$ admits a fiber functor by the previous theorem.
\end{proof}

\section{Symmetric states}\label{sec:symstates}

In this section, we perform an analysis of $\mathcal{C}$ symmetric states from our operator algebraic perspective. We propose a natural definition for a symmetric topological state\footnote{we use the term topological in place of the more commonly used ``gapped'', since we do not need to reference a Hamiltonian, and topological is usually what is meant in the physics literature}, and show how these correspond to module categories over $\mathcal{C}$. The rank of the module category measures the degree of symmetry breaking, and hence our analysis gives a formal proof that if there is a pure, symmetric topological state in the infinite volume limit, then there exists a fiber functor on $\mathcal{C}$. Thus the existence of an anomaly forces symmetry breaking or gaplessness, which is a categorical variant of the Lieb-Schultz-Mattis-type theorem, originally (to our knowledge) stated in \cite{TW24}.

To be more precise, suppose $B\subseteq A$ is a physical boundary subalgebra with symmetry category $\mathcal{C}$. By a state, we mean in the usual sense for C*-algebras, i.e. a positive linear functional $\phi:A\rightarrow \mathbbm{C}$ with $\phi(1)=1$.

\begin{defn} A state $\phi$ on $A$ is called symmetric if $\phi\circ \lambda_{X}=\phi$ for all $X\in \text{Irr}(\mathcal{C})$.
\end{defn}

\begin{remark}
Note that $\phi$ is symmetric if and only if $\phi\circ E=\phi$. This implies $\phi$ is totally determined by its values on the subalgebra $B$, and there is a bijection between symmetric states on $A$ and states on $B$. We also note that symmetric states are never far away: given an arbitrary state $\phi$ on $A$, $\phi\circ E$ is symmetric (note that this is simply an averaging procedure over the quantum channel symmetries).
\end{remark}

Even though there is a bijection between states on $B$ and symmetric states on $A$, a pure state on $B$ might lift to a mixed state on $A$. If $\phi$ is symmetric, $L^{2}(A,\phi)=A\boxtimes_{B} L^{2}(B,\phi)$.

States (and more generally, Hilbert space representations) of $B$ can be thought of boundary theories for the Symmetry TFT $\mathcal{T}$, and thus we can think of induction (sending a Hilbert space representation $H$ of $B$ to $A\boxtimes_{B}H$) as gluing a boundary condition of $\mathcal{T}$ to the whole SymTFT decomposition. This leads us to formally define a symmetric Hilbert representation of $A$ as an induction of a Hilbert space representation of $B$.

$$\begin{tikzpicture}
\draw (0,1.5)--(2,1.5);
\draw [dotted,red] (0,.7)--(2,0.7);
\draw (0,0.2)--(2,0.2);
\draw node at (1,1.8) {$\mathcal{B}_{\text{top}}$};
\draw [red] node at (-0.4,0.6) {$\mathcal{B}_{\text{phys}}$};
\draw node at (1,-0.3) {$H$};
\end{tikzpicture}$$

The basic idea is that symmetric states/representations can be described \textit{relative} to $\mathcal{T}$. In particular, any Hilbert space representation of B (or a state) gives a \textit{W*-algebra object} internal to $\text{DHR}(B)$, and we can use this information to analyze the induction. This approach was already proposed in \cite[Section 6.3]{jones2025localtopologicalorderboundary} for boundary states of Levin-Wen type models. We refer the reader to Appendix \ref{app:W*} for a detailed discussion of W*-algebra objects and their relations. 

Suppose we are given a boundary condition for $\mathcal{T}$ embodied by a state $\phi$ of $B$. Then we define an associated \textit{module category} for $\text{DHR}(B)$, which we call $\mathcal{N}_{\phi}\subseteq \text{Rep}(B)$, defined as summands of $X\boxtimes_{B} L^{2}(B,\phi)$, where $X\in \text{DHR}(B)$. $\mathcal{N}_{\phi}$ together with the choice of generator $L^{2}(B,\phi)\in \mathcal{N}_{\phi}$ defines a \textit{cyclic module category} for $\text{DHR}(B)$, hence a W*-algebra object $\mathcal{H}_{\phi}$ in $\text{DHR}(B)$.

If $\text{End}_{B}(L^{2}(B,\phi))$ is finite dimensional), then $\mathcal{N}_{\phi}$ is a finitely semisimple module category. In particular, $\mathcal{H}_{\phi}$ is \textit{compact}, hence can be represented by a Q-system $H_{\phi}$. This is unique in the case $\phi$ is pure \cite{MR4079745}, but in general the choices of lift are parameterized by module traces on $\mathcal{N}_{\phi}$ \cite{chen2024manifestlyunitaryhigherhilbert}. The different choices are irrelevant since we do not care about module traces here, but having an actual Q-system (which lives in $\text{DHR}(B)$ rather than $\text{Vec}(\text{DHR}(B))$ will be convenient.

\begin{defn}
    
Let $B\subseteq A$ be a physical boundary subalgebra and $\phi$ a symmetric state on $A$. Then $\phi$ is

\begin{enumerate}
\item 
\textit{compact} if $\text{End}_{B}(L^{2}(B,\phi))$ is finite dimensional.
\item 
\textit{connected} if $\phi|_{B}$ is pure (i.e. $\text{dim}(\text{End}_{B}(L^{2}(B,\phi)))=1$)

\end{enumerate}

\noindent In case $\phi$ is compact, we denote a choice of Q-system representing $\mathcal{H}_{\phi}$ by $H_{\phi}$.
\end{defn}

Note that every compact state is a finite direct sum of connected ones by definition. In this section, we will only be concerned with compact states.

An associated individual boundary condition corresponds to a state on $B$, but such a boundary condition can have topological defects. It is thus natural to single out a class of states which can be interpreted as vacuum states for their topological defect theories.

\begin{defn}
Let $\phi$ be a symmetric compact state on $A$. Then $\phi$ is \textit{local} if $H_{\phi}$ is commutative.
\end{defn}

As mentioned above, we claim if a state $\phi$ is local, it is reasonable to interpret $\phi$ as a ``vacuum" state for the whole boundary theory we are gluing on to $\mathcal{T}$. If $\phi$ is local, then we can equip $\mathcal{N}_{\phi}$ canonically with the structure of a unitary fusion category, whose objects can be thought of as point defects of the boundary theory defined by $\phi$ (more precisely, they are a version of superselection sectors of the state $\phi$ in the sense of \cite[Definition 6.8]{jones2025localtopologicalorderboundary}). 

More generally if $H_{\phi}$ is Morita equivalent to a commutative algebra, then it is reasonable to interpret $\phi$ as a state with a topological defect over the vacuum corresponding to the commutative algebra. More precisely, our state can be created by applying a DHR bimodule to a local state, and decomposing. To this end, we show that the locality condition follows from a type of Haag duality in the GNS representation of $B$, rephrasing the locality condition in the language of AQFT.

\begin{defn}
A pure state $\phi$ on $B$ satisfies \textit{bounded spread cone Haag duality} if there exists an $R\ge 0$ such that for any ray $L$, $B^{\prime}_{L^{c}}\cap \mathcal{B}(L^{2}(B,\phi))\subseteq B^{\prime \prime}_{L^{+R}}$.
\end{defn}

\begin{thm} Suppose $B$ is simple. If $\phi$ is a pure state on $B$ satisfying bounded spread cone Haag duality, then $\mathcal{H}_{\phi}$ is local.
\end{thm}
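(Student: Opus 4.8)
The plan is to prove that the connected Q-system $H_{\phi}$ representing $\mathcal{H}_{\phi}$ is commutative, i.e. that $m\circ\sigma_{H_{\phi},H_{\phi}}=m$ for its multiplication $m$, and the strategy is to translate this categorical identity into the vanishing of an operator commutator on $H:=L^{2}(B,\phi)$. Writing $\underline{\mathrm{End}}(H)$ for the internal end, one has the natural isomorphism $\mathrm{Hom}_{\mathrm{DHR}(B)}(X,H_{\phi})\cong\mathrm{Hom}_{B}(X\boxtimes_{B}H,H)$, under which $m$ becomes composition of intertwiners. First I would reduce commutativity to a statement about such intertwiners: by semisimplicity the morphisms $\widetilde{S}\colon X\to H_{\phi}$ (for $X\in\mathrm{DHR}(B)$) jointly generate, so it suffices to check $m\circ(\widetilde{S}\boxtimes\widetilde{T})=m\circ\sigma_{H_{\phi},H_{\phi}}\circ(\widetilde{S}\boxtimes\widetilde{T})$ for all $\widetilde{S}\colon X\to H_{\phi}$, $\widetilde{T}\colon Y\to H_{\phi}$. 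Naturality of the braiding rewrites the right-hand side as $m\circ(\widetilde{T}\boxtimes\widetilde{S})\circ\sigma_{X,Y}$. Fixing localized projective bases $\{b_{i}\}\subseteq X$ and $\{c_{j}\}\subseteq Y$ and setting $S_{i}:=S(b_{i}\boxtimes-)$, $T_{j}:=T(c_{j}\boxtimes-)\in\mathcal{B}(H)$, evaluation of the two sides on $b_{i}\boxtimes c_{j}\boxtimes\xi$ yields $S_{i}T_{j}\xi$ and $T_{j}S_{i}\xi$ respectively (using $\sigma_{X,Y}(b_{i}\boxtimes c_{j})=c_{j}\boxtimes b_{i}$ when $X$ is localized to the left of $Y$). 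Since $B$ is simple its GNS representation is faithful, so the whole identity is equivalent to $[S_{i},T_{j}]=0$.

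Next I would show that the component operators are localized. If $\{b_{i}\}$ is a basis for $X$ localized in an interval $I$, then for $b\in B_{I^{c}}$ the balancing $b\cdot(b_{i}\boxtimes\xi)=(bb_{i})\boxtimes\xi=(b_{i}b)\boxtimes\xi=b_{i}\boxtimes b\xi$ together with left $B$-linearity of $S$ gives $S_{i}(b\xi)=bS_{i}\xi$; hence $S_{i}\in B'_{I^{c}}\cap\mathcal{B}(H)$, and similarly $T_{j}\in B'_{J^{c}}\cap\mathcal{B}(H)$ for $\{c_{j}\}$ localized in $J$.

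The final step uses cone Haag duality to separate the two factors spatially. Since $X$ and $Y$ are DHR bimodules, for all sufficiently large intervals there are localized bases, so I may take $I\subseteq L_{-}:=(-\infty,-n]$ and $J\subseteq L_{+}:=[n,\infty)$ with $n$ large. From $L_{-}^{c}\subseteq I^{c}$ we get $S_{i}\in B'_{L_{-}^{c}}\cap\mathcal{B}(H)$, and bounded spread cone Haag duality for $\psi$ places $S_{i}\in B''_{L_{-}^{+R}}$; likewise $T_{j}\in B''_{L_{+}^{+R}}$. For $n>R$ the fattened rays $L_{-}^{+R}=(-\infty,-n+R]$ and $L_{+}^{+R}=[n-R,\infty)$ are disjoint, so the generating local algebras commute; as the commutant is weakly closed this passes to the generated von Neumann algebras, giving $[B''_{L_{-}^{+R}},B''_{L_{+}^{+R}}]=0$ and in particular $[S_{i},T_{j}]=0$, as required.

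The main obstacle I anticipate is bookkeeping rather than analysis: correctly matching the abstract identity $m\circ\sigma=m$ with the concrete commutator, in particular tracking the naturality square $\sigma_{H_{\phi},H_{\phi}}\circ(\widetilde{S}\boxtimes\widetilde{T})=(\widetilde{T}\boxtimes\widetilde{S})\circ\sigma_{X,Y}$ that swaps the order of composition, and confirming that the component operators $S_{i},T_{j}$ faithfully detect the morphisms. This faithfulness is exactly where the hypotheses enter: simplicity of $B$ makes the GNS representation on $H$ faithful, so operator identities are equivalent to categorical ones, while purity of $\psi$ guarantees that $\mathcal{H}_{\phi}$ is connected and represented by an honest Q-system $H_{\phi}$, so that \emph{commutative} is the correct notion of local to verify.
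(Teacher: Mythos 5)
Your proof is correct, and its engine is exactly the paper's: evaluate the multiplication on projective bases localized in two far-separated regions, observe that the resulting operators on $L^{2}(B,\psi)$ commute with $B_{I^{c}}$ resp.\ $B_{J^{c}}$, invoke bounded spread cone Haag duality to place them in the von Neumann algebras $B''_{L_{-}^{+R}}$ and $B''_{L_{+}^{+R}}$ of disjoint fattened rays, and deduce $[S_{i},T_{j}]=0$, hence $m\circ\sigma=m$ on a projective basis. The one genuine divergence is how the operator identity is pulled back to the categorical one. The paper realizes the Q-system as an honest C*-extension $D:=|H_{\psi}|\supseteq B$ and builds a $*$-homomorphism $\pi_{\psi}\colon D\to\mathcal{B}(L^{2}(B,\psi))$ by the same evaluation formula; injectivity of $\pi_{\psi}$ is what requires simplicity of $B$ (via simplicity of the irreducible finite-index extension $D$). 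You instead stay at the level of the component intertwiners and the defining isomorphism $\text{Hom}_{\text{DHR}(B)}(X,H_{\phi})\cong\text{Hom}_{B}(X\boxtimes_{B}H,H)$, under which the two sides of commutativity are literally the families $S_{i}T_{j}$ and $T_{j}S_{i}$, and the equivalence with $[S_{i},T_{j}]=0$ already follows from the fact that vectors $b_{i}\boxtimes c_{j}\boxtimes\xi$ span $X\boxtimes_{B}Y\boxtimes_{B}H$ together with Yoneda. So your appeal to ``GNS of a simple algebra is faithful'' is true but idle: in your formulation no faithfulness is needed, and as a byproduct your argument never actually uses simplicity of $B$ --- that hypothesis enters the paper's proof only because it routes through the realized algebra $D$ and its Hilbert space representation. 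Purity, by contrast, is used in both arguments in the same way (it makes $\mathcal{H}_{\phi}$ connected, hence compact, so a representing Q-system $H_{\phi}$ exists and ``commutative'' is the right thing to check), and your localization conventions $I\subseteq(-\infty,-n]$, $J\subseteq[n,\infty)$ with $n>R$ correctly match the paper's braiding convention $\sigma_{X,Y}(b_{i}\boxtimes c_{j})=c_{j}\boxtimes b_{i}$ for $I$ far to the left of $J$.
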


\begin{proof}
 Since $\mathcal{H}_{\phi}$ is a connected C*-algebra object, we can equip it with the structure of a (dual) Q-system $H_{\phi}$, and apply the realization to obtain an irreducible, finite index extension $D:=|H_{\phi}|\supseteq B$. Thus $D$ is simple \cite[Theorem 3.3]{MR1900138}. There is a natural $*$-homomorphism $\pi_{\phi}: D\rightarrow \mathcal{B}(L^{2}(B,\phi)$ which we define using the decomposition

 $$D:=\bigoplus_{X\in \text{Irr}(\text{DHR}(B))} \mathcal{H}_{\phi}(X)\otimes X=\bigoplus_{X\in \text{Irr}(\text{DHR}(B))} \text{Hom}_{B} \left(X\boxtimes_{B} L^{2}(B,\phi), L^{2}(B,\phi)\right)\otimes X.$$

 \noindent Then for $f\otimes \xi\in \text{Hom}_{B} \left(X\boxtimes_{B} L^{2}(B,\phi), L^{2}(B,\phi)\right)\otimes X $, set
 
 $$\pi_{\phi}(f\otimes \xi)(\eta):=f(\xi\boxtimes_{B} \eta).$$

\noindent It is easy to check that $\pi_{\phi}(f\otimes \xi)$ is bounded, and that the linear extension to $D$ defines a $*$-homomorphism from $D$ to $\mathcal{L}(L^{2}(B,\phi))$. By simplicity, it is injective. Now, for any sufficiently large intervals $I$, we can choose a basis for $H_{\phi}$ localized in $I$ of the form $e^{X}_{ij}=f^{X}_i\otimes b^{X}_{j}$, where for each $X\in \text{Irr}(\text{DHR}(B))$, $f^{X}_i$ runs through an orthonormal basis of the space $\text{Hom}_{B} (X\boxtimes_{B} L^{2}(B,\phi), L^{2}(B,\phi))$ (and inner product determined by choice of module trace that makes $H_{\phi}$ into a dual Q-system), and $b^{X}_{j}$ runs through a projective basis for $X$ localized in $I$.

Now, we claim that $\pi_{\phi}(e_{X,i})$ commutes with $B_{I^{c}}$. Indeed, if $b\in B_{I^{c}}$, we have 

\begin{align*}
\pi_{\phi}(e^{X}_{ij})(b(\eta))&=f^{X}_{j}(b^{X}_{i}\boxtimes_{B}(b\eta))\\
&=f^{X}_{j}(b^{X}_{i}b\boxtimes_{B}\eta)\\
&=f^{X}_{j}(bb^{X}_{i}\boxtimes_{B}\eta)\\
&=bf^{X}_{j}(b^{X}_{i}\boxtimes_{B}\eta)\\
&=b\pi_{\phi}(e^{X}_{ij})(\eta)\\
\end{align*}
\noindent In particular, $\pi_{\phi}(e^{X}_{ij})$ commutes with $B_{\mathbbm{Z}_{>I}}$, and so $\pi_{\phi}(e_{X,i})\in B^{\prime \prime}_{\mathbbm{Z}^{+R}_{<I}}$.

Similarly, if we apply the same construction and analysis to an interval $J>>I$, we can find a projective basis $f^{Y}_{rs}$ for $H_{\phi}$ localized in $J$ such that $\pi_{\phi}(f^{Y}_{rs})$ commutes with $B_{J^{c}}$, hence lies in $B^{\prime \prime}_{\mathbbm{Z}^{+R}_{>J}}$. But since $[B_{\mathbbm{Z}^{+R}_{>J}}, B_{\mathbbm{Z}^{+R}_{<I}}]=0$, $[B^{\prime \prime}_{\mathbbm{Z}^{+R}_{>J}}, B^{\prime \prime}_{\mathbbm{Z}^{+R}_{<I}}]=0$, hence

\begin{align*}
\pi_{\phi}(m\circ (e^{X}_{ij}\boxtimes_{B} f^{Y}_{rs}))&=\pi_{\phi}(e^{X}_{ij}f^{Y}_{rs})\\
&=\pi_{\phi}(e^{X}_{ij})\pi_{\phi}(f^{Y}_{rs})\\
&=\pi_{\phi}(f^{Y}_{rs})\pi_{\phi}(e^{X}_{ij})\\
&=\pi_{\phi}(m\circ \sigma_{H_{\phi},H_{\phi}}(e^{X}_{ij}\boxtimes_{B} f^{Y}_{rs})\\
\end{align*}
\noindent Since $\pi_\psi$ is injective, we obtain $m\circ (e^{X}_{ij}\boxtimes_{B} f^{Y}_{rs})=m\circ \sigma_{H_{\phi},H_{\phi}}(e^{X}_{ij}\boxtimes_{B} f^{Y}_{rs})$. However, $\{e^{X}_{ij}\boxtimes_{B} f^{Y}_{rs}\}$ is a projective basis for $H_{\phi}\boxtimes H_{\phi}$, and any Hilbert module map is determined by its action on a projective basis. Thus

$$m\circ \sigma_{H_{\phi},H_{\phi}}=m,$$

\noindent hence $H_{\phi}$ is commutative and $\phi$ is local.

\end{proof}

\begin{remark}
Another way to see the locality of states satisfying bounded spread cone duality is to use the work of \cite{bhardwaj2024superselectionsectorsposetsvon}. There, the authors show that in $\mathbbm{Z}^{2}$, satisfying bounded spread cone duality yield a braided tensor category of cone-localized superselection sectors, building on work of Naaijkens and Ogata \cite{MR2804555,MR4362722}. The same argument applied to $\mathbbm{Z}$ still works, but yields  only a tensor category of superselection sectors, without a braiding. One can show (similar in spirit to \cite{MR4790511,MR4650344}) that the free module functor from $\text{DHR}(B)$ to the category of cone-localized superselection sectors of $\psi$ is central and thus the internal hom object $\mathcal{H}_{\phi}$ is commutative. 
\end{remark}

We now introduce the concept of topological and gapless boundary states. In the physics literature, the term ``gapped" is frequently used interchangeably with ``topological", especially in the context of categorical symmetries. However, since there are no Hamiltonians in sight, we think it is better to use the term topological. Nevertheless, there does not seem to be analogous terminology for ``gapless" that is commonly used, so we will adopt this terminology here.

\begin{defn} Let $B\subseteq A$ be a physical boundary subalgebra and $\phi$ a symmetric, connected state on $A$. Then

\begin{enumerate}
\item 
\textit{topological} (or \textit{gapped}) if $H_{\phi}$ is Lagrangian.
\item 
\textit{gapless} if $H_{\phi}$ is connected but not Morita equivalent to a Lagrangian algebra object
\end{enumerate}
\end{defn}

The reason we require a gapless state to be "not Morita equivalent" to a Lagrangian algebra (rather than "not isomorphic") is that if $H_{\phi}$ is Morita equivalent to a Lagrangian algebra, as discussed above it can be interpreted as a topological defect of the ``vacuum" topological state defined by the Lagrangian algebra. More precisely, our state can be created from a topological state by applying a DHR bimodule (and decomposing).

\begin{remark}\label{rem:topmodcat} In addition to the $\text{DHR}(B)$ module category $\mathcal{N}_{\phi}$, since $H_{\phi}$ is Lagrangian, associated to a topological state $\phi$ is a $\mathcal{C}$-module category $\mathcal{M}_{\phi}$ \cite{MR3406516}.
\end{remark}

\begin{ex}{\textbf{Building topological states}}\label{ex:buildingtopstate}. Let $\mathcal{C}$ be a unitary fusion category, and $\mathcal{M}$ an indecomposable (right) $\mathcal{C}$-module category, with associated Lagrangian algebra $L\in \mathcal{Z}(\mathcal{C})$. In this example, we will construct a physical boundary system $B\subseteq A$ with $\mathcal{C}$-symmetry and a topological $\mathcal{C}$-symmetric state $\phi$ such that $\mathcal{M}_{\phi}=\mathcal{M}$.

To do this, choose $X\in \mathcal{C}^{*}_{\mathcal{M}}$ with $X$ a strong tensor generator. Let $v:\mathbbm{1}\rightarrow X^{\otimes n}$ be a choice of isometry for some value $n$ sufficiently large so that all simple objects are contained in $X^{\otimes n}$. For simplicity, we can coarse grain and assume $n=1$. Let $B$ be the quasi-local algebra corresponding to the fusion spin chain $\mathcal{A}(\mathcal{C}^{*}_{\mathcal{M}},X)$. Then there is a Lagrangian algebra $A\in \text{DHR}(B)\cong \mathcal{Z}(\mathcal{C}^{*}_{\mathcal{M}})$ with $\text{DHR}(B)_{A}\cong \mathcal{C}$, and this yields the physical boundary system $B\subseteq A$.

Define $\phi$ on $B_{I}:=\text{End}_{\mathcal{C}^{*}_{\mathcal{M}}}(X^{\otimes I})$ by 

$$\phi(f)1_{\mathbbm{1}}:=(v^{\dagger}\otimes \dots \otimes v^{\dagger})\circ f\circ (v\otimes \dots v).$$

This extends to a uniquely defined pure state $\phi$ on $B$. We use the unique conditional expectation $E:A\rightarrow B$ to define the $\mathcal{C}$-symmetric state $\phi(a)=\phi(E(a))$ for all $a\in A$. By exactly the same proof as \cite[Theorem 6.14]{jones2025localtopologicalorderboundary}, this state has $\mathcal{H}_{\phi}\cong L$.

The construction above started with a symmetry category $\mathcal{C}$ and a choice of Lagrangian algebra $L$, and \textit{then constructed} a physical boundary subalgebra $B\subseteq A$ realizing $L$ as a topological state, but the construction a-priori depends on $L$. What if we fix the $B\subseteq A$? Can we realize states associated to \textit{any} other Lagrangian algebra?

Consider the same physical boundary inclusion $B\subseteq A$ we constructed above from a module category $\mathcal{M}$ and a generator $X\in \mathcal{C}^{*}_{\mathcal{M}}$. Let us call the fusion category $\mathcal{D}:=\mathcal{C}^{*}_{\mathcal{M}}$, so that $B$ is the fusion spin chain associated to $\mathcal{D}$ and the object $X\in \mathcal{D}$. Choose another Lagrangian algebra $K\in \mathcal{Z}(\mathcal{C})$ and consider its image under the isomorphism $\gamma: \mathcal{Z}(\mathcal{C})\cong \mathcal{Z}(\mathcal{D})$. Then $K$ corresponds to an indecomposable (left) module category $\mathcal{N}$ of $\mathcal{D}$. Pick a Q-system $Q\in\mathcal{D}$ such that $\mathcal{N}\cong \mathcal{D}_{Q}$. For some $m$, $Q$ is a summand of $X^{\otimes m}$. By coarse-graining, we can assume $m=1$ (for convenience). Choose an isometry $v:Q\rightarrow X$. Now, let $m:Q\otimes Q\rightarrow Q$ denote the Q-system multiplication, and let $m^{k}:Q^{\otimes k}\rightarrow Q$ denote the $k$-fold associated product.

Then for $f\in B_{I}:=\text{End}_{\mathcal{D}}(X^{\otimes I})$ define the state 

$$\phi_{Q}(f):=\frac{1}{d_{Q}}\text{Tr}_{Q}(m^{I}\circ f \circ (m^{I})^{\dagger}),$$

\noindent where $\text{Tr}_{Q}: \text{End}_{\mathcal{D}}(Q)\rightarrow \mathbbm{C}$ is the standard, unnormalized categorical trace. It is easy to verify that this definition satisfies the desired compatibility conditions, allowing for a well-defined state in the thermodynamic limit of $B$. This state is considered as a boundary state of Levin-Wen models in \cite[Example 5.20]{jones2025localtopologicalorderboundary}. It is easy to show using the same arguments as in \cite[Section 6.3]{jones2025localtopologicalorderboundary} that $\text{H}_{\phi_{Q}}=K$. In particular, this shows that we can construct a topological state for \textit{any} topological boundary condition.

\end{ex}

\begin{remark} The many different flavors of gaplessness, in particular related to local states and their relation to the physical boundary algebras, are discussed extensively in \cite{bhardwaj2024hassediagramsgaplessspt}. 
\end{remark}

While it is straightforward to explicitly write down models of topological states, it appears to be much more difficult to present gapless states. For this reason, even if we have a candidate gapless state (for example, the ground state of some Hamiltonian at a topological critical point, e.g. the critical transverse field Ising chain), it is usually very difficult to argue this state is gapless in the sense we have described above (i.e. strongly non-topological). In the next several sections, we will develop tools in our framework (a Lieb-Schultz-Mattis theorem and an anomalous duality theorem) that allow us to conclude certain states are necessarily gapless, and combine these with results in the theory of C*-algebras to show the existence of such states.

\subsection{Decomposing inductions} We now consider the following abstract situation. Suppose we have an inclusion $B\subseteq A$ of unital C*-algebras with a finite index conditional expectation $E:A\rightarrow B$ (where by finite index, we mean the existence of a projective basis). For convenience, we will assume $A$ and $B$ are both simple. Suppose further that we are given a Hilbert space representation $H$ of $B$. Our goal is to understand the induced representation $A\boxtimes_{B} H$ of $A$, and in particular how it decomposes. In other words, we would like a description of the von Neumann algebra $\mathcal{L}(A\boxtimes_{B} H)\cap A^{\prime}$, where $\mathcal{L}(K)$ denotes the von Neumann algebra of bounded operators on the Hilbert space $K$.

Since $(B\subseteq A, E)$ is finite index, $A$ is a (dual) Q-system internal $\text{Bim}(B)$. We will assume that we have chosen some (full) unitary tensor subcategory $\mathcal{D}\subseteq \text{Bim}(B)$ such that $A\in \mathcal{D}$ as a Q-system.

Then $A$ defines a W*-algebra object in $\mathcal{D}$ via the module category $\mathcal{M}_{B\subseteq A}$ consisting of $B$-$A$ correspondences $X$ such that the $B$-$B$ correspondence $_{B} X\boxtimes_{A} A_{B}\in \mathcal{D}$. This is a semisimple C*-category (and in particular is a W*-category), with a clear left $\mathcal{D}$ module structure under relative $B$ tensor product $\boxtimes_{B}$. Furthermore, $_{B}A_{A}\in \mathcal{M}_{A\subseteq B}$ is a distinguished object, thus applying the construction from the Appendix  we obtain a W*-algebra object 

$$\mathcal{A}:\mathcal{D}^{op}\rightarrow \text{Vec}$$

Note that $A$ is a (dual) Q-system internal to $\text{Bim}(B)$. Suppose that $\mathcal{D}\subseteq \text{Bim}(B)$ is a unitary tensor category with $\mathcal{A}\in \mathcal{D}^{op}\rightarrow \text{Vec}$. More concretely, we can see $\mathcal{A}$ given by 

$$\mathcal{A}(X):=\text{Hom}_{\mathcal{D}}(X, A),$$

\noindent where $A$ is viewed as a (dual Q-system) object in $\mathcal{D}$. In this picture $\mathcal{A}(f)$ is simply precomposition by $f$. In category theory language, $\mathcal{A}$ is \textit{represented} by the object $A$.

Now, suppose $H$ is a Hilbert space representation of $B$. Since $\mathcal{D}\subseteq \text{Bim}(B)$, there is a generated a W*-module category of $\mathcal{D}$, $\mathcal{N}_{H}$, whose objects are summands of Hilbert space representations of the form $_{B}X_{B}\boxtimes_{B} H$, where $X\in \mathcal{D}$. Then $H\in \mathcal{N}_{H}$ is a distinguished object, and we obtain the W* co-algebra object $\mathcal{H}^{op}:\mathcal{D}\rightarrow \text{Vec}$ as defined in the Appendix, so that $\mathcal{H}^{op}(X)=\text{Hom}_{B}(H, X\boxtimes_{B} H)$ is the space of bounded $B$-linear maps between Hilbert space representations of $B$. We have the following theorem

\begin{thm}\label{Endo-realization} There is a canonical isomorphism of $*$-algebras $\text{End}_{A}(A\boxtimes_{B} H)\cong | \mathcal{A}\boxtimes \mathcal{H}^{op}|$. In particular, $| \mathcal{A}\boxtimes \mathcal{H}^{op}|$ is a W*-algebra.
\end{thm}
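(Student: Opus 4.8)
The plan is to establish the isomorphism by Frobenius reciprocity, decompose both sides against the simple objects of $\mathcal{D}$, and then match the multiplications; the W*-algebra assertion will then come for free. First I would use that induction $\text{Ind} = A \boxtimes_{B} (-)$ is adjoint to restriction $\text{Res}$ along $B \subseteq A$, the unit and counit being supplied by the Q-system (Frobenius) structure of $A$. This gives a natural isomorphism of vector spaces
$$\text{End}_A(A \boxtimes_{B} H) \cong \text{Hom}_B(H, \text{Res}(A \boxtimes_{B} H)),$$
realized concretely by sending a $B$-linear map $g : H \to A \boxtimes_{B} H$ to its unique $A$-linear extension $\widetilde g(a \boxtimes h) := a \triangleright g(h)$, where $a \triangleright (-)$ denotes the left $A$-action on the induced module (built from the multiplication $m$ of $A$), with inverse given by restricting an endomorphism to $1_{A} \boxtimes H$.

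Next I would decompose against $\text{Irr}(\mathcal{D})$. Since $A$ is a Q-system in the semisimple category $\mathcal{D}$, as a $B$-$B$ bimodule $A \cong \bigoplus_{X \in \text{Irr}(\mathcal{D})} \mathcal{A}(X) \otimes X$ with multiplicity space $\mathcal{A}(X) = \text{Hom}_{\mathcal{D}}(X, A)$, whence
$$\text{Res}(A \boxtimes_{B} H) \cong \bigoplus_{X \in \text{Irr}(\mathcal{D})} \mathcal{A}(X) \otimes (X \boxtimes_{B} H).$$
Applying $\text{Hom}_B(H, -)$ and recalling $\mathcal{H}^{op}(X) = \text{Hom}_B(H, X \boxtimes_{B} H)$ identifies the underlying space as
$$\text{End}_A(A \boxtimes_{B} H) \cong \bigoplus_{X \in \text{Irr}(\mathcal{D})} \mathcal{A}(X) \otimes \mathcal{H}^{op}(X),$$
which is exactly the coend $\int^{X} \mathcal{A}(X) \otimes \mathcal{H}^{op}(X)$ underlying the realization $|\mathcal{A} \boxtimes \mathcal{H}^{op}|$; the boundedness and the passage to the von Neumann completion on the right are precisely what the functor $|\cdot|$ encodes, so the two sides agree as topological vector spaces at this stage.

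The heart of the argument, and the step I expect to be the main obstacle, is verifying that this identification is multiplicative and $*$-preserving, so that it is an isomorphism of $*$-algebras and not merely of vector spaces. Composing two endomorphisms amounts to $\widetilde{g_1} \circ \widetilde{g_2}$, and evaluating on $1_{A} \boxtimes H$ gives $h \mapsto \widetilde{g_1}(g_2(h))$; writing $g_2(h) = \sum_k a_k \boxtimes h_k$ and $g_1(h_k) = \sum_l a'_l \boxtimes h'_l$, this equals $\sum_{k,l} (a_k a'_l) \boxtimes h'_l$, where $a_k a'_l$ is the Q-system multiplication. Thus the product iterates the coaction encoded by $\mathcal{H}^{op}$ and contracts the two $A$-outputs through $m$, which is exactly the convolution product defining the internal tensor product $\mathcal{A} \boxtimes \mathcal{H}^{op}$. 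The delicate part is to carry out this matching at the level of the morphisms $f : X \to A$ and $g : H \to X \boxtimes_{B} H$ while correctly inserting the associators and the module/comodule coherences, and to check that the operator adjoint on the left corresponds to the involution on $|\mathcal{A} \boxtimes \mathcal{H}^{op}|$ determined by the Frobenius structure of $A$ and the unitary structure on $\mathcal{H}^{op}$. Finally, since $\text{End}_A(A \boxtimes_{B} H)$ is the commutant of the $A$-representation inside $\mathcal{L}(A \boxtimes_{B} H)$, it is automatically a von Neumann algebra, and transporting this structure across the isomorphism yields the asserted W*-algebra structure on $|\mathcal{A} \boxtimes \mathcal{H}^{op}|$.
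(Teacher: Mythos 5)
Your proposal is correct and follows essentially the same route as the paper: the same Frobenius-reciprocity identification $\text{End}_{A}(A\boxtimes_{B}H)\cong \text{Hom}_{B}(H, A\boxtimes_{B}H)$ via $f\mapsto f(1_{A}\boxtimes -)$ with inverse $g\mapsto (a\boxtimes \xi \mapsto a\,g(\xi))$, the same decomposition $A\cong \bigoplus_{X\in \text{Irr}(\mathcal{D})}\mathcal{A}(X)\otimes X$ as a $B$-$B$ bimodule, and the same deferral of the multiplicativity and $*$-compatibility check to a routine tracking-through. One minor correction: the realization $|\mathcal{A}\otimes \mathcal{H}^{op}|$ involves no topological completion here — it is already a $*$-algebra on the algebraic direct sum because $\mathcal{A}$ is finitely supported (being represented by the object $A\in \mathcal{D}$), and its W*-structure is, exactly as you conclude, transported from the commutant $\text{End}_{A}(A\boxtimes_{B}H)$.
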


\begin{proof}

Since $\phi$ is symmetric, $L^{2}(A,\phi)\cong A\boxtimes_{B} L^{2}(B,\phi)$. Thus $\text{End}_{A}(L^{2}(A,\phi))\cong \text{End}_{A}(A\boxtimes_{B}L^{2}(B,\phi))$. We claim this latter space is naturally isomorphic to $\text{Hom}_{B}(L^{2}(B,\phi), A\boxtimes_{B} L^{2}(B,\phi)).$

Indeed, for $f\in \text{End}_{A}(A\boxtimes_{B} L^{2}(B,\phi))$, we define 

$$\widetilde{f}(\xi):=f(1_{A}\boxtimes \xi)\in \text{Hom}_{B}(L^{2}(B,\phi), A\boxtimes_{B}L^{2}(B,\phi),$$

\noindent and for $g\in \text{Hom}_{B}\left(L^{2}(B,\phi), A\boxtimes_{B}L^{2}(B,\phi)\right)$, define

$$\widehat{g}(a\boxtimes \xi):=ag(\xi)\in \text{End}_{A}(A\boxtimes_{B} L^{2}(B,\phi)),$$

\noindent which is well-defined since $g$ is $B$-modular. From the definitions we see $\widehat{\widetilde{f}}=f$ and $\widetilde{\widehat{g}}=g$, establishing the desired isomorphism.

Furthermore, since as a $B$-$B$ bimodule $A\cong \bigoplus_{X\in \text{Irr}(\text{DHR}(B))} \mathcal{A}(X)\otimes X$ we have a $B$-module decomposition $$A\boxtimes_{B} L^{2}(B,\phi)\cong \bigoplus_{X\in \text{Irr}(\text{DHR}(B))}\mathcal{A}(X)\otimes (X\boxtimes_{B} L^{2}(B,\phi)),$$

\noindent where the $A(X)$ simply functions as a finite-dimensional multiplicity space. Therefore, we have

\begin{align*}\text{Hom}_{B}\left(L^{2}(B,\phi), A\boxtimes_{B}L^{2}(B,\phi)\right)&\cong \bigoplus_{X\in \text{Irr}(\text{DHR}(B))} \mathcal{A}(X)\otimes \text{Hom}_{B}\left(L^{2}(B,\phi), X\boxtimes_{B} L^{2}(B,\phi)\right)\\
&=|\mathcal{A}\otimes \mathcal{H}^{op}|.
\end{align*}

Tracking through this isomorphism, it is straightforward to show this is a $*$-algebra isomorphism $\text{End}_{A}(L^{2}(A,\phi))\cong |\mathcal{A}\otimes \mathcal{H}^{op}|$ as desired.
    
\end{proof}

We now  specialize the results on decomposition from the previous to the case of categorically symmetric states.

\begin{thm}\label{thm:rankofmodule}
Let $B\subseteq A$ be a physical boundary subalgebra and $\phi$ a local symmetric state on $A$. Then

\begin{enumerate}
\item    
$L^{2}(A,\phi)$ decomposes as a finite direct sum of inequivalent irreducible representations. 

\item 
The number of irreducible summands is $\text{dim}(\text{Hom}_{\text{DHR}(B)}(H_{\phi},A))=\text{dim}(\text{Hom}_{\mathcal{C}}(F(H_{\phi}),\mathbbm{1}))$, where $F:\text{DHR}(B)\cong \mathcal{Z}(\mathcal{C})\rightarrow \mathcal{C}$ is the forgetful functor. 

\item 
If $\phi$ is topological, the number of summands is $\text{rank}(\mathcal{M}_{\phi})$, where $\mathcal{M}_{\phi}$ is the $\mathcal{C}$-module category associated to $H_{\phi}$ (see Remark \ref{rem:topmodcat}).
\end{enumerate}
\end{thm}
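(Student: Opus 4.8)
The plan is to reduce all three parts to an analysis of the finite-dimensional $*$-algebra $\text{End}_{A}(L^{2}(A,\phi))$. Since $\phi$ is symmetric we have $L^{2}(A,\phi)\cong A\boxtimes_{B}L^{2}(B,\phi)$, so Theorem \ref{Endo-realization} (applied with $\mathcal{D}=\text{DHR}(B)$, which is fusion) furnishes a canonical $*$-isomorphism $\text{End}_{A}(L^{2}(A,\phi))\cong|\mathcal{A}\boxtimes\mathcal{H}^{op}|$, where $\mathcal{A}$ is the W*-algebra object represented by the Lagrangian algebra $A\in\text{DHR}(B)$ and $\mathcal{H}^{op}$ is the object built from the Q-system $H_{\phi}$. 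The whole decomposition of $L^{2}(A,\phi)$ is controlled by this algebra: if it is finite-dimensional and abelian then it is $*$-isomorphic to $\mathbbm{C}^{n}$, and $L^{2}(A,\phi)$ splits as a direct sum of $n$ pairwise inequivalent irreducibles (any nontrivial multiplicity would produce a matrix block in the commutant). That is exactly statement (1), with $n$ computed by (2).

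Finite-dimensionality is immediate, since $\text{DHR}(B)$ has finitely many simple objects and $\phi$ is compact, so $|\mathcal{A}\boxtimes\mathcal{H}^{op}|=\bigoplus_{X\in\text{Irr}(\text{DHR}(B))}\mathcal{A}(X)\otimes\mathcal{H}^{op}(X)$ is a finite sum of finite-dimensional spaces. The crux, and the step I expect to be the main obstacle, is commutativity. Here I would first identify the product on $|\mathcal{A}\boxtimes\mathcal{H}^{op}|$ coming from Theorem \ref{Endo-realization} with the convolution product on $\text{Hom}_{\text{DHR}(B)}(H_{\phi},A)$ defined for $f,g\colon H_{\phi}\to A$ by
\[ f*g:=m_{A}\circ(f\otimes g)\circ m_{H_{\phi}}^{\dagger}, \]
where $m_{A}$ is the multiplication of $A$ and $m_{H_{\phi}}^{\dagger}$ the comultiplication of the Q-system $H_{\phi}$. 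Commutativity then follows from three facts: $A$ is commutative because it is a Lagrangian algebra; $H_{\phi}$ is cocommutative, i.e. $\sigma_{H_{\phi},H_{\phi}}\circ m_{H_{\phi}}^{\dagger}=m_{H_{\phi}}^{\dagger}$, obtained by taking daggers in the commutativity relation $m_{H_{\phi}}\circ\sigma_{H_{\phi},H_{\phi}}=m_{H_{\phi}}$ guaranteed by locality of $\phi$; and naturality of the unitary braiding, $\sigma_{A,A}\circ(f\otimes g)=(g\otimes f)\circ\sigma_{H_{\phi},H_{\phi}}$. Inserting $m_{A}=m_{A}\circ\sigma_{A,A}$ and then the two displayed relations turns $f*g$ into $g*f$. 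The delicate bookkeeping is verifying that the isomorphism of Theorem \ref{Endo-realization} genuinely intertwines these two products, since one must track how the relative tensor product and the braiding enter the realization.

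For (2) the vector-space dimension is read off from the decomposition $A\cong\bigoplus_{X}\mathcal{A}(X)\otimes X$ and the analogous decomposition of $H_{\phi}$: by semisimplicity $\dim|\mathcal{A}\boxtimes\mathcal{H}^{op}|=\sum_{X}\dim\text{Hom}(X,A)\,\dim\text{Hom}(X,H_{\phi})=\dim\text{Hom}_{\text{DHR}(B)}(H_{\phi},A)$. For the second equality I would use that, by the defining property of the symmetry category ($\text{DHR}(B)_{A}\cong\mathcal{C}$ via free modules), $A$ is the canonical Lagrangian $A\cong I(\mathbbm{1})$, where $I\colon\mathcal{C}\to\mathcal{Z}(\mathcal{C})\cong\text{DHR}(B)$ is the adjoint of the forgetful functor $F$ and the free-module functor is identified with $F$. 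The adjunction then gives $\text{Hom}_{\text{DHR}(B)}(H_{\phi},I(\mathbbm{1}))\cong\text{Hom}_{\mathcal{C}}(F(H_{\phi}),\mathbbm{1})$, which is the stated formula.

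Finally, for (3) I would specialize to $\phi$ topological, so that $H_{\phi}$ is a Lagrangian algebra and $\mathcal{M}_{\phi}$ is its associated $\mathcal{C}$-module category (Remark \ref{rem:topmodcat}). Combining with (2), it remains to identify $\dim\text{Hom}_{\mathcal{C}}(F(H_{\phi}),\mathbbm{1})$ with $\text{rank}(\mathcal{M}_{\phi})$. I would argue this through the bijection between Lagrangian algebras in $\mathcal{Z}(\mathcal{C})$ and indecomposable $\mathcal{C}$-module categories: $A\cong I(\mathbbm{1})$ corresponds to the regular module category $\mathcal{C}$ and $H_{\phi}$ to $\mathcal{M}_{\phi}$, so $\dim\text{Hom}_{\mathcal{Z}(\mathcal{C})}(A,H_{\phi})$ counts the simple objects of $\text{Fun}_{\mathcal{C}}(\mathcal{C},\mathcal{M}_{\phi})\cong\mathcal{M}_{\phi}$, i.e. equals $\text{rank}(\mathcal{M}_{\phi})$ (equivalently, the multiplicity of $\mathbbm{1}$ in $F(H_{\phi})$ is the rank of $\mathcal{M}_{\phi}$; see \cite{MR3406516}). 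As a sanity check, for $A$ itself one recovers $\text{rank}(\mathcal{C})=|\text{Irr}(\mathcal{C})|=\dim\text{Hom}_{\mathcal{C}}(F(A),\mathbbm{1})$ using $F(A)\cong\bigoplus_{c\in\text{Irr}(\mathcal{C})}c\otimes c^{*}$.
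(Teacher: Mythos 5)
Your proposal is correct and follows essentially the same route as the paper: both pass from $\text{End}_{A}(L^{2}(A,\phi))\cong|\mathcal{A}\boxtimes\mathcal{H}^{op}_{\phi}|$ to the convolution algebra $\bigl(\text{Hom}_{\text{DHR}(B)}(H_{\phi},A),\ast\bigr)$, deduce commutativity from commutativity of the algebra objects $A$ and $H_{\phi}$, identify $A\cong I(\mathbbm{1})$ with $I$ the (bi)adjoint of the free-module functor to get (2), and invoke the Lagrangian-algebra/module-category dictionary for (3). The only difference is expository: where the paper asserts commutativity and cites the literature (and cites \cite{MR4309554} for the rank count), you spell out the braiding computation $m_{A}=m_{A}\circ\sigma_{A,A}$, $\sigma_{H_{\phi},H_{\phi}}\circ m^{\dagger}_{H_{\phi}}=m^{\dagger}_{H_{\phi}}$, and the identification $\text{Fun}_{\mathcal{C}}(\mathcal{C},\mathcal{M}_{\phi})\simeq\mathcal{M}_{\phi}$ explicitly, all of which are correct.
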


\begin{proof}
If $\phi$ is a local, $H_{\phi}$ is a Q-system representing the compact W*-algebra object $\mathcal{H}_{\phi}$. Thus

$$\text{End}_{A}(L^{2}(A,\phi))\cong |\mathcal{A}\otimes \mathcal{H}^{op}_{\phi}|\cong \text{DHR}(B)(H_{\phi},A)$$

\noindent where the latter space $\text{DHR}(B)(H,A)$ is equipped with the structure of a finite dimensional C*-algebra via the convolution product \cite{MR3308880,MR4357481, schatz2024boundarysymmetries21dtopological}

$$f\ast g:= m_{A}\circ (f\otimes g)\circ m^{\dagger}_{H}.$$

\noindent Since $A,H$ are commutative algebra objects $(\text{DHR}(B)(H,A),\ast)$ is commutative. This implies $L^{2}(A,\phi)$ decomposes as a finite direct sum of inequivalent, irreducible representations, the number of which is precisely $\text{dim}(\text{DHR}(B)(H,A))$. Furthermore, if we define $I:\mathcal{C}\rightarrow \text{DHR}(B)$ as the (bi)adjoint of the free module functor $\text{DHR}(B)\rightarrow \text{DHR}(B)_{A}=\mathcal{C}$ we have an identification $A=I(\mathbbm{1})$. Thus 

$$\text{dim}(\text{DHR}(B)(H,A))=\text{dim}(\text{DHR}(B)(H,I(1)))=\text{dim}(\mathcal{C}(\mathcal{H}_{\phi},\mathbbm{1}).$$ 

\noindent By \cite{MR4309554}, in the case of a Lagrangian algebra this is precisely $\text{rank}(\mathcal{M}_{\phi})$.

\end{proof}

\begin{remark}
There is an alternative categorical picture for this decomposition result. Let $B\subseteq A$ be a physical boundary subalgebra. Let $H$ be a Hilbert space representation of $B$ and $\mathcal{N}_{H}$ denote the W*-module category of $\text{DHR}(B)$ generated by $H$. Then this induces a module category structure on the category of $A$-$A$ bimodules $_{A} \text{DHR}(B)_{A}\cong \mathcal{C}^{mp}\boxtimes \mathcal{C}$, which can be explicitly realized as the category of left $A$-modules internal to $\mathcal{N}_{H}$, which we denote by $\mathcal{M}_{H}$. We also have a functor (the free-module functor) $F_{A}:\mathcal{N}_{H}\rightarrow \mathcal{M}_{H}$ given by $N\mapsto A\boxtimes_{B} N$. Then we have $F_{A}(L^{2}(B,\phi))\cong L^{2}(A,\phi)$. 
\end{remark}

\subsection{Lieb-Schultz-Mattis type theorm}

We can now give a mathematically rigorous version of a well known physics result (for example, \cite{TW24}, which can be interpreted as a Lieb-Schultz-Mattis type theorem for fusion category symmetries. In the case that $\phi$ is topological and $\text{rank}(\mathcal{M}_{\phi})>1$, we say $\phi$ is symmetry breaking, which is precisely the case that $\phi$ is topological and not pure.

\begin{cor}\label{cor:LSM} Let $A\subseteq B$ be a physical boundary subalgebras with symmetry category $\mathcal{C}$, and let $\phi$ be a pure symmetric state on $A$.  If $\mathcal{C}$ has no fiber functor, then $\phi$ is gapless.
\end{cor}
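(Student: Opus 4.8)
The plan is to prove the contrapositive: assuming $\phi$ is a pure symmetric state on $A$ that is \emph{not} gapless, I would produce a fiber functor on $\mathcal{C}$, contradicting the hypothesis. The first step is to observe that purity of $\phi$ on $A$ forces $\phi$ to be connected. Indeed, if $\phi|_B$ were a nontrivial convex combination $\tfrac12(\psi_1+\psi_2)$ of distinct states on $B$, then since $\phi=\phi\circ E=(\phi|_B)\circ E$ we would obtain $\phi=\tfrac12(\psi_1\circ E+\psi_2\circ E)$, a decomposition of $\phi$ into distinct symmetric states, contradicting purity. Hence $\phi|_B$ is pure, $\phi$ is connected, and the $W^*$-algebra object $\mathcal{H}_\phi$ is connected, represented by a $Q$-system $H_\phi\in\text{DHR}(B)$.

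Next I would extract the numerical content of purity. Since $\phi$ is symmetric, $L^2(A,\phi)\cong A\boxtimes_B L^2(B,\phi)$, so Theorem \ref{Endo-realization} (with the computation in the proof of Theorem \ref{thm:rankofmodule}) gives a $*$-algebra isomorphism $\text{End}_A(L^2(A,\phi))\cong(\text{Hom}_{\text{DHR}(B)}(H_\phi,A),\ast)$, which under the identification $A=I(\mathbbm{1})$ and the induction--restriction adjunction becomes $(\text{Hom}_{\mathcal{C}}(F(H_\phi),\mathbbm{1}),\ast)$, where $F:\text{DHR}(B)\cong\mathcal{Z}(\mathcal{C})\to\mathcal{C}$ is the forgetful functor. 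Purity means $L^2(A,\phi)$ is irreducible, so this endomorphism algebra is $\mathbbm{C}$; therefore the number of irreducible summands of $L^2(A,\phi)$ equals $\dim\text{Hom}_{\mathcal{C}}(F(H_\phi),\mathbbm{1})=1$.

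Now I would use the assumption that $\phi$ is not gapless: since $H_\phi$ is connected, this means $H_\phi$ is Morita equivalent to a Lagrangian algebra $L\in\text{DHR}(B)$, so the $\text{DHR}(B)$-module category $\mathcal{N}_\phi=\text{Mod}_{\text{DHR}(B)}(H_\phi)$ agrees with the module category $\text{Mod}_{\text{DHR}(B)}(L)$ of a gapped boundary, carrying its associated $\mathcal{C}$-module category $\mathcal{M}_L$. The target is $\text{rank}(\mathcal{M}_L)=1$, since a rank-one $\mathcal{C}$-module category is exactly a fiber functor $\mathcal{C}\to\text{Hilb}_{f.d.}$. Combining the count of the previous step with Theorem \ref{thm:rankofmodule}(3) and the identity $\text{rank}(\mathcal{M}_L)=\dim\text{Hom}_{\mathcal{C}}(F(L),\mathbbm{1})$ of \cite{MR4309554}, it suffices to show that the multiplicity-one condition $\dim\text{Hom}_{\mathcal{C}}(F(H_\phi),\mathbbm{1})=1$ transfers along the Morita equivalence to the Lagrangian representative $L$.

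I expect this last transfer to be the main obstacle, and it is precisely the subtlety flagged in the definition of gaplessness: $H_\phi$ is only Morita equivalent to, not equal to, a Lagrangian, and $\dim\text{Hom}_{\mathcal{C}}(F(-),\mathbbm{1})$ is not a priori Morita invariant for arbitrary connected algebras. The way I would resolve it is to realize $\phi$, via the ``topological-state-with-a-defect'' description, as arising from a topological vacuum state $\phi_0$ with $H_{\phi_0}=L$ by applying a DHR bimodule and decomposing, and then argue that a single-summand (pure) outcome can occur only when the underlying gapped boundary $\mathcal{M}_L$ is rank one; concretely, one shows the free module functor identifies the one-dimensional $\text{Hom}_{\mathcal{C}}(F(H_\phi),\mathbbm{1})$ with $\text{Hom}_{\mathcal{C}}(F(L),\mathbbm{1})$, forcing $\text{rank}(\mathcal{M}_L)=1$. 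Granting this identification, rank one delivers the fiber functor and completes the contrapositive.
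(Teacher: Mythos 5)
Your reduction is sound up to the final step, and you have correctly isolated where the difficulty lies --- but that is exactly the step you do not prove. Your opening observation (purity of $\phi$ on $A$ forces purity of $\phi|_{B}$, hence connectedness of $\mathcal{H}_{\phi}$) is correct and in fact makes explicit a point the paper leaves implicit, and your computation $\text{End}_{A}(L^{2}(A,\phi))\cong(\text{Hom}_{\text{DHR}(B)}(H_{\phi},A),\ast)\cong(\text{Hom}_{\mathcal{C}}(F(H_{\phi}),\mathbbm{1}),\ast)$ matches the paper's use of Theorem \ref{Endo-realization} and Theorem \ref{thm:rankofmodule}. The genuine gap is the ``transfer along the Morita equivalence'': you propose that the free module functor \emph{identifies} the one-dimensional $\text{Hom}_{\mathcal{C}}(F(H_{\phi}),\mathbbm{1})$ with $\text{Hom}_{\mathcal{C}}(F(L),\mathbbm{1})$, and then write ``Granting this identification.'' As you yourself note, $\dim\text{Hom}_{\text{DHR}(B)}(-,A)$ is not Morita invariant for arbitrary connected algebras, and nothing in your sketch establishes the claimed identification: the ``defect over the topological vacuum'' picture does not by itself control how many irreducible summands the induction of the defect state has relative to that of the vacuum state, so your argument as written is circular at its crucial point.

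The paper closes precisely this gap with a left-center argument, and the idea you missed is that one only needs an \emph{inequality}, not an identification. By \cite[Lemma 4.46]{MR3308880}, if a Q-system is Morita equivalent to a commutative algebra, then its left center is isomorphic to that commutative algebra as objects. So if $H_{\phi}$ were Morita equivalent to a Lagrangian $L$, then $L\cong Z_{l}(H_{\phi})$ as objects, and $Z_{l}(H_{\phi})$ is a subobject (direct summand) of $H_{\phi}$; hence
$$\dim\text{Hom}_{\text{DHR}(B)}(L,A)\le \dim\text{Hom}_{\text{DHR}(B)}(H_{\phi},A)=1.$$
Since $\dim\text{Hom}_{\text{DHR}(B)}(L,A)=\text{rank}(\mathcal{M}_{L})\ge 1$ by \cite{MR4309554}, equality holds, giving a rank-one $\mathcal{C}$-module category and thus a fiber functor --- the desired contradiction. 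If you replace your granted identification with this one-line subobject estimate, your contrapositive argument becomes a complete proof essentially equivalent to the paper's, which runs the same two cases ($H_{\phi}$ Lagrangian, respectively Morita equivalent to a Lagrangian) as a proof by contradiction.
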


\begin{proof}
If $\phi$ is topological, by Theorem \ref{thm:rankofmodule}, $\mathcal{C}$ has a rank 1 module category, i.e. a fiber functor, which contradicts our assumptions. 

If $\phi$ is not topological then $H_{\phi}$ is a connected Q-system in $\text{DHR}(B)$ with $\text{dim}(\text{Hom}(H_{\phi}, A))=1$, and with $H_{\phi}$ not Lagrangian. We claim $H_{\phi}$ is \textit{not} Morita equivalent to a Lagrangian algebra. 

By \cite[Lemma 4.46]{MR3308880}, if $H_{\phi}$ is Morita equivalent to a commutative algebra $Q$, then the left center $Z_{l}(H_{\phi})\cong Q$ as objects (in fact as algebra objects, though this requires slightly more work). In particular if $H_{\phi}$ were Morita equivalent to a Lagrangian algebra $K$, since $Z_{l}(H_{\phi})\le H_{\phi}$, we have $\text{dim}(\text{Hom}_{\text{DHR}(B)}(K, A))=1$. But then $\mathcal{C}$ has a fiber functor, a contradiction.
\end{proof}

In light of the above result, fusion categories with no fiber functor are called \textit{anomalous} \cite{TW24,PhysRevB.110.035155}.
The above theorem leads naturally to the question: given a topological symmetry decomposition $B\subseteq A$ with symmetry category $\mathcal{C}$, does there exist a symmetric pure state on $A$ in general? If so, then for any fusion categorical symmetry with no fiber functor, we will be guaranteed the existence of a gapless state. 

We will show that this is indeed the case for all models arising from fusion spin chains \ref{ex:MPO}. First we review some technical ideas from the theory of C*-algebras
\cite{MR985307, MR1234394, IzumiBEK}.

\begin{defn} An inclusion of separable C*-algebras $B\subseteq A$ has property BEK if it satisfies the following equivalent conditions:

\begin{enumerate}
\item 
There exists a faithful irreducible representation of $A$ that is irreducible as a representation of $B$.
\item For all $x,y\in A$,
$$\sup_{b\in B} \frac{\|xby\|}{\|b\|}= \|x\| \|y\|.$$
\end{enumerate}
\end{defn}

\noindent The equivalence of these conditions is \cite[Theorem 3.1]{IzumiBEK}.

It is not obvious at first sight how property BEK is related to having a symmetric pure state. Indeed, states directly arising from vector states of a common irreducible representation are essentially never symmetric. However, we have the following lemma, which is a corollary of a result due to Izumi \cite{IzumiBEK}.

\begin{lem} Let $B\subseteq A$ be an inclusion of separable C*-algebras with a finite index conditional expectation $E:A\rightarrow B$ with property BEK. Then there exists a pure state $\phi$ on $A$ with $\phi\circ E=\phi$
\end{lem}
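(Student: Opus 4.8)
The plan is to reduce the existence of a symmetric pure state to property BEK plus the structure of the conditional expectation, using Izumi's machinery. The key conceptual point is that property BEK gives us a representation $\pi$ of $A$ that remains irreducible when restricted to $B$; the difficulty is that vector states of such a representation are generically \emph{not} $E$-invariant, so one cannot simply read off a symmetric pure state. The idea I would pursue is that the BEK condition is exactly what is needed to find a state that is simultaneously pure on $A$ \emph{and} restricts to a pure (hence, after composing with $E$, symmetric) state compatible with the expectation.

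First I would set up the standard dictionary: since $E:A\to B$ is a finite-index conditional expectation, $A$ is a dual Q-system in $\mathrm{Bim}(B)$ and $L^2(A,\phi)\cong A\boxtimes_B L^2(B,\phi)$ for any symmetric $\phi$, so symmetric states on $A$ correspond bijectively to states on $B$ via $\phi\mapsto \phi|_B$ and $\psi\mapsto \psi\circ E$. A symmetric state $\phi=\psi\circ E$ is pure on $A$ precisely when the induced representation $A\boxtimes_B L^2(B,\psi)$ is irreducible, i.e. when $\mathrm{End}_A(A\boxtimes_B L^2(B,\psi))=\mathbb{C}$. Thus the task becomes: produce a pure state $\psi$ on $B$ whose induction to $A$ stays irreducible. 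This is where property BEK enters.

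The central step is to invoke Izumi's result directly. Condition (1) of property BEK hands us a faithful representation $\pi$ of $A$ on some Hilbert space $\mathcal H$ that is irreducible both as an $A$-representation and as a $B$-representation. Take any unit vector $\xi\in\mathcal H$ and let $\psi$ be the corresponding vector state restricted to $B$; since $\pi|_B$ is irreducible, $\psi$ is pure on $B$. The claim is that $\phi:=\psi\circ E$ is the desired pure symmetric state on $A$. By construction $\phi\circ E=\psi\circ E\circ E=\psi\circ E=\phi$, so $\phi$ is symmetric; the real content is purity of $\phi$ on $A$. Here I would appeal to the analysis of induced representations: because $\pi|_B$ is already irreducible, the representation $\pi$ of $A$ is (a summand of, and by the BEK/irreducibility hypothesis equal to) the induction of $\psi$ from $B$, and irreducibility of $\pi$ as an $A$-module forces $\mathrm{End}_A(A\boxtimes_B L^2(B,\psi))=\mathbb C$. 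Concretely, one identifies $L^2(A,\phi)$ with the cyclic subrepresentation generated by $\xi$ under $\pi(A)$, which by $A$-irreducibility of $\pi$ is all of $\mathcal H$, whence $\phi$ is a vector state of the irreducible representation $\pi$ and is therefore pure.

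The main obstacle I anticipate is the last identification: verifying that the GNS representation of $\phi=\psi\circ E$ really coincides with $\pi$ (equivalently with the induced representation $A\boxtimes_B L^2(B,\psi)$), rather than merely embedding into it, so that purity of $\phi$ on $A$ actually follows. This is precisely the subtle point that Izumi's theorem is designed to handle, so I would quote his result (\cite{IzumiBEK}) as a black box for the statement that BEK produces a $\psi$ pure on $B$ whose $E$-extension is pure on $A$, and devote the proof to checking that our finite-index hypothesis puts us in the scope of that theorem. The finite-index and simplicity assumptions ensure the conditional expectation is faithful and the induction is well-behaved, which is what lets the cyclic vector $\xi$ generate the whole irreducible space. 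I would be careful to state explicitly that the equivalence of the two BEK conditions is \cite[Theorem 3.1]{IzumiBEK}, and that the passage from a common-irreducible representation to the symmetric pure state is the corollary of Izumi's work being invoked.
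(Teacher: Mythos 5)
Your reduction is set up correctly (symmetric states on $A$ biject with states on $B$, and $L^{2}(A,\psi\circ E)\cong A\boxtimes_{B}L^{2}(B,\psi)$), but the central step fails. If $\pi$ is a representation of $A$ on $\mathcal{H}$ that is irreducible over both $A$ and $B$, and $\psi$ is the vector state of $\xi\in\mathcal{H}$ restricted to $B$, then the GNS representation of $\phi:=\psi\circ E$ is the \emph{full} induced representation $A\boxtimes_{B}\mathcal{H}$ — and this is strictly larger than $\pi$ whenever $B\subsetneq A$, not equal to it. By Frobenius reciprocity (the finite-index adjunction, i.e.\ exactly the map $f\mapsto\widetilde{f}$, $\widetilde{f}(\xi)=f(1\boxtimes\xi)$ used in Theorem \ref{Endo-realization}), $\pi$ occurs in $A\boxtimes_{B}\mathcal{H}$ with multiplicity one, but other inequivalent summands occur as well, so $\phi$ is mixed. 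A concrete counterexample: take $B=A^{\mathbbm{Z}/2\mathbbm{Z}}\subseteq A$ with $\alpha$ the order-two automorphism, so $E(a)=\tfrac{1}{2}(a+\alpha(a))$ and $\phi=\tfrac{1}{2}(\omega_{\xi}+\omega_{\xi}\circ\alpha)$. Since $\alpha|_{B}=\mathrm{id}$ and $\pi(B)'=\mathbbm{C}$, any unitary intertwining $\pi$ with $\pi\circ\alpha$ would be scalar, forcing $\alpha=\mathrm{id}$; hence $\pi\circ\alpha\not\cong\pi$ — note that it is precisely the $B$-irreducibility you are exploiting that guarantees this — so $\phi$ is a proper convex combination of two inequivalent pure states, and no choice of $\xi$ repairs it. The statement you propose to quote from Izumi as a black box ("BEK produces a $\psi$ pure on $B$ whose $E$-extension is pure on $A$") is not what his theorem says, and as your own argument shows, applying BEK at the level of $B\subseteq A$ cannot produce $E$-invariance.

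The missing idea, which is how the paper proceeds, is to move one step up the Jones tower. Izumi's actual result (\cite[Theorem 3.5]{IzumiBEK}) is that $B\subseteq A$ has property BEK if and only if the basic construction $A\subseteq A_{1}$ does. One then takes a \emph{faithful} irreducible representation $H$ of $A_{1}$ that restricts irreducibly to $A$, and chooses a unit vector $\xi$ in the range of the Jones projection $e\in A_{1}$ (nonzero by faithfulness). The identity $eae=E(a)e$ then makes the vector state $\phi(a)=\langle\xi\,|\,a\xi\rangle$ automatically $E$-invariant:
$$\phi(E(a))=\langle\xi\,|\,E(a)e\xi\rangle=\langle\xi\,|\,eae\,\xi\rangle=\langle e\xi\,|\,a\,e\xi\rangle=\phi(a),$$
while purity on $A$ is immediate from irreducibility of $H|_{A}$. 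In other words, the Jones projection is exactly the device that reconciles "vector state of a common-irreducible representation" with "$E$-invariant," and it lives in $A_{1}$, not in $A$; without it your construction produces the reducible induced representation rather than an irreducible one.
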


\begin{proof}
By \cite[Theorem 3.5]{IzumiBEK}, $B\subseteq A$ has property BEK if and only if the basic construction inclusion $A\subseteq A_{1}$ bas BEK. Let $H$ be a faithful irreducible representation of $A_{1}$ which restricts to an irreducible representation of $A$. Let $e\in A_{1}$ denote the Jones projection, so that $eae=E(a)e$ for all $a\in A$. Since $H$ is faithful, $e(H)\ne 0$. Take any vector $\xi\in e(H)$ so that $e(\xi)=\xi$, and consider the state on $A$ 
$$\phi(a):=\langle \xi\ | a\xi\rangle.$$

\noindent This is a pure state on $A$ since $H$ is irreducible. Furthermore, we compute
\begin{align*}
\phi(E(a))&=\langle \xi\ |\ E(a)\xi\rangle\\
&=\langle \xi)\ |\ E(a)e(\xi)\rangle\\
&=\langle \xi\ |\ eae(\xi)\rangle\\
&=\langle e(\xi)\ |\ ae(\xi)\rangle\\
&=\phi(a).\\
\end{align*}
\end{proof}
\begin{thm} Suppose $B\subseteq A$ is a physical boundary subalgebra such that $B$ is (bounded spread) isomorphic to a fusion spin chain. Then there exists a symmetric pure state on $A$. 
\end{thm}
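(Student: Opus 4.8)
The plan is to reduce the entire statement to the lemma immediately preceding it: that lemma produces, from an inclusion of separable C*-algebras with a finite-index conditional expectation having property BEK, a pure state $\phi$ with $\phi\circ E=\phi$. By the Remark characterizing symmetric states, the condition $\phi\circ E=\phi$ is exactly symmetry of $\phi$, so such a $\phi$ is the required symmetric pure state. Thus everything comes down to verifying the three hypotheses of that lemma for $B\subseteq A$: separability, a finite-index conditional expectation, and property BEK.

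First I would assemble the soft inputs. Since $B$ is (bounded spread, hence in particular C*-algebra) isomorphic to a fusion spin chain $A(\mathcal{C},X)$ as in Example~\ref{ex:FusionSpin}, and such a chain is an inductive limit of finite-dimensional C*-algebras whose Bratteli diagram is eventually fully connected (because $X$ strongly tensor generates, so every simple object occurs in some $X^{\otimes n}$ and fusion then links every block to every other), the algebra $B$ is simple and separable. The physical boundary subalgebra structure provides a faithful local conditional expectation $E:A\to B$ of finite Watatani index, and weak relative Haag duality gives $Z_{A}(B)=\mathbbm{C}1$, so the inclusion is irreducible. Being a finite-index extension of a separable algebra, $A$ is separable; being an irreducible finite-index extension of the simple algebra $B$, $A$ is itself simple by \cite[Theorem 3.3]{MR1900138}.

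The crux is property BEK, and here I would invoke the recent work of Izumi. The inclusion $B\subseteq A$ is an irreducible, finite-index inclusion of unital simple separable C*-algebras; for such inclusions the results of \cite{IzumiBEK}, together with the earlier framework of \cite{MR985307, MR1234394}, guarantee a faithful irreducible representation of $A$ whose restriction to $B$ remains irreducible, which is condition (1) of property BEK. With BEK established, the preceding lemma applies directly and yields the pure state $\phi$ with $\phi\circ E=\phi$, which is the desired symmetric pure state.

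I expect the only genuine difficulty to lie in the BEK step. Simplicity of $B$ and $A$ and finiteness of the index are routine, but producing a single representation in which $B$ and $A$ are \emph{simultaneously} irreducible is the deep point, and it is precisely the content of the C*-algebraic machinery of \cite{IzumiBEK}. The remaining work---transporting simplicity and separability through the bounded spread isomorphism, and unwinding the definition of a symmetric state via the conditional expectation---is bookkeeping.
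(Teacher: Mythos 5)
Your reduction to the preceding lemma is exactly the paper's strategy, and your soft steps are all fine: $B$ is simple and separable (AF with eventually connected Bratteli diagram, by strong tensor generation), the physical boundary structure gives the finite-index faithful conditional expectation and irreducibility, $A$ is simple by \cite[Theorem 3.3]{MR1900138}, and $\phi\circ E=\phi$ is equivalent to symmetry of $\phi$. But the BEK step, which you yourself flag as the crux, is a genuine gap: you assert that \cite{IzumiBEK} guarantees property BEK for \emph{every} irreducible, finite-index inclusion of unital simple separable C*-algebras. No such general theorem is available, and it is not what that reference supplies. The results actually used from \cite{IzumiBEK} are the equivalence of the two formulations of property BEK (Theorem 3.1 there) and its stability under the basic construction (Theorem 3.5, which drives the lemma's proof); the earlier works \cite{MR985307, MR1234394} treat group actions on prime C*-algebras, not general fusion-categorical inclusions. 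Property BEK is a nontrivial property that must be verified for the inclusion at hand; if it followed automatically from your hypotheses, the fusion-spin-chain assumption in the theorem would be superfluous and the result would hold for any physical boundary subalgebra with $B$ simple and separable --- a statement the paper conspicuously does not make.

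The paper's proof instead verifies BEK condition (2), namely $\sup_{b\in B}\|xby\|/\|b\|=\|x\|\,\|y\|$, by hand, and this is precisely where the fusion-spin-chain hypothesis does real work. After transporting through the bounded spread isomorphism, one takes $B=A(\mathcal{D},X)$ with $A$ the associated multifusion chain of Example \ref{ex:MPO}. Strong tensor generation of $X$ plus indecomposability of the module category $\mathcal{M}$ give a $k$ such that the fusion graph of $X^{\otimes k}$ on $\mathcal{M}$ connects every pair of vertices, whence the multiplication map $A_{J}\otimes A_{I}\rightarrow A_{J\cup I}$ is injective for $d(I,J)>k$ and norms multiply, $\|ab\|=\|a\|\,\|b\|$. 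Given norm-one $x,y\in A$, one approximates them by $x',y'\in A_{I}$, explicitly builds $b=v\otimes 1\otimes v^{\dagger}\in B$ from an isometry $v:\mathbbm{1}\rightarrow X^{\otimes(m+k)}$, and shows $\|x'by'\|=1$ by a graphical computation exploiting that $v$ is an isometry, translation invariance, and the injectivity above; an $\epsilon$-estimate then yields condition (2). Without an argument of this kind (or some other verification of BEK), your proof does not go through; the rest of your write-up is indeed bookkeeping.
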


\begin{proof}
After passing through a bounded spread isomorphism, we can assume that $B$ is the quasi-local algebra of a fusion spin chain $B:=A(\mathcal{D}, X)$, where $\mathcal{D}$ is a unitary fusion category and $X\in \mathcal{D}$ is a strong tensor generator. Furthmore $A\in \text{DHR}(B)\cong \mathcal{Z}(\mathcal{C})$ is a Lagrangian Q-system, hence $B\subseteq A$ is a categorical inclusion in the sense of Example \ref{ex:MPO}. If we let $\mathcal{M}$ denote the associated indecomposable $\mathcal{D}$-module category, then the quasi-local $A$ is the multifusion spin chain associated to $\text{Mat}_{n}(\text{Hilb}_{f.d.})\cong \text{End}(\mathcal{M})$ with generating object given by the endofunctor $\widetilde{X}:=X\triangleright \cdot$ (here, $n=\text{rank}(\mathcal{M}))$. Alternatively, in subfactor language this can be described as the ``graph planar algebra embedding" of the planar algebra associated to $\mathcal{D}$ and $X$ into the graph planar algebra built from the fusion graph of $X$ on the module $\mathcal{M}$ \cite{MR4598730}.

Now, since $X$ is strongly tensor generating and $\mathcal{M}$ is indecomposable, there exists some $k$ such that the fusion graph for $X^{k}$ on $\mathcal{M}$ connects every vertex to every other vertex by a single edge. In particular, we have that for any two intervals $J$ and $I$ with $d(I,J)>k$, the $*$-algebra homomorphism map $A_{J}\otimes A_{I}\rightarrow A_{J\cup I}$ given by by $a\otimes b\mapsto ab$ is injective, and in particular $\|ab\|=\|a\| \|b\|$ by uniques of norms on finite dimensional C*-algebras.

Now we will show $B\subseteq A$ has property BEK. Let $\epsilon>0$. Let $x,y\in A$ with $\|x\|=\|y\|=1$. It suffices to show that there is a $b\in B$ with $\|b\|=1$ such that $\|xby\|\ge 1-2\epsilon-\epsilon^{2}$.

First, we can choose an interval $I$ such that there are elements $x^{\prime},y^{\prime}\in A_{I}$ with $\|x^{\prime}\|=\|y^{\prime}\|=1$ and $\|x-x^{\prime}\|, \|y-y^{\prime}\|<\epsilon$. Now we will show there is a $b\in B$ with $\|b\|=1$ such that $\|x^{\prime}by^{\prime}\|=\|x^{\prime}\| \|y^{\prime}\|=1$.
To see this, denote $m=|I|$ and let $k$ be as above. Choose $v\in \mathcal{D}(\mathbbm{1}, X^{\otimes m+k})$, so that $v^{\dagger}\circ v=1_{\mathbbm{1}}$. Set

$$b= v\otimes 1_{k+2m}\otimes v^{\dagger}\in \mathcal{D}(X^{3m+2k},X^{3m+2k})=B_{I^{+(m+k)}}.$$

\noindent In pictures, we represent $b$ as

$$\begin{tikzpicture}
\draw (0,0)--(4,4);
\draw (1,0)--(5,4);
\draw (2,0)--(6,4);
\draw (3,0)--(7,4);
\draw (4,0)--(8,4);
\draw (5,0)--(9,4);
\draw (3,4)--(3,3.25)--(0,3.25)--(0,4);
\draw (0,3.78)--(3,3.78);
\draw (6,0)--(6,0.75)--(9,0.75)--(9,0);
\draw (6,0.22)--(9,0.22);
\draw node at (7.5,0.5) {$v^{\dagger}$};
\draw node at (1.5,3.5) {$v$};
\draw node at (1.5,4.2) {$m+k$};
\draw node at (7.5,0) {$m+k$};
\draw node at (4.7,0.18) {$m$};
\draw node at (2.7,0.18) {$k$};
\draw node at (0.7,0.18) {$m$};
\draw [red] (4,-0.1)--(4,-0.3)--(5,-0.3)--(5,-0.1);
\draw [red] node at (4.5,-0.6) {$I$};
\draw [red] (4,4.1)--(4,4.3)--(5,4.3)--(5,4.1);
\draw [red] node at (4.5,4.6) {$I$};
\draw [dotted, red] (4,0)--(4,4);
\draw [dotted, red] (5,0)--(5,4);
\end{tikzpicture}$$

\noindent

Now, using the algebra model for categorical inclusions \ref{ex:fusionspinsubalg}, we have an arbitrary element of $A_{I}$ represented as $\mathcal{D}(X^{m}, X^{m}\otimes A)$, where we view $A\in \mathcal{D}$ by applying the forgetful functor. Then we see that $x^{\prime}by^{\prime}$ is given by

$$\begin{tikzpicture}
\draw (0,0)--(1.75,1.75);
\draw (1,0)--(2.75,1.75);
\draw (1.65,1.75)--(3.25,1.75)--(3.2,2.25)--(1.65,2.25)--(1.65,1.75);
\draw node at (2.5,2) {$x^{\prime}$};
\draw node at (6.5,2) {$y^{\prime}$};
\draw [violet, line width=3pt] (3.2,2)--(9,3.3);
\draw [violet, line width=3pt] (7.2,2)--(9,3.3);
\draw [violet, line width=3pt] (9,3.3)--(9.7,4);
\draw [violet] node at (9.7,4.4) {$A$};
\draw (2.25,2.25)--(4,4);
\draw (3.25,2.25)--(5,4);
\draw (2,0)--(6,4);
\draw (3,0)--(7,4);
\draw (4,0)--(5.75,1.75);
\draw (5,0)--(6.75,1.75);
\draw (5.65,1.75)--(7.25,1.75)--(7.2,2.25)--(5.65,2.25)--(5.65,1.75);
\draw (6.25,2.25)--(8,4);
\draw (7.25,2.25)--(9,4);
\draw (3,4)--(3,3.25)--(0,3.25)--(0,4);
\draw (0,3.78)--(3,3.78);
\draw (6,0)--(6,0.75)--(9,0.75)--(9,0);
\draw (6,0.22)--(9,0.22);
\draw node at (7.5,0.5) {$v^{\dagger}$};
\draw node at (1.5,3.5) {$v$};
\draw node at (1.5,4.2) {$m+k$};
\draw node at (7.5,0) {$m+k$};
\draw node at (4.7,0.18) {$m$};
\draw node at (2.7,0.18) {$k$};
\draw node at (0.7,0.18) {$m$};
\draw [red] (4,-0.1)--(4,-0.3)--(5,-0.3)--(5,-0.1);
\draw [red] node at (4.5,-0.6) {$I$};
\draw [red] (4,4.1)--(4,4.3)--(5,4.3)--(5,4.1);
\draw [red] node at (4.5,4.6) {$I$};
\draw [dotted, red] (4,0)--(4,4);
\draw [dotted, red] (5,0)--(5,4);
\end{tikzpicture}$$

\noindent
Then using that $v$ is an isometry, we compute $\|x^{\prime}by^{\prime}\|^{2}$ as the norm of the diagram operator

$$\begin{tikzpicture}
\draw (0,0)--(0,1.75);
\draw (1,0)--(1,1.75);
\draw (-0.25,1.75)--(1.25,1.75)--(1.25,2.25)--(-0.25,2.25)--(-0.25,1.75);
\draw (0,2.25)--(0,4);
\draw (1,2.25)--(1,4);
\draw (2,0)--(2,4);
\draw (3,0)--(3,4);
\draw (4,0)--(4,1.75);
\draw (5,0)--(5,1.75);
\draw (3.75,1.75)--(5.25,1.75)--(5.25,2.25)--(3.75,2.25)--(3.75,1.75);
\draw (4,2.25)--(4,4);
\draw (5,2.25)--(5,4);
\draw (6,0)--(6,0.75)--(9,0.75)--(9,0);
\draw (6,0.22)--(9,0.22);
\draw (6,4)--(6,3.25)--(9,3.25)--(9,4);
\draw (6,3.78)--(9,3.78);
\draw [red] (4,-0.1)--(4,-0.3)--(5,-0.3)--(5,-0.1);
\draw [red] node at (4.5,-0.6) {$I$};
\draw [red] (4,4.1)--(4,4.3)--(5,4.3)--(5,4.1);
\draw [red] node at (4.5,4.6) {$I$};
\draw node at (7.5, 0.5) {$v^{\dagger}$};
\draw node at (7.5, 3.5) {$v$};
\draw node at (0.5, 2) {$(x^{\prime})^{*}x^{\prime}$};
\draw node at (4.5, 2) {$(y^{\prime})^{*}y^{\prime}$};
\draw node at (0.5, 1) {$m$};
\draw node at (2.5, 1) {$k$};
\draw node at (4.5, 1) {$m$};
\draw [violet, line width=3pt] (1.25,2)--(6,2.7);
\draw [violet, line width=3pt] (5.25,2)--(6,2.7);
\draw [violet, line width=3pt] (6,2.7)--(9.7,4);
\draw [violet] node at (9.7,4.4) {$A$};
\draw [red] (0,-0.1)--(0,-0.3)--(1,-0.3)--(1,-0.1);
\draw [red] node at (0.5,-0.6) {$J$};
\draw [red] (0,4.1)--(0,4.3)--(1,4.3)--(1,4.1);
\draw [red] node at (0.5,4.6) {$J$};
\end{tikzpicture}$$

But since $v$ is an isometry, by translation symmetry of fusion categorical nets and the injectivity of the $*$-algebra homomorphism $A_{J}\otimes A_{I}\hookrightarrow A_{J\cup I}$, we obtain the norm of the above diagram is simply $\|x^{\prime}\| \|y^{\prime}\|=1$ as desired. Thus $\|x^{\prime}by^{\prime}\|=1$.

Now, we see that

\begin{align*}
\|xby-x^{\prime}by^{\prime}\|&\le \|x^{\prime}b(y-y^{\prime})\|+\|(x-x^{\prime})by^{\prime}\|+\|(x-x^{\prime})b(y-y^{\prime})\|\\
&\le 2\epsilon+\epsilon^{2}.
\end{align*}

\noindent
Thus $\|xby\|\ge \|x^{\prime}by^{\prime}\|-2\epsilon+\epsilon^{2}=1-2\epsilon-\epsilon^{2}$.
\end{proof}

\begin{cor}\label{cor:existence of gapless}
    For any fusion category $\mathcal{C}$ with no fiber functor (for example, any fusion category with at least one object of non-integer dimension) and for any a physical boundary inclusion $B\subseteq A$ with symmetry category $\mathcal{C}$, there exists a gapless symmetric pure state on $A$.
\end{cor}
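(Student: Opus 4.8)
The plan is to read this corollary as the formal combination of the two preceding results: the existence theorem guaranteeing a symmetric pure state on $A$ whenever $B$ is (bounded spread isomorphic to) a fusion spin chain, and the categorical Lieb--Schultz--Mattis statement Corollary \ref{cor:LSM}. First I would dispatch the parenthetical hypothesis: if $\mathcal{C}$ carries an object of non-integer dimension, then $\mathcal{C}$ cannot admit a fiber functor, since a fiber functor realizes $\mathcal{C}$ as the representation category of a finite-dimensional C*-Hopf algebra and every such category is integral. Thus ``at least one object of non-integer dimension'' is a special case of ``no fiber functor,'' and it suffices to treat the latter hypothesis.

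Given a physical boundary inclusion $B \subseteq A$ with symmetry category $\mathcal{C}$ and with $B$ a fusion spin chain, the preceding theorem produces a symmetric pure state $\phi$ on $A$; its construction exploits property BEK for the inclusion, which in turn rests on the norm-multiplicativity $\|ab\| = \|a\|\,\|b\|$ of the local embeddings $A_J \otimes A_I \hookrightarrow A_{J \cup I}$ for sufficiently separated intervals. I would then simply feed this $\phi$ into Corollary \ref{cor:LSM}: since $\phi$ is pure and symmetric and $\mathcal{C}$ has no fiber functor, that corollary declares $\phi$ gapless. The same state $\phi$ is then simultaneously symmetric, pure, and gapless, which is exactly the claim.

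One point I would make explicit, to ensure the hypotheses of Corollary \ref{cor:LSM} are genuinely met, is that a pure symmetric state is automatically connected (hence compact), so that the Q-system $H_\phi$ is actually defined. This follows from Theorem \ref{Endo-realization}: its $\mathbbm{1}$-graded piece identifies $\text{End}_B(L^2(B,\phi|_B))$ with a corner of $\text{End}_A(L^2(A,\phi))$ (using that $A$ is connected, so $\text{Hom}(\mathbbm{1},A) = \mathbbm{C}$), whence purity of $\phi$ on $A$ forces $\phi|_B$ to be pure.

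The corollary itself carries essentially no obstruction --- it is a two-line deduction once the inputs are in place. The genuine content, and the main obstacle, lies upstream in the two cited results: establishing property BEK for fusion spin chain inclusions (via Izumi's characterization and the planar-algebra/fusion-graph connectivity argument) for the existence theorem, and the module-category rank computation of Theorem \ref{thm:rankofmodule} together with the left-center Morita-invariance argument for Corollary \ref{cor:LSM}. For the corollary as stated, the only care needed is to confirm that the given inclusion falls within the scope of the existence theorem, i.e. that $B$ may be taken to be a fusion spin chain realizing $\mathcal{C}$, which always exists by the module-category construction of Example \ref{ex:MPO}.
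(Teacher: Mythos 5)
Your proposal is correct and coincides with the paper's own (implicit) argument: the corollary is stated without a separate proof precisely because it is the immediate combination of the preceding theorem (existence of a symmetric pure state on $A$ when $B$ is a fusion spin chain, established via property BEK) with Corollary \ref{cor:LSM}. Your supplementary checks --- that a non-integer dimension precludes a fiber functor, that purity of $\phi$ on $A$ forces $\phi|_{B}$ to be pure (hence connected, so $H_{\phi}$ is defined) via the $\mathbbm{1}$-graded corner of the realization in Theorem \ref{Endo-realization}, and that the hypothesis on $B$ should be read as in Theorem \ref{thm:gaplessE}, i.e.\ with $B$ (bounded spread isomorphic to) a fusion spin chain, which the construction of Example \ref{ex:MPO} always supplies --- are all sound and merely make explicit what the paper leaves tacit.
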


\subsection{Dualities of states}\label{sec:duaofstates}

In this section we consider generalizations of Kramers-Wannier (KW) type duality. Again, fix a physical boundary subalgebra $B\subseteq A$ with categorical symmetry $\mathcal{C}$.

\begin{defn}
Given a pair of symmetric states $\phi,\psi$ on $A$, a KW-type duality is a bounded spread isomorphism $\alpha:B\rightarrow B$ such that  $\phi|_{B}\circ \alpha=\psi|_{B}$.
\end{defn}

Note that we can define an associated \textit{duality channel} $\widetilde{\alpha}:A\rightarrow A$ by

$$\widetilde{\alpha}:=\alpha\circ E,$$

\noindent where $E:A\rightarrow B$ is the unique local conditional expectation. Note that the duality channels define a \textit{bijection} on the collection of symmetric states, and are themselves locality preserving. In terms of the duality channel the duality equivalence condition reads $\phi\circ \widetilde{\alpha}=\psi$ on $A$.

Associated to a bounded spread isomorphism $\alpha:B\rightarrow B$ is an autoequivalence $\text{DHR}(\alpha)\in \text{Aut}_{br}(\text{DHR}(B))$. To define this, given a DHR bimodule $X$ of $B$, set

$$X^{\alpha}:=X$$
\noindent as a vector space, with left and right actions defined by

$$a\triangleright \xi\triangleleft b:= \alpha^{-1}(a)\xi\alpha^{-1}(b)$$
$$\langle \xi\ |\ \eta\rangle_{\alpha}:=\alpha(\langle \xi\ |\ \eta\rangle).$$

\noindent It is easy to check that $X^{\alpha}$ is a DHR-bimodule. It is straightforward to construct unitary isomorphisms $X^{\alpha}\boxtimes_{B} Y^{\alpha}\cong (X\boxtimes Y)^{\alpha}$ satisfying the appropriate coherences and which are compatible with the braiding \cite[Theorem B]{jones2024dhr}. Then the assignment $X\mapsto X^{\alpha}$ assembles into a braided autoequivalence of $\text{DHR}(B)$, which we denote by $\text{DHR}(\alpha)$.

Let $H$ be a Hilbert space representation of $B$ and $\alpha:B\rightarrow B$ a bounded spread isomorphism. Let $H^{\alpha}:=H$ as a Hilbert space, but with $B$ action $b\cdot \xi:=\alpha^{-1}(b)\xi$.

\begin{thm} Let $H$ be a Hilbert space representation of $B$ and $\alpha:B\rightarrow B$ a bounded spread isomorphism. Let $\mathcal{H}, \mathcal{H}^{\alpha}$ be W*-algebra objects in $\text{DHR}(B)$ associated to $H$ and $H^{\alpha}$ respectively. Then $\alpha(\mathcal{H})\cong \mathcal{H}^{\alpha}$ as W*-algebra objects.
\end{thm}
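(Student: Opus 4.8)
The plan is to reduce the statement to transport of structure along the autoequivalence $\text{DHR}(\alpha)$, using the description of $\mathcal{H}$ as the internal endomorphism algebra of the distinguished generator $H$ in the cyclic module category $\mathcal{N}_{H}$ (Appendix~\ref{app:W*}), concretely realized through $X\mapsto \text{Hom}_{B}(H,X\boxtimes_{B}H)$ with its convolution product. By \cite[Theorem B]{jones2024dhr}, conjugation by $\alpha$ is a \emph{unitary braided} autoequivalence $\text{DHR}(\alpha)$ of $\text{DHR}(B)$, sending $X\mapsto X^{\alpha}$, and ``$\alpha(\mathcal{H})$'' denotes the W*-algebra object obtained by transporting $\mathcal{H}$ along $\text{DHR}(\alpha)$. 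Since transporting an algebra object along a unitary monoidal autoequivalence automatically preserves all of its structure maps, the entire problem is to produce a single natural identification of the underlying objects that is compatible with the distinguished generator.

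The technical heart is a module-category upgrade of \cite[Theorem B]{jones2024dhr}. First I would show that the twisting functor $T_{\alpha}\colon \text{Rep}(B)\to \text{Rep}(B)$, $K\mapsto K^{\alpha}$ (with $b\cdot_{\alpha}\xi=\alpha^{-1}(b)\xi$), is a unitary equivalence carrying coherent unitary isomorphisms $c_{X,K}\colon X^{\alpha}\boxtimes_{B}K^{\alpha}\xrightarrow{\ \cong\ }(X\boxtimes_{B}K)^{\alpha}$ that intertwine the module associativity constraints and the $\dagger$-structure. This is checked exactly as in the bimodule case, unwinding the twisted actions and the twisted inner product $\langle \xi\,|\,\eta\rangle_{\alpha}=\alpha(\langle \xi\,|\,\eta\rangle)$; that $\alpha$ is a bounded spread $*$-isomorphism is what guarantees localized projective bases are transported, so $K^{\alpha}$ remains in the relevant category and the comparison maps are well defined. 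Consequently $T_{\alpha}$ restricts to an equivalence $\mathcal{N}_{H}\to \mathcal{N}_{H^{\alpha}}$ sending the generator $H$ to $H^{\alpha}$, and it intertwines the $\text{DHR}(B)$-module structure on the source with that on the target along $\text{DHR}(\alpha)$.

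With the comparison maps $c_{X,K}$ available, an equivalence of cyclic module categories transports internal endomorphism algebras, and the $\text{DHR}(\alpha)$-twist records precisely the relabeling of the acting category. Concretely, I would define the identification at the level of Hom-spaces by sending $f\colon H\to X^{\alpha^{-1}}\boxtimes_{B}H$ to the twisted morphism $f^{\alpha}\colon H^{\alpha}\to (X^{\alpha^{-1}}\boxtimes_{B}H)^{\alpha}\xrightarrow{c}(X^{\alpha^{-1}})^{\alpha}\boxtimes_{B}H^{\alpha}=X\boxtimes_{B}H^{\alpha}$, using $(X^{\alpha^{-1}})^{\alpha}=X$. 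Because twisting acts as the identity on underlying linear maps, this is just postcomposition with the comparison unitary, and naturality in $X$ follows from naturality of $c$; this yields the required natural isomorphism between the underlying functors of $\alpha(\mathcal{H})$ and $\mathcal{H}^{\alpha}$ (with the roles of $\alpha$ and $\alpha^{-1}$ dictated by the opposite-category conventions of the appendix).

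The step I expect to be the main obstacle is verifying that this family is an isomorphism of \emph{W*-algebra objects} and not merely of the underlying functors: one must match the unit, the convolution multiplication, and the $*$-operation, all of which are built from the relative tensor product multiplications and the adjoints of the structure morphisms. This reduces to the coherence of the $c_{X,K}$ with $\boxtimes_{B}$ (an associativity/hexagon-type diagram) and with $\dagger$, where the twist of inner products by $\alpha$ enters. I expect to discharge most of this by invoking that $\text{DHR}(\alpha)$ is already unitary monoidal---so it preserves multiplications and comultiplications of algebra objects---the only genuinely new ingredient being the compatibility of $c$ with the distinguished generators $H\mapsto H^{\alpha}$, which is built into the construction in the second step. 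Assembling these identifications gives $\alpha(\mathcal{H})\cong \mathcal{H}^{\alpha}$ as W*-algebra objects.
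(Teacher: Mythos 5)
Your proposal is correct and follows essentially the same route as the paper: the paper's proof also hinges on the unitary $X\boxtimes_{B}H^{\alpha}\cong (X^{\alpha^{-1}}\boxtimes_{B}H)^{\alpha}$ defined as the identity on simple tensors (verified against the $\alpha$-twisted inner product and the twisted left $B$-action), from which it identifies the multiplicity spaces $\mathrm{Hom}_{B}(X\boxtimes_{B}H^{\alpha},H^{\alpha})\cong \mathrm{Hom}_{B}(X^{\alpha^{-1}}\boxtimes_{B}H,H)$ and then reindexes the decomposition $\bigoplus_{X}\mathrm{Hom}_{B}(X\boxtimes_{B}H,H)\otimes X$ by $X\mapsto X^{\alpha}$. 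Your transport-of-structure packaging via the twisting equivalence $T_{\alpha}$ of module categories is the same computation in categorical language, its only real difference being that you spell out the coherence checks for compatibility with the W*-algebra structure, which the paper dismisses as straightforward.
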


\begin{proof}
$$\alpha(\mathcal{H})\cong \bigoplus_{X\in \text{Irr}(\text{DHR})}\text{Hom}_{B}(X\boxtimes_{B}H, H)\otimes X^{\alpha}.$$ 

\noindent On the other hand 

$$\mathcal{H}^{\alpha}\cong \bigoplus_{X\in \text{Irr}(\text{DHR})} \text{Hom}_{B}(X\boxtimes_{B}H^{\alpha}, H^{\alpha})\otimes X .$$

But we claim $X\boxtimes_{B}H^{\alpha}\cong (X^{\alpha}\otimes H)^{\alpha}$.

First we claim that the obvious attempt at a map (using the identity on simple tensors)  $X\boxtimes_{B} H^{\alpha}\cong X^{\alpha^{-1}}\boxtimes_{B} H$ is a (well-defined) unitary of Hilbert spaces.

Indeed 
\begin{align*}
\langle x_{1}\boxtimes_{B}\xi_{1}\ |\ x_{2}\boxtimes_{B}\xi_{2}\rangle_{X\boxtimes_{B} H^{\alpha}}&=\langle \xi_{1}\ | \alpha^{-1}(\langle x_{1}\ |\ x_{2}\rangle_{X})\xi_{2}\rangle_{H}\\
&=\langle \xi_{1}\ | \langle x_{1}\ |\ x_{2}\rangle_{X^{\alpha^{-1}}}\xi_{2}\rangle_{H}\\
&= \langle x_{1}\boxtimes_{B}\xi_{1}\ |\ x_{2}\boxtimes_{B}\xi_{2}\rangle_{X^{\alpha^{-1}}\boxtimes_{B} H}
\end{align*}

Now it is easy to see the left $B$ action on $X^{\alpha^{-1}}\boxtimes_{B} H$ is composes with $\alpha$, and thus we have, as Hilbert space representations of $B$,

$$(X^{\alpha^{-1}}\boxtimes_{B} H)^{\alpha}\cong X\boxtimes_{B} H^{\alpha}.$$

\noindent Thus $\text{Hom}_{B}(X\boxtimes_{B}H^{\alpha}, H^{\alpha})\cong \text{Hom}_{B}((X^{\alpha^{-1}}\boxtimes_{B}H)^{\alpha}, H^{\alpha})=\text{Hom}_{B}((X^{\alpha^{-1}}\boxtimes_{B}H, H)$. Hence we have natural isomorphisms

\begin{align*}
\mathcal{H}^{\alpha}&\cong \bigoplus_{X\in \text{Irr}(\text{DHR})} \text{Hom}_{B}(X\boxtimes_{B}H^{\alpha}, H^{\alpha})\otimes X\\
&\cong \bigoplus_{X\in \text{Irr}(\text{DHR})} \text{Hom}_{B}(X^{\alpha^{-1}}\boxtimes_{B}H, H)\otimes X\\
&\cong \bigoplus_{X\in \text{Irr}(\text{DHR})}\text{Hom}_{B}(X\boxtimes_{B}H, H)\otimes X^{\alpha}\\
&\cong \alpha(\mathcal{H})
\end{align*}

\noindent It is straightforward to check all of these isomorphisms are compatible with the $*$-algebra structures on $\mathcal{H}^{\alpha}$ and $\alpha(\mathcal{H})$, respectively.

\end{proof}

\begin{remark} The above theorem tells us what kinds of states can be obtained from others using this result. In particular, if we know $\mathcal{H}_{\phi}$, we can work out $\mathcal{H}_{\phi\circ \alpha}$. In particular, $\mathcal{H}_{\phi\circ \alpha}\cong \alpha^{-1}(\mathcal{H})$.

We claim  $L^{2}(B,\phi\circ \alpha)^{\alpha}\cong L^{2}(B,\phi)$. Indeed, consider the map $U:L^{2}(B,\phi\circ \alpha)^{\alpha}\rightarrow L^{2}(B,\phi)$ given by 

$$U(b\Omega_{\phi\circ \alpha}):=\alpha(b)\Omega_{\alpha}.$$

For $b_{1},b_{2}$

\begin{align*}
\langle U(b_1\Omega_{\phi\circ \alpha})|U(b_2\Omega_{\phi\circ \alpha})\rangle&=\langle \alpha(b_1)\Omega_{\alpha}\ |\ \alpha(b_2)\Omega_{\alpha} \rangle\\
&=\phi(\alpha(b^{*}_{1}b_{2})\\
&=\langle b_1\Omega_{\phi\circ \alpha}|b_2\Omega_{\phi\circ \alpha}\rangle
\end{align*}

\noindent $U$ is clearly surjective so it extends to a unitary. Now

\begin{align*}
U(b\cdot b_{1}\Omega_{\phi\circ \alpha}))&=U(\alpha^{-1}(b)b_{1}\Omega_{\phi\circ \alpha})\\
&=b\cdot\alpha(b_{1})\Omega_{\phi}
\end{align*}

\noindent Thus $U$ is also an intertwiner.
\end{remark}

Recall there is a special class of bounded spread isomorphism $\alpha: B\rightarrow B$ called a \textit{finite depth circuit} (see \cite[Definition 2.13]{jones2024dhr}). Finite depth circuits are generally considered to provide a useful, Hamiltonian-free operational definition of equivalence of states that preserves long-range entanglement structure, and leads to a useful notion of ``phase". A natural question then is to understand the classification of $\mathcal{C}$-symmetric finite depth circuits. For a phyical boundary subalgebra $B\subseteq A$, a symmetric finite depth circuit is just a finite depth circuit internal to $B$. Note that symmetric finite depth circuits naturally extend to an ordinary finite depth circuit on $A$.

 We have the following corollary, which justifies that the Lagrangian algebra is an invariant of symmetric phases.

\begin{cor}\label{thm:finitedepth}
  If $\phi$ and $\psi$ are $\mathcal{C}$ symmetric states and $\alpha$ is a locally symmetric finite depth circuit such that $\phi\circ \alpha=\psi$, then we have an isomorphism of algebras $\mathcal{H}_{\phi}\cong \mathcal{H}_{\psi}$.
\end{cor}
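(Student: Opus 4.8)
The plan is to deduce the statement directly from the theorem and the remark immediately preceding it, together with one structural fact about finite depth circuits: that they induce the \emph{trivial} braided autoequivalence of $\text{DHR}(B)$.

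First I would rewrite the hypothesis $\phi\circ\alpha=\psi$ using the preceding remark. Since $\psi=\phi\circ\alpha$ is again $\mathcal{C}$-symmetric (as $\alpha$ preserves $B$ and hence commutes with averaging by $E$), the remark supplies an identification of W*-algebra objects $\mathcal{H}_\psi=\mathcal{H}_{\phi\circ\alpha}\cong \alpha^{-1}(\mathcal{H}_\phi)$, where $\alpha^{-1}(-)$ denotes the image under the braided autoequivalence $\text{DHR}(\alpha^{-1})$ constructed in \cite{jones2024dhr}. This already reduces the problem to showing that $\text{DHR}(\alpha^{-1})$ carries $\mathcal{H}_\phi$ to an isomorphic algebra object, which will follow once we know $\text{DHR}(\alpha^{-1})$ is naturally isomorphic to the identity functor.

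The key input is therefore that a finite depth circuit acts trivially on the DHR category, i.e. $\text{DHR}(\alpha)\cong \text{Id}_{\text{DHR}(B)}$ as a braided monoidal functor. I would establish this by decomposing $\alpha$ into its finitely many mutually commuting layers and observing that conjugation by a single unitary supported in a bounded interval fixes every DHR bimodule up to a canonical natural unitary: one simply chooses the localized projective basis supported in the complement of that interval, on which the conjugation acts as the identity, and the braided structure is manifestly preserved. Composing these canonical isomorphisms over the uniformly bounded support regions of each layer yields a monoidal, braided natural isomorphism $\text{DHR}(\alpha)\cong \text{Id}$; this is essentially the invariance-under-circuits content of \cite{jones2024dhr}. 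Since the inverse of a finite depth circuit is again a finite depth circuit, the same holds for $\alpha^{-1}$, so $\text{DHR}(\alpha^{-1})\cong \text{Id}$.

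Combining the two steps gives $\mathcal{H}_\psi\cong \alpha^{-1}(\mathcal{H}_\phi)=\text{DHR}(\alpha^{-1})(\mathcal{H}_\phi)\cong \mathcal{H}_\phi$, and because the preceding theorem produces its identifications at the level of W*-algebra objects rather than merely of underlying objects, these isomorphisms respect the multiplication and unit, yielding an isomorphism of algebras. The main obstacle is the triviality of $\text{DHR}(\alpha)$ for finite depth circuits; everything else is bookkeeping. The one subtlety I would check carefully is that the trivializing natural isomorphism is genuinely monoidal, so that it intertwines the algebra morphisms of $\mathcal{H}_\phi$ and transports the full W*-algebra object structure faithfully, rather than just matching the images as objects of $\text{DHR}(B)$.
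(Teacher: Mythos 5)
Your proposal is correct and takes essentially the same route as the paper, which likewise deduces $\mathcal{H}_{\psi}=\mathcal{H}_{\phi\circ\alpha}\cong \alpha^{-1}(\mathcal{H}_{\phi})$ from the preceding theorem and remark and then invokes \cite[Theorem A]{jones2024dhr} for the triviality of the braided autoequivalence $\text{DHR}(\alpha)$ induced by a finite depth circuit, exactly the input you identify. One small caveat: your heuristic for that triviality treats a circuit layer as a single boundedly supported (hence inner) unitary, which it is not --- a layer is an infinite product of disjointly supported local unitaries, so the trivializing natural unitary requires more care --- but since you, like the paper, ultimately defer to \cite{jones2024dhr} for this step, the argument is sound.
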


\begin{proof}
By \cite[Theorem A]{jones2024dhr} the induced braided autoequivalence of on $\text{DHR}(B)$ is trivial. A choice of trivialization in particular gives us an isomorphism $\mathcal{H}_{\phi}\cong \mathcal{H}_{\psi}$.
\end{proof}

Thus our algebra object is an invariant of symmetric states up to finite depth circuits. This leads to the following question,

\begin{quest} Does the the W*-algebra object $\mathcal{H}_{\phi}$ completely classify symmetric topological states up to symmetric finite-depth circuits?
\end{quest}

In the setting MPOs and matrix product states (MPS), a version question has an affirmative answer \cite{GarreRubio2023classifyingphases}.

As another corollary we see that if $\alpha:B\rightarrow B$ is a duality, then the action on $\text{DHR}(B)$ permutes the isomorphism classes of algebra objects in $\text{DHR}(B)$. We will use this to obtain a new criterion for concluding that a state is gapless by its transformation properties under a duality channel. This goes beyond the LSM-type theorem, and allows us to conclude gaplessness of symmetric states even if $\mathcal{C}$ is anomaly free.

\begin{defn}
  A state on a C*-algebra $B$ is called \textit{covariant} under an automorphism $\alpha$ if there is a unitary equivalence $L^{2}(B,\phi)^{\alpha}\cong L^{2}(B,\phi)$.
\end{defn}

\begin{prop} \cite[Theorem 2]{MR905025} For any simple, separable C*-algebra $B$ and any outer automorphism $\alpha$, there exists a pure $\alpha$ covariant state.
\end{prop}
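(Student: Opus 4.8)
The plan is to recast covariance as unitary implementability in the GNS representation and then to locate the required pure state inside the spectrum of $B$. A short computation of the kind carried out in the Remark above shows that $L^{2}(B,\phi)^{\alpha}$ is unitarily equivalent to the GNS representation of $\phi\circ\alpha^{-1}$, so that a state $\phi$ is covariant under $\alpha$ precisely when $\pi_{\phi}\cong\pi_{\phi}\circ\alpha$ as representations of $B$. For a \emph{pure} state this says exactly that $\alpha$ is unitarily implemented inside the irreducible representation $\pi_{\phi}$: there is a unitary $V$ on $L^{2}(B,\phi)$ with $V\,\pi_{\phi}(b)\,V^{*}=\pi_{\phi}(\alpha(b))$. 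Thus the entire problem is to produce a single irreducible representation of $B$ that is fixed, up to unitary equivalence, by the homeomorphism $\widehat{\alpha}$ of the spectrum $\widehat{B}$ induced by $\alpha$.

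First I would record the equivalent formulation in terms of crossed products, the natural home for covariant representations: such pure covariant states are exactly the cyclic vectors of those irreducible representations $\rho$ of $B\rtimes_{\alpha}\mathbbm{Z}$ whose restriction $\rho|_{B}$ remains irreducible, the implementing unitary being $V=\rho(u)$ for the canonical unitary $u\in M(B\rtimes_{\alpha}\mathbbm{Z})$. Indeed, given a pure covariant $\phi$, the freeness of the $\mathbbm{Z}$-action removes any cocycle obstruction, so $(\pi_{\phi},V)$ integrates to an irreducible $\rho$ with $\rho|_{B}=\pi_{\phi}$; conversely an irreducible $\rho$ with $\rho|_{B}$ irreducible yields, via any unit vector, a pure state on $B$ whose GNS representation is $\rho|_{B}$ and which is covariant. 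Note that for \emph{inner} $\alpha=\mathrm{Ad}(w)$ every pure state is trivially covariant (implemented by $\pi(w)$); the substance of the statement is therefore the outer case, where $\alpha$ is implemented in no representation globally and one must nonetheless exhibit a single representation in which it happens to become spatial.

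This is where outerness is indispensable, and the step I would invoke rather than reprove. Using Kishimoto-type norm estimates characteristic of properly outer actions of $\mathbbm{Z}$ on simple separable C*-algebras, \cite[Theorem 2]{MR905025} excises a pure state on $B$ whose GNS representation absorbs its own $\alpha$-twist, producing the desired $\widehat{\alpha}$-fixed point together with its implementing unitary. Since $B$ is assumed simple and separable, all the separability and simplicity hypotheses of that theorem are met, and the conclusion applies directly.

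The hard part, and the reason this is a genuine theorem rather than a formality, is maintaining \emph{purity}. An $\alpha$-invariant (hence covariant) state always exists cheaply: $\mathbbm{Z}$ is amenable and the state space $S(B)$ is weak*-compact and convex, so Markov--Kakutani furnishes an invariant state. But invariant states obtained by averaging are generically mixed, and enforcing invariance in this way destroys extremality. The real work is thus to keep the restriction irreducible while preserving the relation $\pi_{\phi}\cong\pi_{\phi}\circ\alpha$, which is exactly the obstacle that the proper outerness of $\alpha$ is used to overcome in \cite{MR905025}.
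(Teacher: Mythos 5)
Your proposal is correct and matches the paper's treatment: the paper offers no argument of its own for this proposition, citing it directly as \cite[Theorem 2]{MR905025}, and your proof likewise delegates the essential existence step (producing a pure state whose GNS representation is fixed by the $\alpha$-twist) to that same theorem. Your surrounding reformulations --- covariance of a pure $\phi$ as $\pi_{\phi}\cong\pi_{\phi}\circ\alpha$, the dictionary with irreducible representations of $B\rtimes_{\alpha}\mathbbm{Z}$ restricting irreducibly to $B$, and the observation that averaging arguments lose purity --- are accurate context but do not constitute an independent proof, so there is nothing to compare beyond the shared citation.
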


\begin{cor}\label{cor:anomalousLagrangian}
Let $B\subseteq A$ be a physical boundary subalgebra and $\alpha:B\rightarrow B$ a bounded spread isomorphism such that $\text{DHR}(\alpha)$ leaves no Lagrangian algebra in $\text{DHR}(B)$ fixed. Then any $\alpha$-covariant connected symmetric state is gapless. If $B$ is simple (for example, the quasi-local algebra of a fusion spin chain), then such a state always exists.
\end{cor}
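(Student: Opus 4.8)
The plan is to show that $\alpha$-covariance forces the braided autoequivalence $\text{DHR}(\alpha)$ to fix the W*-algebra object $\mathcal{H}_\phi$, and then to contradict the no-fixed-Lagrangian hypothesis via the left center. First I would set $\psi:=\phi|_B$, which is pure because $\phi$ is connected; hence $\phi$ is compact and $\mathcal{H}_\phi$ is represented by a connected Q-system $H_\phi$, unique up to isomorphism by \cite{MR4079745}. Writing $H:=L^2(B,\psi)$ for the generating object of $\mathcal{N}_\phi$, the theorem identifying $\alpha(\mathcal{H})$ with $\mathcal{H}^\alpha$ gives $\text{DHR}(\alpha)(\mathcal{H}_\phi)\cong\mathcal{H}^\alpha$, where $\mathcal{H}^\alpha$ is the W*-algebra object attached to the representation $H^\alpha=L^2(B,\psi)^\alpha$. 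Covariance says precisely $H^\alpha\cong H$ as representations of $B$, and since the associated W*-algebra object depends on $H$ only through its isomorphism class, I conclude $\text{DHR}(\alpha)(\mathcal{H}_\phi)\cong\mathcal{H}_\phi$. Passing to Q-systems (using uniqueness of the lift in the pure case together with monoidality of $\text{DHR}(\alpha)$), this upgrades to $\text{DHR}(\alpha)(H_\phi)\cong H_\phi$ as connected algebra objects.

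Next I would run the contradiction. Suppose $\phi$ is not gapless; since $H_\phi$ is connected, this means $H_\phi$ is Morita equivalent to a Lagrangian algebra $K\in\text{DHR}(B)$. By \cite[Lemma 4.46]{MR3308880} (exactly as used in the proof of Corollary \ref{cor:LSM}), the left center satisfies $Z_l(H_\phi)\cong K$ as algebra objects. Because $\text{DHR}(\alpha)$ is a braided monoidal autoequivalence, it preserves the left-center construction, so $\text{DHR}(\alpha)(Z_l(H_\phi))\cong Z_l(\text{DHR}(\alpha)(H_\phi))\cong Z_l(H_\phi)$. Combining these gives $\text{DHR}(\alpha)(K)\cong K$, i.e.\ $\text{DHR}(\alpha)$ fixes the Lagrangian algebra $K$, contradicting the hypothesis. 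Hence $H_\phi$ is not Morita equivalent to a Lagrangian algebra, so $\phi$ is gapless.

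For existence when $B$ is simple (and separable, as any fusion spin chain is), I would first check that $\alpha$ is outer. If $\alpha=\Ad(u)$ were inner, the map $X^\alpha\to X$, $\xi\mapsto u\xi u^*$, is a natural unitary isomorphism of correspondences compatible with the tensor product and braiding, so $\text{DHR}(\alpha)$ would be trivial and would fix every Lagrangian algebra, in particular $A$, which is Lagrangian by the definition of a physical boundary subalgebra. This contradicts the hypothesis, so $\alpha$ is outer. Then \cite[Theorem 2]{MR905025} supplies a pure $\alpha$-covariant state $\psi$ on $B$; setting $\phi:=\psi\circ E$ with $E:A\to B$ the canonical conditional expectation produces a symmetric state on $A$ that is connected (its restriction $\psi$ is pure) and whose restriction is $\alpha$-covariant. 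The first part then shows $\phi$ is gapless.

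The main obstacle is the bookkeeping in the first paragraph: confirming that the theorem relating $\alpha(\mathcal{H})$ and $\mathcal{H}^\alpha$ really transports covariance into $\text{DHR}(\alpha)(\mathcal{H}_\phi)\cong\mathcal{H}_\phi$ with the correct variance (the remark following that theorem, recording $\mathcal{H}_{\phi\circ\alpha}\cong\alpha^{-1}(\mathcal{H})$ and $L^2(B,\phi\circ\alpha)^\alpha\cong L^2(B,\phi)$, is the needed dictionary), and that fixing the W*-algebra object genuinely upgrades to fixing the connected Q-system $H_\phi$, so that the left-center argument applies at the level of algebra objects rather than mere objects. The remaining categorical input, that braided autoequivalences commute with the left center and that Morita equivalence to a commutative algebra recovers it as the left center, is standard once these identifications are in place.
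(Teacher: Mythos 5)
Your proposal is correct and follows essentially the same route as the paper: covariance transports to $\text{DHR}(\alpha)(\mathcal{H}_{\phi})\cong\mathcal{H}_{\phi}$ via the theorem identifying $\alpha(\mathcal{H})$ with $\mathcal{H}^{\alpha}$, the contradiction with the no-fixed-Lagrangian hypothesis comes from \cite[Lemma 4.46]{MR3308880}, and existence follows from outerness of $\alpha$ (inner automorphisms act trivially on $\text{DHR}(B)$, hence fix all Lagrangians) together with \cite[Theorem 2]{MR905025} and extension by the conditional expectation. The only cosmetic difference is that you phrase the contradiction through naturality of the left center under braided autoequivalences, whereas the paper argues that $\text{DHR}(\alpha)$ preserves the Morita class of $H_{\phi}$ and that Morita equivalent commutative algebras are isomorphic; both are interchangeable consequences of the same lemma.
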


\begin{proof}
If $\phi$ were topological then by definition $\mathcal{H}_{\phi}$ is a Lagrangian algebra, and covariance of $\phi$ under $\alpha$ implies $\alpha(\mathcal{H}_{\phi})\cong \mathcal{H}_{\phi}$. 

If $\phi$ is not topological, we need to show that $H_{\phi}$ is not Morita equivalent to a Lagrangian algebra. Since $\alpha(H_{\phi})\cong H_{\phi}$, $\alpha$ preserves the Morita class of $H_{\phi}$. In particular, if this class contained a Lagrangian algebra $L$, then $\alpha(L)$ would be another Lagrangian in the same Morita class as $H_{\phi}$, and in particular would be Morita equivalent to $L$. But by \cite[Lemma 4.46]{MR3308880}, Morita equivalent commutative algebras are isomorphic (as objects).

The existence of such a state for simple $B$ will follow if $\alpha$ is outer. But this follows, since inner automorphsims induce the trivial autoequivalence on $\text{DHR}(B)$, and such an autoequivalence will fix \textit{all} Lagrangian algebras.
\end{proof}

In light of the above result, we call bounded spread isomorphisms $\alpha:B\rightarrow B$ such that $\text{DHR}(\alpha)$ leaves no Lagrangian algebra fixed \textit{anomalous}. 

\begin{ex}{\textbf{Kramers-Wannier}}. In this example, we consider the usual Kramers-Wannier duality in our context. The symmetry category is $\text{Vec}(\mathbbm{Z}/2\mathbbm{Z})$, and is implemented by the onsite spin flip action $\mathbbm{Z}/2\mathbbm{Z}\curvearrowright\mathbbm{C}^{2}$, extended diagonally to a an action $\mathbbm{Z}/2\mathbbm{Z}\curvearrowright A=\otimes_{\mathbbm{Z}}M_{2}(\mathbbm{C})$. The symmetric subalgebra $B$ is naturally isomorphic to the fusion spin chain built from the category $\text{Rep}(\mathbbm{Z}/2\mathbbm{Z})$ with the regular representation (i.e. the spin flip action on $\mathbbm{C}^{2}$).

In this picture, we identify $\text{DHR}(B)\cong \mathcal{Z}(\text{Rep}(\mathbbm{Z}/2\mathbbm{Z}))$, which has 4 anyon types $\mathbbm{1}, e, m, f$, where $e$ and $m$ are bosons and $f$ is a fermion. We pick the convention that $\{1,e\}\cong \text{Rep}(\mathbbm{Z}/2\mathbbm{Z})$ with the standard lift of $\text{Rep}(\mathbbm{Z}/2\mathbbm{Z})$ to $\mathcal{Z}(\text{Rep}(\mathbbm{Z}/2\mathbbm{Z}))$ using the symmetric braiding as the half-braiding. There are precisely two Lagrangian algebras: $L_{1}:=1+e$ and $L_{2}:=1+m$. The quasi-local algebra is identified with $A=1\oplus e$ as a Lagrangian algebra object.

In this set up $L_{1}$ will correspond to the rank two $\text{Vec}(\mathbbm{Z}/2\mathbbm{Z})$ module category (i.e. the regular module category) and $L_{2}$ corresponds to the rank one module category given by the fiber functor. We have specific $\mathbbm{Z}/2\mathbbm{Z}$ symmetric states that realize these two Lagrangian algebras. In the standard qubit basis, we have the mixed state, symmetry breaking state 

$$\phi_{1}=\frac{1}{2} \left(|0\rangle^{\otimes \mathbbm{Z}}+ |1\rangle^{\otimes \mathbbm{Z}} \right) $$

\noindent which realizes the Lagrangian algebra $L_{1}$. Define the pure state

$$\phi_{2}:=\left(\frac{1}{\sqrt{2}}|0\rangle+\frac{1}{\sqrt{2}}|1\rangle \right)^{\otimes \mathbbm{Z}}.$$ 

\noindent This realizes the Lagrangian algebra $L_{2}$. Both of these states can be realized as ground states of the transverse field Ising model, with Hamiltonian

$$H_{J,h}=-J\sum_{i\in \mathbbm{Z}} \sigma^{z}_{i}\sigma^{z}_{i+1}-h\sum_{i\in \mathbbm{Z}} \sigma^{x}_{i}.$$

\noindent We see that $\alpha(H_{J,h})=H_{h,J}$. $\phi_{1}$ is the unique symmetric ground state for $J=0$ and describes the universality class for $h>J$, while $\phi_{2}$ is the unique symmetric ground state for $h=0$ and describes the universality class for $J \geq h$ \cite{MR727197, MR810491, MR821288}. It is clear that there is a phase transition at $J=h$, at which point we switch from a symmetry breaking phase to a unique gapped ground state.\\

As an algebra, $B$ is generated by the Pauli operators $\{\sigma^{x}_{i},\ \sigma^{z}_{i}\sigma^{z}_{i+1}\}_{i\in \mathbbm{Z}}$. Kramers-Wannier duality $\alpha:B\rightarrow B$ is defined on generators by

$$\alpha(\sigma^{x}_{i}):=\sigma^{z}_{i-1}\sigma^{z}_{i}$$
$$\alpha(\sigma^{z}_{i}\sigma^{z}_{i+1})=\sigma^{x}_{i}.$$

In \cite{JoLi24, jones2025quantumcellularautomatacategorical} it is shown that $\text{DHR}(\alpha)$ implements the unique symmetry that exchanges $e$ and $m$, $e\leftrightarrow m$. In particular $\text{DHR}(\alpha)(L_{1})=L_{2}$ and $\text{DHR}(\alpha)(L_{2})=L_{1}$.

Thus there are no fixed Lagrangian algebras, and by the previous corollary, this implies there exists a $\mathbbm{Z}_{2}$-symmetric connected state $\phi_{0}$ which is gapless. In particular, it has associated to it a connected algebra object $\mathcal{H}_{\phi_{0}}$, which is stable under $\text{DHR}(\alpha)$. This implies either $\mathcal{H}_{\phi_{0}}\cong \mathbbm{1}$ or $\mathcal{H}_{\phi_{0}}\cong \mathbbm{1}\oplus \epsilon $. If $\phi_{0}$ satisfies the ``bounded spread cone Haag duality" condition on $B$, then we must have $\mathcal{H}_{\phi_{0}}\cong \mathbbm{1}$.
\end{ex}

\begin{ex}{\textbf{Kramers-Wannier for arbitrary abelian groups}}

Here we recall the construction of \cite[Example 4.4]{JoLi24}. Let $G$ be an abelian group, and consider the regular representation $G\curvearrowright \ell^{2}(G)\cong \mathbbm{C}^{n}$, where $n=|G|$. Extending this to an on-site action of $G$ on the quasi-local algebra $A:=\otimes_{\mathbbm{Z}} M_{n}(\mathbbm{C})$, we obtain the physical boundary subalgebra of fixed points $B\subseteq A$.  But as noted in Example \ref{ex:SymsubAlg2}, $B$ is the fusion spin chain associated to the fusion category $\text{Rep}(G)\cong \text{Vec}(\widehat{G})$, with generating object $X=\mathbbm{C}[A]$ the regular representation. Viewed in $\text{Vec}(\widehat{G})$, $X=\bigoplus_{a\in \widehat{G}} a$.

Now, let $\mathcal{D}=\mathcal{TY}(\widehat{G},\beta)$ be a Tambara-Yamagami category associated to $\widehat{
G}$, where $\beta:\widehat{G}\times \widehat{G}\rightarrow \text{U}(1)$ is a non-degenerate, symmetric bicharacter\footnote{these categories also depend on a choice of sign, we will pick the $+$-version.} The simple objects are $\{\rho\}\cup \widehat{G}$, with fusion rules given by

$$a\otimes b\cong ab,$$

$$\rho\otimes \rho\cong \bigoplus_{a\in \widehat{G}} a,$$

$$a\otimes \rho\cong \rho\otimes a\cong \rho.$$

Now, from the fusion rules, we notice that $X=\rho\otimes \rho\in \text{Vec}(\widehat{G})$. In particular, the fusion spin chain for $\text{Vec}(\widehat{G})$ with $X$ is equal to the fusion spin chain for $\mathcal{TY}(\widehat{G}, \beta)$ with object $\rho \otimes \rho$. However, in the Tambara-Yamagami picture, we have a natural ``shift by $1$", which is a bounded spread isomorphism $\sigma: B\rightarrow B$. 

As explained in \cite{JoLi24}, $$\text{DHR}(\sigma)\in \text{DHR}(B)\cong \mathcal{Z}(\text{Vec}(\widehat{G}))\cong \text{Vec}(\widehat{G}\times \widehat{\widehat{G}}, q)=\text{Vec}(\widehat{G}\times G, q),$$ where $q(a,g)=a(g)$.
$\text{DHR}(\sigma)$ is the order 2 automorphism defined by sending $(a, g)\rightarrow (S_{\beta}(g),S_{\beta}(a))$. Here $S_{\beta}$ denotes the $\beta$-Fourier transform, i.e. for $a\in \widehat{G}$, $S_{\beta}(a)=\beta(a, \cdot)\in \widehat{\widehat{G}}\cong G$ and for $g\in G$, $S_{\beta}(g)$ is the unique $a\in \widehat{G}$ with $\beta(S_{\beta}(g),a)=a(g)$.

Under this identification, the Lagrangian algebras are the subgroups of $L\le \widehat{G}\times G$ with $q|_{L}=1$ and $|L|=|G|$. These are in bijective correspondences are given pairs $(H,b)$, where $H\le G$, and $b:H\times H\rightarrow \text{U}(1)$ is an \textit{antisymmetric} bi-character \cite{MR2677836}. Then 

$$L_{(H,b)}:=\{(\phi,h)\ :\ \phi(k):=b(h,k)\ \text{for all}\ k\in H\}.$$

Now suppose $G=\mathbbm{Z}/n\mathbbm{Z}$ with $n\ne m^{2}$ for any integer $m$. Then we claim $\alpha $ is anomalous. Indeed, in this case since $H^{2}(H,\text{U}(1))$ is trivial for any cyclic $H$ this implies there is a unique alternating bicharacter, and thus the Lagrangian algebras are directly parameterized by subgroups $L_{H}=L_{H,1}$, and now we see that $L_{H}:=\{(\phi,h)\ : \phi\in H^{\perp},\ h\in H\}$. Thus if we denote by $\pi_{1}$ and $\pi_{2}$ the projection onto the first and second coordinates respectively, then $|\pi_{1}(L_{H})|=|H^{\perp}|$ and $|\pi_{2}(L_{H})|=|H|$. But $|\pi_{1}(\text{DHR}(\sigma)(L_{H}))|=|\pi_{2}(L_{H})|=|H|$ and $|\pi_{2}(\text{DHR}(\sigma)(L_{H}))|=|\pi_{1}(L_{H})|=|H^{\perp}|$. Thus if $\text{DHR}(\sigma)(L_{H})\cong L_{H}$, we would have $|H|=|H^{\perp}|$. But $|G|=|H||H^{\perp}|$, which contradicts our hypothesis that $n$ is not a square. Thus the generalized Kramers-Wannier is anomalous.

\end{ex}

\begin{ex}{$\textbf{PSU}(2)_{k}$}
In this example, we discuss a natural generalization of Kramers-Wannier duality that arises in subfactor theory. We let $\mathcal{D}$ be the unitary fusion category $\textbf{PSU}(2)_{k}$, which is the trivially graded component of the category of (unitary) representations of $sl_{2}$ at level $k$. The simple objects of $SU(2)_{k}$ are indexed by $X_{l}$ where the $l$ are half-integer spins,  $l=0, \frac{1}{2},\dots \frac{k}{2}$, and $\text{PSU}(2)_{k}$ consists precisely of the integer spin objects. $\text{PSU}(2)_{k}$ is also realized as the even part of the $A_{k+1}$ subfactor, which can be expressed in terms of the semi-simplification of the TLJ planar algebra with loop parameter $\delta:=2\text{cos}(\frac{\pi}{k+2})$.

Consider the object $X:=X_{0}\oplus X_{1}$, and consider the fusion spin chain $B=A(\mathcal{D},X)$. We will define a bounded spread isomorphims $\alpha:B\rightarrow B$ which restricts to Kramers-Wannier described above when $k=2$.

We have an isomorphism $\text{End}_{\mathcal{D}}(X^{\otimes n})\cong \text{TL}_{2n}(\delta)$, where the latter is the 2n-strand Temperly-Lieb algebra with loop parameter $\delta=2\text{cos}(\frac{\pi}{k+1})$. This algebra is generated by $2n-1$ projection $\{e_{1},\dots e_{2n-1}\}$ called the \textit{Jones projections} which satisfy the relations

$$e_{i}e_{j}=e_{j}e_{i},\ |i-j|\ge 2,$$
$$e_{i}e_{i\pm1 }e_{i}=\frac{1}{\delta^{2}}e_{i}.$$

There is an additional ``semisimplification" relation depending on $k$ (amounting to setting the appropriate Jones-Wenzl idempotent equal to $0$).

This allows us to define a bounded spread isomorphism $\alpha:B\rightarrow B$, 

$$\alpha(e_{i}):=e_{i-1}.$$

\noindent One can check that for $k=2$, this recovers the Kramers-Wannier automorphism. If we define the Hamiltonian

$$H_{J,h}=-J\sum_{i} e_{2i}-h\sum_{j} e_{2j+1}$$
then $\alpha(H_{J,h})=H_{h,J}$, precisely as in the Kramers-Wannier case.

When $k = 4$, the fusion category of the even part of $SU(2)_4$ is the 
fusion ring of Rep$(S_{3})$ with three (isomorphism classes of) simple objects by $\mathbbm{1}, \pi, \epsilon$, where $\mathbbm{1}$ is trivial, $\sigma$ is the sign representation, and $\pi$ is the two dimenisonal representation of $S_{3}$. There are four Lagrangian algebras in $\mathcal{Z}(\text{Rep}(S_{3}))$ corresponding to the four subgroups $1, \mathbb{Z}_2, \mathbb{Z}_3, S_{3}$. We can pick any one of them to give a physical boundary inclusion $B\subseteq A$ (note that two will give $\text{Rep}(S_{3})$ symmetry on $A$, and two will give $\text{Vec}(S_{3})$ symmetry).

We claim that the $\alpha$ described above is anomalous in this case. By combining \cite[Section 4.2]{JoLi24} and \cite[Example 8.1]{MR3210925}, the induced homomorphism from $\mathbbm{Z}/2\mathbbm{Z}$ to \\ $\text{Aut}_{br}(\mathcal{Z}(\text{Rep}(S_{3})))\cong \text{Aut}_{br}(\mathcal{Z}(\text{Vec}(S_{3})))$ is actually an isomorphism. We know that the two Lagrangian algebras corresponding to $\text{Vec}(S_{3})$ are permuted. But each of these has a \textit{uniquely determined} ``dual" Lagrangian algebra corresponding to $\text{Rep}(S_{3})$ that arise from a fiber functor. Thus this $\mathbbm{Z}/2\mathbbm{Z}$ autoequivalence must swap these as well. In particular this Kramers-Wannier is anomalous.

\end{ex}

\begin{ex}{$\textbf{PSU}(n)_{k}?$}\ \ As in the above construction, one can take the zero graded fusion subcategory of $SU(n)$ at level $k$, which we will call $\textbf{PSU}(n)_{k}$. For example, $SU(3)$ at level $3$ has ten simple objects of which four have grade zero, corresponding to the trivial object and its rotation by the center ${\mathbb Z}_3$, as well as the one fixed object.
This fusion category is the representation ring {Rep}$(A_{4})$ of the alternating group $A_4$ \cite{MR1301620}. There are seven Lagrangian objects 
$\mathcal{Z}(\text{Rep}(A_{4}))$ corresponding to the four subgroups $1, \mathbb Z_2,  \mathbb Z_3, A_4$ of $A_4$, together with a multiplicity for the Klein subgroup $\mathbb Z_2 \times \mathbb Z_2$ and for $A_{4}$ itself coming from the two Schur multipliers in the first case, and a single additional Schur multiplier in the latter. Since there are seven Lagrangian algebras, the action of $\mathbbm{Z}/3\mathbbm{Z}$ must leave at least one fixed (since any orbit contains either 1 or 3 elements). Thus this version of Kramers-Wannier is \textit{not anomalous}.

This leads to the interesting question: when are the shifts associated to $\text{SU}(n)$ at level $k$  anomalous?

\end{ex}

\appendix
\section{W*-algebra objects and realizations}\label{app:W*} We briefly recall the idea of a W*-algebra object internal to a unitary tensor category $\mathcal{D}$, introduced in \cite{MR3687214,MR3948170}. Given a (left) $\mathcal{D}$-module category $\mathcal{M}$, and an object $m\in \mathcal{M}$, we can define the lax monoidal functor $\mathcal{A}:\mathcal{D}^{op}\rightarrow \text{Vec}$, by setting

$$\mathcal{A}(X)=\mathcal{M}(X\triangleright m, m),$$

\noindent and for $f:X\rightarrow Y$ and $\xi\in A(Y)$

$$\mathcal{A}(f)(\xi):= \xi\circ (f\triangleright 1_{m}).$$

\noindent The lax tensorator is given by $\mu_{X,Y}:\mathcal{A}(X)\otimes \mathcal{A}(Y)\rightarrow \mathcal{A}(X\otimes Y)$ defined via

$$\mu_{X,Y}(\xi\otimes \eta):=\xi\circ (1_{X}\triangleright(\eta)).$$

There is also an involutive structure on $\mathcal{A}$ in the sense of \cite{MR3687214}, which consists of a conjugate linear isomorphisms $j:A(X)\rightarrow A(\overline{X})$, defined via

$$j(\xi):=(\text{ev}_{\overline{X},X}\triangleright 1_{m})\circ (1_{\overline{X}}\triangleright \xi^{\dagger})\in A(\overline{X}).$$

\noindent This $j$ is anti-tensorial in the sense of \cite{MR3687214} and thus makes $\mathcal{A}$ into a $*$-algebra object. The fact that this comes from a W*-module category makes this into a \textit{W*-algebra} object.

Similarly, we could define a W*-coalgebra object $\mathcal{A}^{op}$ with is a lax monoidal functor $\mathcal{A}^{op}:\mathcal{D}\rightarrow \text{Vec}$,

$$\mathcal{A}^{op}(X):=\mathcal{M}(m, X\triangleright m).$$

\noindent The other structure maps follow in the obvious way.

\begin{remark} We note that if $\mathcal{D}$ is a fusion category and each vector space $\mathcal{A}(X)$ is finite dimensional, then by the (enriched) Yoneda lemma, there is a distinguished object $A\in \mathcal{D}$ such that $\mathcal{A}(X):=\mathcal{D}(X, A)$ (or in the coalgebra case, $\mathcal{D}(A,X)$). The W*-algebra structure makes $A$ into an algebra object in $\mathcal{D}$. There is one subtlety here. The algebra structure is only well-defined up to non-unitary isomorphsim. In particular, we could use the Yoneda lemma to lift the lax monoidal structure on $\mathcal{A}$ to two different algebra multiplications on $A$ which are not unitarily isomorphic in $\mathcal{D}$ (but will always be isomorphic as algebras). However, it is shown in \cite{MR4079745} that a lift can always be chosen to make $A$ into a (dual) Q-system.
\end{remark}

Now, given two  W* $\mathcal{D}$-module categories $\mathcal{M}$ and $\mathcal{N}$ with objects $m$ and $n$ respectively, consider the associated W*-algebra and co-algebra objects denoted $\mathcal{A}$ and $\mathcal{B}^{op}$, respectively. We define an (ordinary) associative $*$-algebra called the \textit{realization} of these objects, which from a categorical perspective could be defined as the (enriched) coend of the functor $\mathcal{A}\times \mathcal{B}^{op}:\mathcal{D}\times \mathcal{D}^{op}\rightarrow \text{Vec}$. We will be more concrete and give an explicit model. Let

$$|\mathcal{A}\otimes \mathcal{B}^{op}|:=\bigoplus_{X\in \text{Irr}(\mathcal{C})} \mathcal{A}(X)\otimes\mathcal{B}^{op}(X),$$

\noindent with multiplication defined on simple tensors $\xi_{1}\otimes \eta_{1}\in \mathcal{A}(X)\otimes\mathcal{B}^{op}(X)$ and $\xi_{2}\otimes \eta_{2}\in \mathcal{A}(Y)\otimes\mathcal{B}^{op}(Y)$ by 

$$(\xi_{1}\otimes \eta_{1})\cdot (\xi_{2}\otimes \eta_{2})=\sum_{Z,\alpha} \mathcal{A}(\alpha)(\xi_{1}\otimes \xi_{2})\otimes \mathcal{B}^{op}(\alpha^{*})(\eta_{1}\otimes \eta_{2}),$$

\noindent where for each $Z\in \text{Irr}(\mathcal{D})$, $\alpha$ ranges over a choice of basis for $\mathcal{D}(Z,X\otimes Y)$ such that $\beta^{\dagger}\circ \alpha=\delta_{\alpha,\beta} 1_{Z}$. The $*$ structure on the algebra is given on simple tensors simply by 

$$(\xi_{1}\otimes \eta_{1})^{*}:=j(\xi_{1})\otimes j(\eta_{1})\in \mathcal{A}(\overline{X})\otimes \mathcal{B}^{op}(\overline{X}).$$

For a graphical representation of the above construction, see \cite[Section 4]{MR3948170}. In general, the realization $|\mathcal{A}\otimes \mathcal{B}^{op}|$ is only an associative $*$-algebra, but there are various C* and W*-algebraic completions available (similar to the situation of ordinary tensor products of C* and W*-algebras, which is exactly recovered in the case $\mathcal{D}=\text{Hilb}_{f.d}$). Nevertheless, there are many situations in which the realization is already a C*/W*-algebra without completion, in particular when one of $\mathcal{A}$ or $\mathcal{B}^{op}$ is finitely supported (for example, see \cite[Corollary 5.14]{MR3948170}).

\bibliographystyle{alpha}
{\footnotesize{
\bibliography{bibliography}

\end{document}